\def\BibTeX{{\rm B\kern-.05em{\sc i\kern-.025em b}\kern-.08em
		T\kern-.1667em\lower.7ex\hbox{E}\kern-.125emX}}
	\def\endthebibliography{%
		\def\@noitemerr{\@latex@warning{Empty `thebibliography' environment}}%
		\endlist
	}
\setlist[itemize]{noitemsep, topsep=0pt}
\setlist[enumerate]{noitemsep, topsep=0pt}
\newtheorem{definition}{Definition} %replaced in icde form
\newtheorem{lemma}{Lemma}
\newtheorem{example}{Example}[section]
\newcommand{\ouralgorithmEnd}{GeoPrune}
\newcommand{\ouralgorithmCapital}{GeoPrune }
\newcommand{\ouralgorithm}{GeoPrune }
\begin{document}
	
	% ****************** TITLE ****************************************
	
	\title{GeoPrune: Efficiently Finding Shareable Vehicles Based on Geometric Properties}

	%	\numberofauthors{4} %  in this sample file, there are a *total*
	%
	%	\author
	%	{Yixin Xu, Jianzhong Qi, Renata Borovica-Gajic, Lars Kulik \\
	%		\affaddr{School of Computing and Information Systems, The University of Melbourne} \\
	%		\eaddfnt{yixinx3@student.unimelb.edu.au, \{jianzhong.qi,renata.borovica,lkulik\}@unimelb.edu.au}
	%	}
	
	%	\maketitle
	
	% \title{GeoPrune: Efficiently Finding Shareable \\ Vehicles Based on Geometric Properties}
	
	\author{
		%	Yixin Xu, Jianzhong Qi, Renata Borovica-Gajic, Lars Kulik \\
		%			\affaddr{School of Computing and Information Systems, The University of Melbourne} \\
		%			\eaddfnt{yixinx3@student.unimelb.edu.au, \{jianzhong.qi,renata.borovica,lkulik\}@unimelb.edu.au
		
		\IEEEauthorblockN{ 	Yixin Xu, Jianzhong Qi, Renata Borovica-Gajic, Lars Kulik}
		\IEEEauthorblockA{\textit{School of Computing and Information Systems} \\
			\textit{The University of Melbourne, Australia} \\
			yixinx3@student.unimelb.edu.au, \{jianzhong.qi,renata.borovica,lkulik\}@unimelb.edu.au}
		%	\and
		%	\IEEEauthorblockN{2\textsuperscript{nd} Given Name Surname}
		%	\IEEEauthorblockA{\textit{dept. name of organization (of Aff.)} \\
		%		\textit{name of organization (of Aff.)}\\
		%		City, Country \\
		%		email address}
		%	\and
		%	\IEEEauthorblockN{3\textsuperscript{rd} Given Name Surname}
		%	\IEEEauthorblockA{\textit{dept. name of organization (of Aff.)} \\
		%		\textit{name of organization (of Aff.)}\\
		%		City, Country \\
		%		email address}
		%	\and
		%	\IEEEauthorblockN{4\textsuperscript{th} Given Name Surname}
		%	\IEEEauthorblockA{\textit{dept. name of organization (of Aff.)} \\
		%		\textit{name of organization (of Aff.)}\\
		%		City, Country \\
		%		email address}
		%	\and
		%	\IEEEauthorblockN{5\textsuperscript{th} Given Name Surname}
		%	\IEEEauthorblockA{\textit{dept. name of organization (of Aff.)} \\
		%		\textit{name of organization (of Aff.)}\\
		%		City, Country \\
		%		email address}
		%	\and
		%	\IEEEauthorblockN{6\textsuperscript{th} Given Name Surname}
		%	\IEEEauthorblockA{\textit{dept. name of organization (of Aff.)} \\
		%		\textit{name of organization (of Aff.)}\\
		%		City, Country \\
		%		email address}
	}
	
	\maketitle
	
	\begin{abstract}
		On-demand ride-sharing is rapidly growing.
		Matching trip requests to vehicles efficiently is critical for the service quality of ride-sharing. To match trip requests with vehicles, a prune-and-select scheme is commonly used. The pruning stage identifies feasible vehicles that can satisfy the trip constraints (e.g., trip time). The selection stage selects the optimal one(s) from the feasible vehicles. The pruning stage is crucial to reduce the complexity of the selection stage and to achieve efficient matching. We propose an effective and efficient pruning algorithm called GeoPrune. GeoPrune represents the time constraints of trip requests using circles and ellipses, which can be computed and updated efficiently. Experiments on real-world datasets show that GeoPrune reduces the number of vehicle candidates in nearly all cases by an order of magnitude and the update cost by two to three orders of magnitude compared to the state-of-the-art.
	\end{abstract}
	
	%\begin{IEEEkeywords}
	%	ride-sharing,
	%\end{IEEEkeywords}
	
	\section{Introduction}
	%the background of ridesharing.
	% ride-sharing is an important transportation tool, environmental friendly et. al.  
	Ride-sharing is becoming a ubiquitous transportation means in our daily lives.
	In August 2018, there were 436,000 Uber rides and 122,000 Lyft rides  daily in New York~\cite{Market}.
	%{http://www.businessofapps.com/data/uber-statistics/}
	The growing number of rides calls for efficient algorithms to match numerous trip requests to optimal vehicles in real-time.
	
	Matching trip requests to vehicles is commonly referred to as the \emph{dynamic ride-sharing matching} problem~\cite{ma2013t},\cite{tong2018unified}.
	The goal is to assign each trip request to a vehicle such that a given optimization objective is achieved while satisfying the service constraints of trip requests (such as the waiting time and detour time).
	Various optimization goals have been proposed in the literature, such as
	minimizing the total travel distance of vehicles~\cite{ma2013t},\cite{thangaraj2017xhare},\cite{tong2018unified},\cite{huang2014large}, 
	maximizing the number of served requests~\cite{alonso2017demand}, and maximizing the system profit~\cite{asghari2016price},\cite{ zheng2018order}.
	
	To find matches for trip requests, existing algorithms typically employ two stages: \textbf{pruning} and \textbf{selection}.
	The pruning stage filters out infeasible vehicles that cannot meet the service constraints of trip requests, e.g., vehicles that are too far away.
	From the remaining vehicles, the selection stage selects the optimal vehicles and adds the new trip requests to their routes.
	The computation time of the selection stage largely depends on the effectiveness of the pruning stage (i.e., the number of remaining vehicles) as it usually requires exhaustive checks on all remaining vehicles with respect to the optimization goal. 
	The pruning algorithm is thus crucial for both the efficiency of the selection stage and the overall matching efficiency.
	
	%	{\color{red}(JQ: Explain why. Is it because the pruning stage takes most of the time and is the bottleneck? Or because the selection stage has a high time complexity over the number of remaining vehicles such that we need to control this number?)}
	%
	%as the computation in the selection step largely depends on the pruning effectiveness, i.e., number of remaining vehicle candidates.
	
	In this work, we study how to efficiently prune infeasible vehicles for fast matching. We focus on finding vehicles that satisfy the service constraints of trip requests rather than any particular optimization goal. Thus, our solution is generic and can be easily integrated into existing selection algorithms  for various optimization goals. We consider two service constraints of trip requests: 
	the \textit{latest pickup time} and the \textit{latest drop-off time}.
	% 		which specify the maximum waiting time and the extra travel time allowed by passengers, respectively.
	Vehicles violating these constraints are infeasible matches and are filtered out in the pruning stage.
	
	Pruning infeasible vehicles in real-time is challenging in many aspects.
	First, ride-sharing is a highly dynamic process. 
	New requests are arriving frequently and vehicles are moving continuously.
	A pruning algorithm has to not only effectively prune infeasible vehicles but also quickly update any information needed for future pruning. 
	Second, the pruning process needs to consider the constraints of not only the new trip request but also the trip requests that are currently being served by the vehicles. Checking all these constraints poses significant challenges to the algorithm efficiency.  
	
	%	One pruning method is to check the insertion using Euclidean distance. 
	%	As the Euclidean distance between two points is smaller than their road network distance, we can prune a vehicle if the Euclidean distance insertion is infeasible.
	%	As the Euclidean distance provides relatively tight lower bound of the shortest path distance in road networks, such pruning strategy is able to returns only a small number of candidates.
	%	Despite of the effectiveness, it is infeasible to insert all vehicles in a city-scale ride-sharing problem.
	%	Therefore, we still needs efficient and scalable pruning strategies to get small size of candidates.
	%	Then we can apply the Euclidean insertion pruning method as the second pruning step to filter out more vehicles.
	
	%what are their drawbacks
	Existing pruning algorithms maintain dynamic indices over the road network. A simple pruning strategy is to partition the road network space into grid cells and dynamically record the grid cell where each vehicle resides. To match a trip request, only the vehicles in the nearby grid cells of the trip request source location need to be examined~\cite{tong2018unified}. 
	Such a strategy finds vehicles that satisfy the latest pickup time constraint but does not consider the drop-off time. Thus, it may return many infeasible vehicles.
	%and lacks efficiency.  
	To obtain a higher efficiency, two approximate algorithms were proposed, namely, \emph{Tshare}~\cite{ma2013t} and \emph{Xhare}~\cite{thangaraj2017xhare}.
	Tshare precomputes pair-wise distances between grid cells and records the cells on the route of each vehicle.
	To match a trip request, Tshare~\cite{ma2013t} checks the cells within a distance threshold of the request source/destination and retrieves vehicles passing these cells in a certain time range. 
	Xhare on the other hand clusters the road network and records reachable clusters for vehicles given the time constraints. To match a trip request, Xhare returns all vehicles that can make a detour to the cluster where the request source/destination resides. 
	Both algorithms may fail to find all feasible vehicles due to approximation errors such as in distance estimation, and their indices may have high storage and update costs.
	%	(or clusters {\color{red}JQ: What clusters? This concept was not mentioned before.}) 
	% Besides, their indices may have high costs in storage and  updates.
	%	{\color{red}Another study~\cite{tong2018unified} uses Euclidean distance to prune infeasible vehicles. 
	%	However, such a method checks the schedule of every possible vehicle, which still needs another efficient algorithm to return a small number of candidates. (JQ: There is a logic gap between these two last sentences. I don't see how the ``however'' comes?)}
	
	%what problem do our algorithm solve and how do we solve it. What kind of performance did we achieve (use experiments to verify)
	
	\begin{figure}[t]
		\centering
		%trim left bottom right up
		\begin{overpic}[scale=0.8, trim=10 40 10 8,clip]{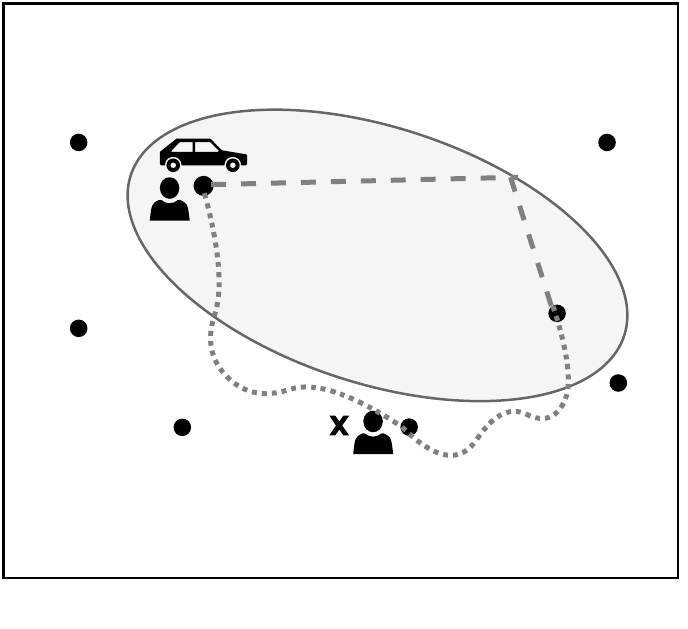}
			\put(28,55){$r_1$}
			\put(75,2){$r_2$}
			\put(45,80){$c$}
			\put(45,62){\large $s$}
			\put(92,12){\large $p'$}
			\put(125,39){\large $e$}
			\put(112, 78){\large $p$}
			%			\put(60,74){\small $d(s,p)$}
			%			\put(119,60){\small $d(p,e)$}
		\end{overpic}
		\caption{An example of our main idea.}
		%		The ellipse bounds the reachable area of $r_1$ and $(l_1 + l_2) $ equals to its maximum acceptable travel cost. $r_2$ is possible to share with $r_1$ because $r_2$ is within the reachable area of $r_1$.}
		%todo::change caption
		\label{fig:example}
	\end{figure}

	To overcome the limitations above, we propose new pruning strategies based on geometric properties of service constraints. Our strategies are built upon the following intuition. As shown in Figure~\ref{fig:example}, consider a vehicle $c$ that has been assigned to a request $r_1$ with a trip from point $s$ to point $e$. 
	Vehicle $c$ is now at $s$ and needs to reach $e$ within time $t_1$  (e.g., $t_1$ minutes), as constrained by $r_1$'s latest drop-off time. 
	To satisfy the time constraint $t_1$, $c$ can visit a point $p$ on its way to $e$ only if
	$dist(s, p)/v_{max} + dist(p, e)/v_{max} \leq t_1$, where \textit{dist} is the Euclidean distance between two points, and $v_{max}$ is the maximum vehicle speed.
	Obviously, the vehicle may need to travel longer than the Euclidean distance as its movement is constrained by the road network.
	Also, it may not be able to always travel at the maximum speed.
	Thus, even if point $p$ satisfies this inequality,  vehicle $c$ may still be unable to visit $p$.
	On the other hand, \emph{if point $p$ does not satisfy this inequality, $c$ must not visit $p$.}
	The above inequality defines an ellipse as shown in the figure. Any point outside this ellipse will violate the inequality and must not be visited by $c$.
	Therefore, if there is another ride-sharing request $r_2$ 
	from a different user at point $p'$, 
	we can safely prune $c$ from consideration if $p'$ 
	is not in the ellipse of $c$. This forms the basis of our pruning strategies.

	Following the idea above, we propose an efficient 
	\emph{geometry-based pruning} algorithm (\textbf{GeoPrune}) for ride-sharing.
	Our algorithm represents the service constraints of vehicles and requests using geometric objects.
	The regular and closed shape of these objects further enables us to index them using efficient data structures such as R-trees for fast search and updates.
	For every new trip request, we return the pruning results by applying several point/range queries on the R-trees. Among the candidates, the optimal one is computed and returned with a separate selection algorithm satisfying the optimization goal.
	%	For every new trip request, we return the pruning results by applying several point/range queries on these R-trees. 
	%	Among the candidates, the optimal one is computed and returned with a separate selection algorithm for the optimization goal.
	Once a trip request is assigned to a vehicle, we insert its source and destination to the vehicle route.
	Experimental results show that \ouralgorithm can prune most infeasible vehicles, which substantially reduces the computational costs of the selection stage and improves the overall matching efficiency.
	
	%main contribution
	Our main contributions are as follows:
	\begin{itemize}
		\item We propose novel pruning strategies to filter  infeasible vehicles for trip requests.
		Our pruning strategies are based on geometric properties,  which eliminate expensive precomputation and update costs,  making them suitable for large networks and highly dynamic scenarios.
		% 		{\color{red}Different from existing pruning strategies (I don't see how they are different.)}, ours is based on the position relationships between the trip request and the vehicles as well as the request constraints.
		
		\item Based on the pruning strategies, we propose an algorithm named \ouralgorithmEnd\ that can filter out most infeasible vehicles. It   significantly reduces the computational costs of the selection stage and the overall matching process.

		\item 
		%{\color{red}We propose two methods to select the optimal assignment after pruning. The first method enumerates all insertion positions for each remaining vehicle using state-of-the-art selection algorithms, while the second method only checks the possible insertion positions.}
		Our theoretical analysis shows that the running time of \ouralgorithm is $O(\sqrt{|S||C|} +  |S||C|\log(|S||C|))$, where $|S|$ is the maximum number of stops of the vehicle schedules and $|C|$ is the number of vehicles.
		\ouralgorithmCapital takes $O(|S|\log^2(|S||C|))$ time to update the states for a newly assigned trip request. During every time slot, \ouralgorithm takes $O(|S|\log(|S||C|) + |C|\log^2|C|)$ time to update for moving vehicles. 
		
		\item Experiments on real datasets show that \ouralgorithm improves the matching efficiency by reducing the number of potential vehicles in nearly all cases by an order of magnitude and reducing the update time by two to three orders of magnitude compared with the state-of-the-art.
	\end{itemize}
	
	\section{Preliminaries}
	We first present basic concepts and a problem formulation 
	for our targeted ride-sharing matching problem. 
	
	\subsection{Definitions}\label{subsec:definition}
	
	We consider ride-sharing on 
	a road network that is represented as a directed 
	% 	 {\color{red}(JQ: Why not directed which seems more general?)} 
	graph $G=\langle N, E  \rangle$, where $N$ is a set of vertices and $E$ is a set of edges. 
	Each edge $e(n_i,n_j)$ is associated with weight $d(n_i,n_j)$ that indicates the travel distance between vertices $n_i$ and $n_j$. 
	We denote an estimated shortest travel time between $n_i$ and $n_j$ as $t(n_i, n_j)$ (which may be calculated based on the shortest-path distance or fetched from a navigation service such as Google Maps).
	%	Similar to previous related works~\cite{chen2018price},\cite{huang2014large}, for simplicity, we assume that vehicles travel with the same {\color{red}t}eed on the road network, e.g., the average {\color{red}t}eed recorded on a real road network. This constant {\color{red}t}eed allows us to use travel time and travel distance interchangeable in the rest of the paper. We also assume that vehicles follow the shortest path ${\color{red}t}(v_i,v_j)$ when traveling between $v_i$ and $v_j$.
	
	\textbf{Trip request.}  A \textit{trip request} $r_i=\langle t, s,e,w,\epsilon,\eta \rangle$ consists of six elements: the issue time $t$, the source location $s$, the destination location $e$, the maximum waiting time $w$, the maximum detour ratio $\epsilon$, and the number of passengers $\eta$. A set of trip requests is represented as $R=\{r_1, r_2,...,r_n\}$.
	
	For a trip request $r_i$, the issue time $r_i.t$ records the time when the trip request is sent. 
	%	The maximum waiting time $r_i.w$ and the maximum detour ratio $r_i.\epsilon$  specify the \textit{waiting time constraint} and \textit{detour time constraint} of the trip request, respectively.
	The maximum waiting time $r_i.w$ limits the \emph{latest pickup time} of the request to be $r_i.lp=r_i.t + r_i.w$. The maximum detour ratio $r_i.\epsilon$ limits the extra detour time of the request. Together with the maximum waiting time, it constraints the \emph{latest drop-off time} of the request to be $r_i.ld = r_i.t + r_i.w + t(s,e) \times (1 + \epsilon)$. 
	Alternatively, a request can directly set the latest pickup and latest drop-off times. 
	The difference between the latest drop-off time and the 
	issue time, i.e., $r_i.ld-r_i.t$, is the maximum allowed travel time of $r_i$.
	\begin{example}
		Assume two trip requests $r_1=\langle$9:00\,am, $s_1$, \\$e_1$, 5\,min, 0.2, 1$\rangle$ and $r_2=\langle$9:07\,am, $s_2$, $e_2$, 5\,min, 0.2, 1$\rangle$ in\\ Figure~\ref{fig:vehicle_schedule}.
		The shortest travel times from $s_1$ to $e_1$ and 
		from $s_2$ to $e_2$, i.e., $t(s_1, e_1)$ and $t(s_2, e_2)$, are both 15\,min. Then, the time constraints of $r_1$ and $r_2$ are:
		$r_1.lp$=9:00\,am+5\,min=9:05\,am,
		$r_2.lp$=9:07\,am+5\,min=9:12\,am,
		$r_1.ld$=9:05\,am+15\,min$\times$1.2\\=9:23\,am,  $r_2.ld$=9:12\,am+15\,min$\times$1.2=9:30\,am.
	\end{example}

	\begin{table}[t]	
		\centering
		\caption{Frequently Used Symbols}
		\label{tab:notations}
		\begin{tabular}{|l|l|}
			\hline
			Notation & Description \\ \hline
			$G=\langle N, E  \rangle$ & \makecell[tl]{a road network with a set of\\ vertices $N$ and a set of edges $E$} \\ \hline
			$t(n_i,n_j)$ & \makecell[tl]{the  estimated shortest travel time between \\ vertices  $n_i$ and $n_j$} \\ \hline
			$R=\{r_i\}$ & a set of trip requests        \\ \hline
			$C=\{c_j\}$ & a set of vehicles       \\ \hline
			$r_i=\langle t, s,e,w,\epsilon,\eta \rangle$ & \makecell[tl]{a trip request issued at time $t$ with source $s$, \\destination $e$, maximum waiting time $w$,\\ maximum detour ratio $\epsilon$ and $\eta$ passengers}      \\\hline			
			$r_i.lp, r_i.ld$  & the latest pickup and  drop-off times of $r_i$ \\ \hline
			$r_i.wc, r_i.rd$ & the waiting circle and the detour ellipse of $r_i$ \\ \hline
			$c_j=\langle l, S, u, v \rangle$ & \makecell[tl]{a vehicle at $l$ with planned trip\\ schedule $S$, capacity $u$ and traveling speed $v$}  \\ \hline
			$(p^{k-1}, p^{k})$ & the segment between $p^{k-1}$ and $p^{k}$      \\ \hline
			$vd[k]$ & the detour ellipse of $(p^{k-1}, p^{k})$ \\ \hline
		\end{tabular}
	\end{table}

	\textbf{Vehicle.} A \textit{vehicle} $c_i$ is represented as $c_i=\langle l,S,u, v \rangle$, where $l$ denotes the location of the vehicle, $S$ represents the \emph{trip schedule} of the vehicle (which will be detailed in the next subsection), $u$ is the vehicle capacity, and $v$ is the travel speed.
	We use $C=\{c_1,c_2,...,c_n\}$ to denote a set of vehicles.
	
	We track the occupancy status of the vehicles, which is updated at every system time point.
	A vehicle is \emph{empty} if it has not been assigned to any trip requests. Otherwise, the vehicle is \textit{non-empty} and needs to follow their trip schedules to serve trip requests assigned to them.
	
	\subsection{Vehicle schedule}
	
	\textbf{Trip schedule.} The trip schedule of a vehicle $c_i$,   $c_i.S=\{p^0,p^{1},p^2,...,p^{m}\}$, is a sequence of source or destination locations (points on the road network) of trip requests, except for $p^0$ which records the current location of the vehicle, i.e., $p^0=c_i.l$.  We call a source or destination location on a trip schedule a \emph{stop}, and the path between every two adjacent stops $p^{k-1}$ and $p^{k}$ a \emph{segment}, denoted as $(p^{k-1}, p^{k})$.
	
	\begin{example}
		Figure~\ref{fig:vehicle_schedule} shows an example trip schedule for a vehicle.
		The current time is 9:00\,am and the vehicle is at $l$.
		There are two trip requests, $r_1$ and $r_2$, assigned to the vehicle and the vehicle schedule is $(l, r_1.s, r_2.s, r_1.e, r_2.e)$. 
		% , which corre\dfrac{{\color{red}t}}{den}onds to the values of $ddl[]$ individually. 
		% {\color{red}(JQ: These should be moved to be an example to explain what a trip request is and how are $lp$ and $ld$ computed.)}	
	\end{example}
	
	% 	We denote the position of the source and destination of $r_j$ at its matched vehicle schedule as $pos(s_j)$ and $pos(e_j)$ resp
	%	The path between two consecutive points $(p_i^k,\\ p_i^{k+1})$ in the schedule is denoted as a segment $segment(i,k)$,
	% where $i$ identifies which vehicle the segment belongs to and $k$ specifies which path the segment represents.
	
	\textbf{Trip schedule recorder.}
	We follow a previous study~\cite{tong2018unified} and record the \emph{estimated arrival time}, \emph{latest arrival time}, and \emph{slack time} of $c_i.S$ with three arrays $aar[]$, $ddl[]$, and $slk[]$:	
	\begin{itemize}
		\item 	Estimated arrival time 
		$arr[k]$ records the estimated arrival time to stop $p^k$ via the trip schedule.
		
		\item Latest arrival time $ddl[k]$ records the latest acceptable arrival time at the stop $p^{k}$. 
		If $p^k$ is the pickup point of a request $r_j$, $ddl[k]$ is the latest pickup time of $r_j$, i.e., $ddl[k] = r_j.lp$.
		If $p^k$ is the drop-off point of $r_j$,  $ddl[k]$ is the latest drop-off time of $r_j$, i.e., $ddl[k] = r_j.ld$.
		
		\item Slack time $slk[k]$ records the maximum extra travel time allowed between $(p^{k-1},p^k)$ to satisfy the latest arrival time of $p^k$ and all stops scheduled after $p^k$.
		For stop $p^i$, it only allows $ddl[i]-arr[i]$ detour time to ensure its latest arrival time $ddl[i]$.
		A detour between $p^{k-1}$ and $p^{k}$ will  not only affect the arrival time of $p^{k}$ but also that of all stops scheduled after $p^k$.
		Therefore, a detour between $p^{k-1}$ and $p^{k}$ must guarantee the latest arrival time of $p^k$ and all stops scheduled after $p^k$, i.e., $slk[k] = min\{ddl[i]-arr[i]\}, i = k,...,m$.
		$slk[k]$ can be calculated by referring to $slk[k+1]$, i.e., $slk[k]=min\{(ddl[k]-arr[k]), slk[k+1]\}$.
		The \textit{maximum allowed travel time} between ($p^{k-1}$, $p^k$) is thus $arr[k]-arr[k-1]+slk[k]$.
		
		% 	Note that the slack time of a former segment must be no larger than the slack time of a latter segment, i.e., $slk[k] \leq slk[k+1]$. {\color{red}(JQ: I don't get what slack time means exactly, and why $slk[k] \leq slk[k+1]$?)}
		% 	The maximum acceptable detour cost of $p_{i}^k$ can be deduced by $maxDetour(p_i^k)=aar[k] -ddl[k]$.
	\end{itemize}
	
	\begin{example}
		The arrays of the trip schedule in Figure~\ref{fig:vehicle_schedule} are shown in Table~\ref{tab:vehicleScheduleArray}.
		The estimated arrival time of the stops is computed based on the arrival time of previous stops and the travel time between stops, e.g., $arr[1]$=9:00\,am+3\,min= 9:03\,min, $arr[2]$
		=9:03\,am+5\,min= 9:08\,min.
		The latest arrival time of the stops is determined by the corresponding trip requests.
		The latest arrival time of $p^1$ is the latest pickup time of $r_1$, i.e., $ddl[1] = r_1.lp$=9:05\,am.
		The latest arrival time of $p^3$ is the latest drop-off time of $r_1$, i.e., $ddl[3] = r_1.ld$=9:23\,am.
		$ddl[k]-arr[k]$ represents the allowed detour time before visiting $p^k$ to ensure $ddl[k]$, e.g.,  $p^1$ allows  9:05\,am-9:03\,am=2\,mins detour before it and $p^2$ allows 9:12\,am-9:08\,am
		=4\,mins detour before it.
		$slk[k]$ records the minimum allowed detour time of $p^k$ and all stops after $p^k$, e.g., 
		a detour before $p^3$ will not only affect the arrival time of $p^3$ but also that of $p^4$. Thus, $slk[3]=min\{$5\,min,4\,min$\}$=4\,min.
		% 	$slk[2]=min{maxDetour[2],slk[2]}$=4\,min.
		% 	For example, $slk[1]=min(maxDetour[1]$, $maxDetour[2]$, $maxDetour[3]$,  $maxDetour[4])$=2\,min, $slk[3]=min(maxDetour[3], maxDetour[4])$=4\,min.
		
		%	{\color{red}(JQ: Need to give a step by step example to calculate the values for at least row of the table.)}
		%	\textit{$maxDetour[k]$} illustrates the maximum detour time for the point $p^k$, i.e., $maxDetour[k]=ddl[k]-arr[k]$. {\color{red}(JQ: Why introduce a new array?)}
		%	Note that even though stop $p^3$ can accept 5\,min more travel time, the slack time of $p^3$ can only be 4\,min (max detour time of $p^4$) because 5\,min will violate the latest drop-off time  constraint of $r_2$.	
	\end{example}

	\textbf{Valid trip schedule.} 
	% 	Randomly putting the source and destination locations of multiple trip requests into a trip schedule may not satisfy the constrains of the trip requests and hence  may form an invalid trip schedule. 
	To form a \emph{valid trip schedule}, the following trip constraints need to be satisfied:
	\begin{itemize}
		\item \textit{Point order constraint:} 
		Trip schedule $c_i.S$ must visit the pickup location $r_j.s$ before the drop-off location $r_j.e$, 
		for any trip request $r_j$ assigned to vehicle $c_i$. 
		% 	, i.e., $\forall r_j \in R_i, pos(s_j) < pos(e_j)$.
		
		\item  \textit{Time constraint.} 
		Trip schedule $c_i.S$ must meet the service constraints for every trip request $r_j$ assigned to vehicle $c_i$, i.e., 
		$r_j$ needs to be picked up before $r_j.lp$ and be dropped off before $r_j.ld$.
		% 	i.e., $\forall r_j \in R_i, arr[pos(s_j)] < latestPick(r_j), arr[pos(e_j)] < latestDrop(r_j)$.
		
		\item  \textit{Capacity constraint.}
		At any time when $c_i$ is traveling with trip schedule $c_i.S$, 
		the number of passengers in the vehicle must be within the vehicle capacity.
		% 	i.e., $\forall p_k \in S_i, |\{r_j$ in $c_i\}| \leq u_i$.
	\end{itemize}
	
	\begin{figure}[t]
		\centering
		\includegraphics[width=0.95\linewidth]{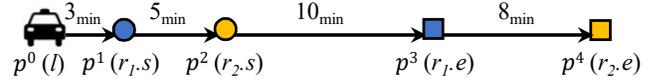}
		\caption{A vehicle schedule example at 9:00\,am.}
		%   	$r_1=\langle $ 9:00\,am, $s_1$, $e_1$, 5\,min, 0.2, 1 $\rangle$ ,
		%   	$r_2=\langle $ 9:07\,am, $s_2$, $e_2$, 5\,min, 0.2, 1 $\rangle$, $t(s_1,e_1)=$15\,min, $t(s_2,e_2)=$15\,min, 
		%the shortest path time of each segment is shown above them.}
		\label{fig:vehicle_schedule}
	\end{figure}
	
	\begin{table}[t]
		\centering
		\caption{The arrays of the trip schedule in  Figure~\ref{fig:vehicle_schedule}.}
		\label{tab:vehicleScheduleArray}
		\begin{tabular}{|l|l|l|l|l|}
			\hline
			$p^k$ & $arr[k]$ & $ddl[k]$ &  $ddl[k]-arr[k]$ & $slk[k]$ \\ \hline
			$p^1$ & 9:03\,am   & 9:05\,am   & 2\,min & 2\,min      \\ \hline
			$p^2$ & 9:08\,am   & 9:12\,am   & 4\,min & 4\,min      \\ \hline
			$p^3$ & 9:18\,am   & 9:23\,am   & 5\,min & 4\,min       \\ \hline
			$p^4$ & 9:26\,am   & 9:30\,am   &   4\,min &  4\,min      \\ \hline
		\end{tabular}
	\end{table}
	
	\textbf{Feasible match.} Given a new trip request $r_n$, matching vehicle $c_i$ with $r_n$ (i.e., assigning $c_i$ to serve $r_n)$ is \emph{feasible} if adding $r_n$ into the trip schedule of $c_i$  yields a valid trip schedule. Vehicle $c_i$ is then a \emph{feasible vehicle} for $r_n$.
	
	Similar to the previous studies~\cite{tong2018unified},\cite{cheng2017utility},\cite{Xu2019Efficient}, we assume 
	the source and destination of the new trip request are inserted or appended to the current schedule of the matching vehicle.
	
	\subsection{Matching objective}
	
	\textbf{Problem definition.} 
	Given a road network $G$, a set of vehicles $C$, a set of trip requests $R$, and an optimization objective $O$, we aim to match every request $r \in R$ with a feasible vehicle $c \in C$ to optimize $O$.
	
	We examine a popular optimization objective, \emph{minimizing the total increased travel distance (time)}~\cite{ma2013t},\cite{thangaraj2017xhare},\cite{tong2018unified},\cite{huang2014large},\cite{alonso2017demand}.
	Suppose that the total travel time of the trip schedules of all vehicles is $T$ before assigning trip requests in $R$ and the total travel time  becomes $T'$ after assigning vehicles in $C$ to serve requests in $R$, our optimization goal $O$ is to minimize $T' - T$. 
	
	Minimizing the total increased distance for all trip requests is  NP-complete~\cite{ma2013t} and the future trip requests are unknown in advance.
	A common solution is to greedily assign each trip request to an optimal vehicle~\cite{ma2013t},\cite{tong2018unified},\cite{Xu2019Efficient},\cite{chen2018price}. 
	The trip requests are processed ordering by their issue time. 
	For every trip request, we assign it with a feasible vehicle such that the increased distance of the vehicle trip schedule is minimized.

	\subsection{Pruning and selection}
	We take a two-stage approach to solve the problem: 
	\begin{enumerate}
		\item  \textbf{Pruning.} Given a new request $r_n$, the pruning stage filters out infeasible vehicles and returns a set of vehicle candidates $C'$ for $r_n$.

		\item \textbf{Selection.} Given a set of vehicle candidates $C'$, the selection stage finds the optimal feasible vehicle in $C'$.
	\end{enumerate}

	In what follows, we focus on developing algorithms for the pruning stage.
	Observing that empty vehicles can be pruned by applying existing spatial network algorithms~\cite{liu2006dynamic},\cite{abeywickrama2016k},\cite{shen2017v},
	\cite{papadias2003query},
	we distinguish non-empty vehicles and empty vehicles and focus on pruning non-empty vehicles.
	% 	Next, we detail how our geometric-based pruning strategies
	% 	are applied to both pruning and selection stages. 
	
	%    {\color{red}JQ: We use the XX selection algorithm~\cite{} for our selection stage. In what follows, we focus on the prune stage and propose our strategies and algorithms for the stage.}
	
	\section{Geometric-based pruning}
	\label{sec:pruningStrategies}
	
	%what we do in this section
	%what we do in each subsection and their relationship
	%what is the relationship to the next section
	
	%In this section, we detail our pruning strategies.
	%We first provide an overview on the principle of our pruning strategies.
	%Then, we detail the pruning strategies based on constraints of existing trip requests and constraints of a new trip request, retectively.
	%In section~\ref{sec:algorithm}, these geometric properties  are summarized as a set of conditions and an efficient algorithm is proposed to quickly filter out infeasible vehicles.
	%    The problem of finding possible vehicle candidates is thus reduced to find vehicles that satisfying these geometric conditions.
	
	%principle: to know whether the vehicle is feasible, just test whether it is possible to insert the source and destination
	%\subsection{Principle of Our Pruning Strategies}
	
	When a new trip request arrives, we find an optimal feasible vehicle and add the source and destination of the new trip request to the vehicle trip schedule. As discussed before, the trip schedule of the vehicle must satisfy the service constraints of all trip requests assigned to it including the new trip request. This is the basis of our pruning strategies. 
	
	There are two possibilities to add a stop to a trip schedule, either inserting it into a segment of the schedule or appending it to the end.
	For example, to add a new stop $p$ to the trip schedule in Figure~\ref{fig:vehicle_schedule}, we can either insert it to a segment to form a new schedule such as $(p^0, p, p^1, p^2, p^3, p^4)$ (we cannot insert before $p^0$ because  $p^0$ is the current location of the vehicle) or append it to the end where the schedule becomes $(p^0, p^1, p^2, p^3, p^4, p)$.
	%	 {\color{red}JQ: Why can't we insert before the starting point of the schedule? If this is our assumption, need to state it clear and explain why this makes sense.}
	%To simplify the discussion, we call the first case the \textit{insert-case} and the second case the \textit{append-case}. 
	We say that a stop is \textit{added} to a schedule if it is either inserted or appended to the schedule and the adding is \textit{valid} if it still generates a valid trip schedule.
	% 	A segment $(p^{k}, p^{k+1})$ is called a \textit{valid insertion segment} for a new stop $p$ if $p^{k}$ is a valid insertion stop for $p$.
	
	We first detail the criteria to determine whether adding the source or the destination of a new trip request is valid based on constraints of existing trip requests in the trip schedule and constraints of the new trip request, respectively. Then, we summarize these criteria into three pruning rules.

	%As we will illustrate in Section~\ref{subsec:threeConditions}, there are three cases to add the source $r_n.s$ and the destination $r_n.e$ of a new trip request $r_n$ to the trip schedule of a vehicle, insert-insert, insert-append, append-append.
	%Each case needs to satisfy a certain conditions.
	%For example, matching a vehicle with the insert-insert case requires that there is a valid insertion segment for $r_n.s$ and a valid insertion segment for $r_n.e$.
	%To check if the vehicle is a possible match, the principle is to check whether it satisfies any of the three conditions.
	%If none of the conditions is satisfied, the vehicle is an infeasible match and we can safely prune it out.

	%in which cases the insert will violates the service constraint of vehicles?
	\subsection{Constraints based on existing trip requests}\label{subsec:vehicleServiceConstraint}
	
	% 	To ensure that the trip schedule is still valid after adding a new stop, we need to examine 
	% 	the slack time of the trip schedule. 
%	If inserting the new stop into a segment does not incur a longer detour time than the slack time of the segment, the insertion satisfies the constraints of exiting trip requests.
	
	Given a segment $(p^{k-1}$, $p^{k}$), 
	if we insert a new stop $p$ to it, the path from $p^{k-1}$ to $p^{k}$ becomes ($p^{k-1} , p , p^{k}$). The travel time from $p^{k-1}$ to $p^{k}$ becomes $t(p^{k-1}, p) + t(p, p^{k})$, which must be no larger than the maximum allowed travel time of the segment $arr[k]-arr[k-1]+slk[k]$ to satisfy the constraints of exiting trip requests.
	
	The maximum allowed travel time limits the area that the vehicle can reach between $p^{k-1}$ and $p^{k}$. 
	% 	We call such an area the  \textit{reachable area} of the segment.
	Our key observation is that such a reachable area can be bounded using an \textbf{ellipse} $vd[k]$, and we call it the \textit{detour ellipse}  of the segment.
	
	\begin{definition}
		The \emph{detour ellipse $vd[k]$ of a segment} $(p^{k-1}, p^{k} )$ is an ellipse with $p^{k-1}$ and $p^{k}$ as its two focal points, and the major axis length $vd[k].major$ equals to the maximum allowed travel time multiplied by the vehicle speed $v$, i.e., 
		%			\begin{align}
		%			\color{red}vd[k].major &= ((p^{k-1}, p^{k}) + slk[k]) \cdot v \\
		%			&= (arr[k] - arr[k-1] + slk[k]) \cdot v
		%			\end{align}
		$vd[k].major = (arr[k] - arr[k-1] + slk[k]) \cdot v$
	\end{definition}

	\begin{figure}[t]
		\centering
		\includegraphics[width=0.9\linewidth]{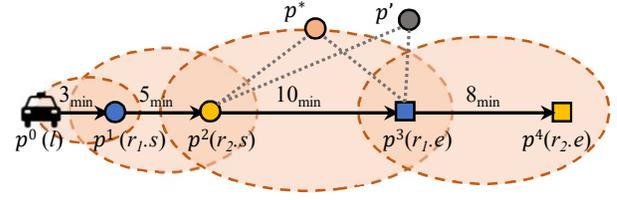}
		\caption{Detour ellipses of the trip schedule in Figure~\ref{fig:vehicle_schedule}.}
		% 		The vehicle can only detour to areas within the corresponding ellipse when traveling between two points.}
		\label{fig:vehicle_ellipse}
	\end{figure}

	\begin{lemma}\label{lemma:segmentDetourConstraint}
		For a segment $(p^{k-1}, p^{k})$, if a point $p$ is outside of $vd[k]$, $t(p^{k-1}, p) + t(p, p^k)$ will exceed the maximum allowed travel time. The segment is therefore invalid for inserting $p$.
	\end{lemma}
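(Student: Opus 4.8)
The plan is to combine the defining metric property of an ellipse with a lower bound that converts Euclidean distance into network travel time. First I would recall the standard characterization: for an ellipse with foci $F_1, F_2$ and major axis length $2a$, a point $x$ lies strictly outside the ellipse if and only if $\mathrm{dist}(F_1, x) + \mathrm{dist}(x, F_2) > 2a$, where $\mathrm{dist}$ denotes Euclidean distance. Applying this to $vd[k]$, whose foci are $p^{k-1}$ and $p^{k}$ and whose major axis length is $(arr[k]-arr[k-1]+slk[k])\cdot v$, the hypothesis that $p$ lies outside $vd[k]$ immediately yields
\[
\mathrm{dist}(p^{k-1}, p) + \mathrm{dist}(p, p^{k}) > (arr[k]-arr[k-1]+slk[k])\cdot v .
\]

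The crux of the argument is then to relate this Euclidean quantity to the estimated network travel time $t$. The key observation I would make is that a vehicle travelling between two points $a$ and $b$ must traverse a road-network path that is no shorter than the straight-line segment joining them, and it can do so at a speed no greater than $v$; hence $t(a,b) \ge \mathrm{dist}(a,b)/v$. Summing this bound over the two sub-segments $p^{k-1}$-to-$p$ and $p$-to-$p^{k}$ gives
\[
t(p^{k-1}, p) + t(p, p^{k}) \ge \frac{\mathrm{dist}(p^{k-1}, p) + \mathrm{dist}(p, p^{k})}{v} .
\]

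Finally I would chain the two inequalities: dividing the ellipse inequality by the positive constant $v$ and substituting into the right-hand side above shows that it strictly exceeds $arr[k]-arr[k-1]+slk[k]$, so that $t(p^{k-1}, p) + t(p, p^{k}) > arr[k]-arr[k-1]+slk[k]$. That is, the detour travel time through $p$ exceeds the maximum allowed travel time of the segment established in the trip-schedule-recorder discussion. Since exceeding this bound violates the latest-arrival-time constraints of $p^{k}$ and of all stops scheduled after it, inserting $p$ into the segment cannot yield a valid trip schedule, which is exactly the conclusion of the lemma.

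The step I expect to be the main obstacle is justifying the bound $t(a,b)\ge \mathrm{dist}(a,b)/v$, as it is the only non-geometric ingredient and it rests on two modelling assumptions: that $\mathrm{dist}$ is the Euclidean distance, so the straight line is the shortest conceivable route and the network path is at least this long; and that $v$ is the maximum attainable vehicle speed, so no trip can be completed in less than Euclidean distance divided by $v$. Everything else reduces to the routine metric characterization of the exterior of an ellipse and a division by $v>0$.
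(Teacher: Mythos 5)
Your proof is correct and follows essentially the same route as the paper's: both use the metric characterization of the ellipse exterior to get the Euclidean lower bound, then pass to road-network distance (which dominates Euclidean distance) and divide by the speed to conclude the travel time exceeds the allowed budget. Your explicit isolation of the bound $t(a,b)\ge \mathrm{dist}(a,b)/v$ just makes visible the modelling assumption (maximum speed $v$) that the paper states separately in the discussion following the lemma.
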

	
	\vspace{-0.6em}
	\begin{proof}
		According to the definition of ellipses, if a point $p$ is outside of the ellipse, the sum of the Euclidean distances $|p^{k-1}p|+|pp^k|$ must be greater than $vd[k].major$.
% 		According to the triangle inequality, the road network distance between two points is no shorter than their Euclidean distance.
		Since any road network distance between two points is no smaller than their Euclidean distance (triangle inequality), the sum of road network distances $d(p^{k-1},p)+ d(p,p^k)$ is at least as large as $|p^{k-1}p|+|pp^k|$ and thus must also be greater than $vd[k].major$.
		The time required to travel such a distance thus exceeds the maximum allowed travel time and violates the latest arrival time of existing stops.
	\end{proof}
	\vspace{-0.6em}

	\begin{example}
		Figure~\ref{fig:vehicle_ellipse} shows the detour ellipses of the trip schedule illustrated in Figure~\ref{fig:vehicle_schedule}.
		For segment $(p^2, p^3)$, the slack time  is 4\,min and thus the maximum allowed travel time from $p^2$ to $p^3$ is 10\,min + 4\,min = 14\,min.
		We make an ellipse with $p^2$ and $p^3$ as the two focal points and the major axis length being 14\,min  multiplied by the vehicle speed, i.e., $|p^2p^*|+|p^*p^3|=$ (14\,min  $\cdot v$) for a point $p^*$ on the ellipse.
		If a point $p'$ is outside this ellipse, then the Euclidean distance $|p^2p'|+|p'p^3| >$ (14\,min  $\cdot v$).
		Thus, the road network distance $d(p^2,p')+d(p',p^3)$ will also be greater than (14\,min  $\cdot v$) and the corresponding travel time with speed $v$ will exceed 14\,min, which violates the service constraint of exiting trip requests. Therefore, it is invalid to insert $p'$ between $(p^2, p^3)$.
%		thus the corresponding travel time will exceed 14\,min.
%		Traveling on a road network through $p'$ would take at least this time, which violates the service constraint of exiting trip requests. Thus, it is invalid to insert $p'$ between $(p^2, p^3)$.
		% 	{\color{red}Need to label all this symbols in the figure.}
	\end{example}
	
	Most existing ride-sharing matching algorithms do not consider the variations in traffic assuming a constant travel speed~\cite{pan2019ridesharing}.
	%In real-world applications, the vehicle speed $v$ may vary over time.
	%In our algorithm, 
	To remove this constraint, we replace the constant speed assumption with a \textit{maximum} speed when computing the ellipses for the vehicle and the requests. This enables our approach to avoid false negatives if vehicles travel at varying speeds: all feasible vehicles are kept (Lemma~\ref{lemma:segmentDetourConstraint} holds) as long as they do not exceed the maximum speed. 
%	i.e., $vd[k].major= (arr[k]-arr[k-1]+slk[k]) \cdot v_{max}$.
%	By doing this, the time required to travel a distance greater than $vd[k].major$ with any speed $v \leq v_{max}$ must be larger than $vd[k].major/v_{max}$=$arr[k]-arr[k-1]+slk[k]$ and Lemma~\ref{lemma:segmentDetourConstraint} holds as long as vehicles do not travel beyond the maximum speed.
% 	allowing vehicles to travel at varying speed 
% 	without generating false negatives.
% 	i.e., all feasible vehicles are kept as long as they do not travel beyond the maximum speed. 
	We later show that using the maximum speed still preserves pruning efficiency.
	%(since the speed limited is a reasonably small value in reality). 
	
	The ellipse construction is independent of the vehicle trajectories.  It only relies on the maximum allowed travel time and the endpoints of its segment.
	We record the ellipses of vehicles and update them only if the corresponding segments change.
	Specifically, when a trip request is newly assigned to a vehicle, we update the trip schedule of the vehicle and recalculate its ellipses.
	Meanwhile, when the vehicles are moving, some stops (and their segments) may already be visited and become obsolete.
	We remove the ellipses of such obsolete segments.

	%in which cases the insertion will violates the constraint of requests
	\subsection{Constraints based on the new request}\label{subsec:requestServiceConstraint}
	Next, we analyze the service constraints of new requests.

	\textbf{Latest pickup time constraint.}\label{subsubsec:requestWaitingConstraint}
	%what is the waiting time constraint, how we represent it, why we represent it use circles
	Recall that $r_n.w$ denotes the maximum waiting time to ensure the latest pickup time of the new request $r_n$. We define a \emph{waiting circle} with $r_n.w$.
	
	\begin{definition}
		The \emph{waiting circle of $r_n$}, denoted by $r_n.wc$, is a circle 
		centered at $r_n.s$ and with $r_n.w\cdot v$ as its radius.	
	\end{definition}

	\begin{lemma}\label{lemma:waitingConstraint}
		If it is valid to add $r_n.s$ after a stop $p^{k}$ in $c_i.S$, then $p^{k}$ and all stops before $p^{k}$ must be covered by  $r_n.wc$.
		% 		If the starting stop of the trip schedule of a vehicle is outside of 
		% 		$wc(r_n)$, then no segment or stop on the trip schedule is valid for adding $r_n.s$.
	\end{lemma}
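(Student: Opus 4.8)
The plan is to convert the validity hypothesis into a travel-time budget and then translate that budget into a Euclidean-distance bound, reusing the road-network-versus-Euclidean comparison from the proof of Lemma~\ref{lemma:segmentDetourConstraint}. First I would fix the stop $p^k$ after which $r_n.s$ is inserted, so that the candidate schedule is $(p^0,\dots,p^k,r_n.s,p^{k+1},\dots)$. Inserting after $p^k$ leaves the arrival times at $p^0,\dots,p^k$ unchanged, so the arrival time at $r_n.s$ is exactly $arr[k]+t(p^k,r_n.s)$, and validity forces this to respect the latest pickup time, i.e. $arr[k]+t(p^k,r_n.s)\le r_n.lp$. Since the vehicle currently sits at $p^0$ at the processing instant, we have $arr[0]=r_n.t$ and $r_n.lp=r_n.t+r_n.w$; subtracting $arr[0]$ yields that the scheduled travel time of the whole path $p^0\to p^1\to\cdots\to p^k\to r_n.s$ is at most $r_n.w$.

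Next I would observe that, because travel times along the schedule are nonnegative, the travel time of the suffix from any stop $p^j$ with $j\le k$ to $r_n.s$ is bounded by the travel time of the whole path, hence is also at most $r_n.w$. Multiplying by the (maximum) speed $v$ bounds the road-network length of that suffix by $r_n.w\cdot v$. The decisive step, taken verbatim from Lemma~\ref{lemma:segmentDetourConstraint}, is then the chain $|p^j\,r_n.s|\le \big(\text{sum of Euclidean lengths of the suffix polyline}\big)\le \big(\text{sum of road-network lengths of the suffix}\big)\le r_n.w\cdot v$, where the first inequality is the Euclidean triangle inequality applied to the polyline $p^j\to\cdots\to p^k\to r_n.s$ and the second is that each road-network distance dominates the corresponding Euclidean distance. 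This places every such $p^j$ within Euclidean distance $r_n.w\cdot v$ of $r_n.s$, i.e. inside $r_n.wc$, which is exactly the claim for $p^k$ and all stops before it.

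The steps I expect to require the most care are bookkeeping rather than analytic. One is justifying the identification $arr[0]=r_n.t$ and hence that the reachability budget from the vehicle's current position to $r_n.s$ is precisely $r_n.w$; this follows from $r_n.lp=r_n.t+r_n.w$ together with the convention that arrival times are measured from the current time slot. The other, and the genuine obstacle, is that the distance bound for the \emph{earlier} stops cannot be obtained from the single segment $(p^k,r_n.s)$ alone: proving the claim for, say, $p^0$ requires collapsing the entire intermediate polyline via the Euclidean triangle inequality before comparing with the road-network length. Handling this collapse uniformly for every $p^j$ with $j\le k$ is what makes the suffix-travel-time argument the natural vehicle for the proof, after which the remaining comparison between Euclidean and road-network distances is a direct reuse of Lemma~\ref{lemma:segmentDetourConstraint}.
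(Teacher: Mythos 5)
Your proof is correct and follows essentially the same route as the paper's: both reduce the latest-pickup constraint to the bound $|p^j\,r_n.s|\le r_n.w\cdot v$ for every stop $p^j$ visited before $r_n.s$, using the fact that road-network distance dominates Euclidean distance. The only cosmetic difference is that the paper argues by contrapositive via the direct point-to-point distance from $p^j$ to $r_n.s$, whereas you bound the scheduled suffix polyline and collapse it with the triangle inequality; the paper's version is slightly leaner but the content is identical.
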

	
	\vspace{-0.6em}
	\begin{proof}
		The waiting circle bounds the area that a vehicle can reach before picking up $r_n$ to ensure the latest pickup time of $r_n$.
		Points outside of $r_n.wc$ have Euclidean distances (and hence network distances) to $r_n.s$ greater than $r_n.w \cdot v$.
		If a vehicle is scheduled to visit a point outside of $r_n.wc$ before reaching $r_n.s$, the vehicle cannot pickup $r_n$ before the latest pickup time $r_n.lp$ and thus violates the constraint of $r_n$.
	\end{proof}
	\vspace{-0.6em}

	\begin{example}
		Figure~\ref{fig:circle_example} shows the waiting circle of a new request $r_n$. The source  
		$r_n.s$ can only be added after the stops in the waiting circle $r_n.wc$, i.e., $p^0$ or $p^1$.
		If the vehicle visits $p^2$ (outside of the waiting circle) before $r_n.s$, it will not pick up $r_n$ before the latest pickup time of $r_n$.
		Thus, it is invalid to add $r_n.s$ after $p^2$ or any   stops afterwards, i.e.,  $p^3$ and $p^4$.
		% For example, the schedule ($l, p^1, p^2, r_n.s, p^3, p^4$) needs to visit an outside point $p^2$ before $r_n.s$ and visiting $p^2$ cannot .
		% If the vehicle has left the waiting circle, e.g., reaching $p^2$, $p^3$ or $p^4$, 
		% it can not pick up the new request in time.
	\end{example}
	
	\begin{figure}
		\centering
		\includegraphics[width=0.9\linewidth]{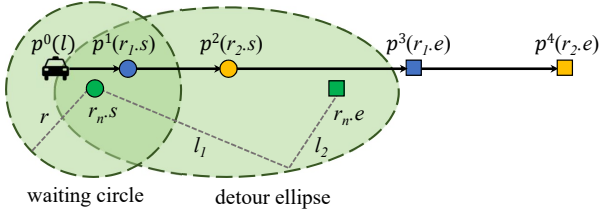}
		\caption{Waiting circle and detour ellipse of $r_n$, $r = r_n.w \cdot v$, $l_1+l_2 = (r_n.ld-r_n.t) \cdot v$.}
		\label{fig:circle_example}
	\end{figure}
	
	\textbf{Latest drop-off time constraint.}\label{subsubsec:requestDetourConstraint}
	%what is the definition of reachable area
	Similar to the detour ellipses of segments, 
	we define a detour ellipse for a new request $r_n$ to ensure the latest drop-off time of $r_n$.
	
	\begin{definition}
		The \emph{detour ellipse $r_n.rd$ of a new trip request} $r_n$ is an ellipse with  $r_n.s$ and $r_n.e$ as the two focal points. The major axis length is the maximum allowed travel time of $r_n$ multiplied by the speed $v$, i.e., $r_n.rd.major = (r_n.ld-r_n.t) \cdot v$.
	\end{definition}
	
	The detour ellipse of $r_n$ restricts the area that a vehicle can visit while serving $r_n$.
	%    , which is also called the reachable area of $r_n$.
	%     We call this area bounded by $r_n.rd$ the \emph{reachable area} of $r_n$.
	%    {\color{red}(JQ: I'm not sure if we not both detour ellipse and reachable area at the same time, since they refer to the same area anyway.)}
	After picking up $r_n$ (reaching $r_n.s$), if the vehicle is scheduled to visit any stop outside of the detour ellipse of $r_n$, it will not be able to reach the destination $r_n.e$ before the latest drop-off time $r_n.ld$. 
	
	\begin{lemma}\label{lemma:detourConstraint}
		Let $r_n.s $ be added after stop $p^{s}$ in the trip schedule $c_i.S$ of a vehicle $c_i$. If it is valid to add $r_n.e$ after $p^k$ in $c_i.S$, then $p^k$ and all stops scheduled between $p^{s}$ and $p^k$ must be covered by $r_n.rd$.
		% 		there exist a stop $p^k$ on $c_i.S$ that is after $p^{s}$, and $p^k$ is outside of $r_n.rd$,  then $p^k$ and all stops scheduled after $p^{k-1}$ in $c_i.S$ are invalid stops for adding $r_n.e$.
	\end{lemma}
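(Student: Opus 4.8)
The plan is to mirror the argument of Lemma~\ref{lemma:segmentDetourConstraint}, now applied to the portion of the route over which the vehicle is carrying $r_n$. First I would fix the schedule order induced by the two insertions: placing $r_n.s$ after $p^s$ and $r_n.e$ after $p^k$ makes the vehicle visit $r_n.s$, then the stops $p^{s+1}, \dots, p^k$, and only afterwards $r_n.e$. Hence every stop ``scheduled between $p^s$ and $p^k$'' is traversed while $r_n$ is on board, so it suffices to show that each such stop $q \in \{p^{s+1}, \dots, p^k\}$ lies inside $r_n.rd$.

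Next I would bound the in-vehicle travel time of $r_n$. If adding $r_n.e$ after $p^k$ is valid, then the drop-off at $r_n.e$ occurs no later than $r_n.ld$, while the pickup at $r_n.s$ occurs no earlier than the issue time $r_n.t$. Subtracting, the time the vehicle spends carrying $r_n$ is at most $r_n.ld - r_n.t$, which is exactly the maximum allowed travel time encoded in the major axis $r_n.rd.major = (r_n.ld - r_n.t)\cdot v$. The geometric core then follows as before: for any intermediate stop $q$, the sub-route from $r_n.s$ to $r_n.e$ passes through $q$, so its travel time is at least $t(r_n.s, q) + t(q, r_n.e)$ by the triangle inequality on the road network. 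Combining the two bounds yields $t(r_n.s, q) + t(q, r_n.e) \le r_n.ld - r_n.t$. Using that a network distance is at most $v$ times the corresponding travel time and at least the corresponding Euclidean distance, I obtain the chain $|r_n.s\,q| + |q\,r_n.e| \le d(r_n.s, q) + d(q, r_n.e) \le v\,(t(r_n.s,q) + t(q,r_n.e)) \le v\,(r_n.ld - r_n.t) = r_n.rd.major$, so by the definition of the ellipse $q$ is covered by $r_n.rd$.

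The hard part is not the geometry but the timing bound in the second step: one must justify bounding the in-vehicle duration by $r_n.ld - r_n.t$ rather than by the tighter but pruning-time-unknown quantity (latest drop-off minus \emph{actual} pickup time). Substituting the conservative value $r_n.t$ for the pickup time is precisely what keeps the rule safe, since it can only enlarge the ellipse and therefore never prunes a feasible vehicle, matching the one-sided guarantee of Lemma~\ref{lemma:segmentDetourConstraint} and Lemma~\ref{lemma:waitingConstraint}. Everything else reduces to the same Euclidean-versus-network comparison already established there, applied to each of the stops $p^{s+1}, \dots, p^k$ in turn.
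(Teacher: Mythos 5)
Your proposal is correct. The paper states Lemma~\ref{lemma:detourConstraint} without a formal proof, offering only the one-sentence justification that a stop outside $r_n.rd$ visited after pickup would make the vehicle miss $r_n.ld$; your argument is exactly the contrapositive of that claim, fleshed out with the same Euclidean-versus-network-distance chain used in the paper's proofs of Lemma~\ref{lemma:segmentDetourConstraint} and Lemma~\ref{lemma:waitingConstraint}, and your bound of the in-vehicle duration by $r_n.ld - r_n.t$ matches the paper's definition of $r_n.rd.major$. The only caveat worth noting is that your argument (like the lemma's intent, as the paper's example confirms) presumes $k > s$, so that the stops in question really are traversed while $r_n$ is on board.
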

	
	\begin{example}
		The detour ellipse of $r_n$ is shown in Figure~\ref{fig:circle_example}.
		If $r_n.s$ is added after $p^0$, then $r_n.e$ can only be added after either $p^0$ or stops inside of the detour ellipse, i.e., $p^1$ and $p^2$.
		Adding $r_n.e$ after later stops (e.g., $p^3$) will violate the latest drop-off time of $r_n$.
	\end{example}

	\begin{figure}
		\centering
		%%%
		\begin{subfigure}{0.4\textwidth}
			\centering
			\includegraphics[scale =0.5]{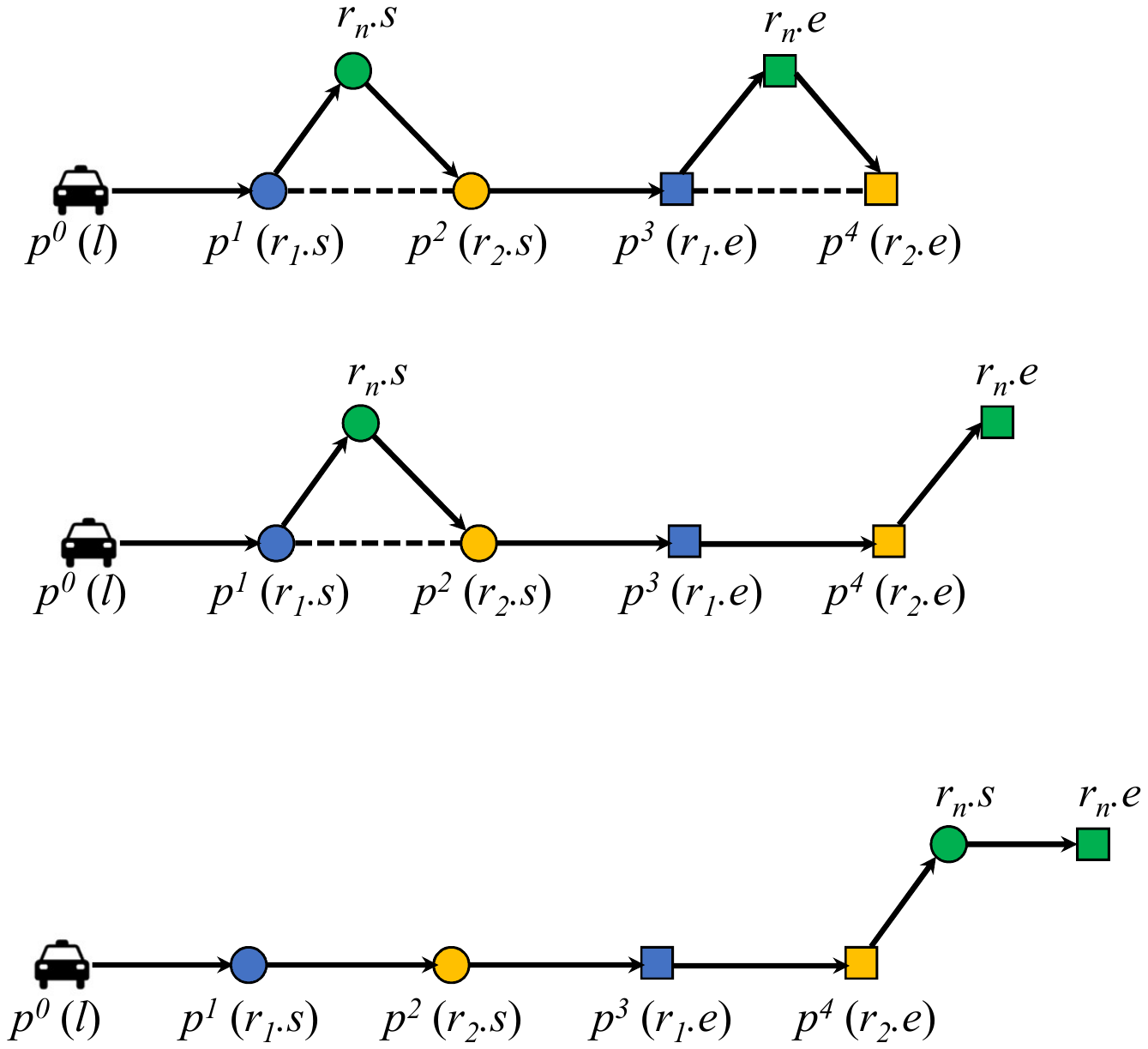}
			\caption{insert source and insert destination.}
			\label{fig:insert_insert}
		\end{subfigure}
		\vspace{1mm}
		\begin{subfigure}{0.4\textwidth}
			\centering
			\includegraphics[scale =0.5]{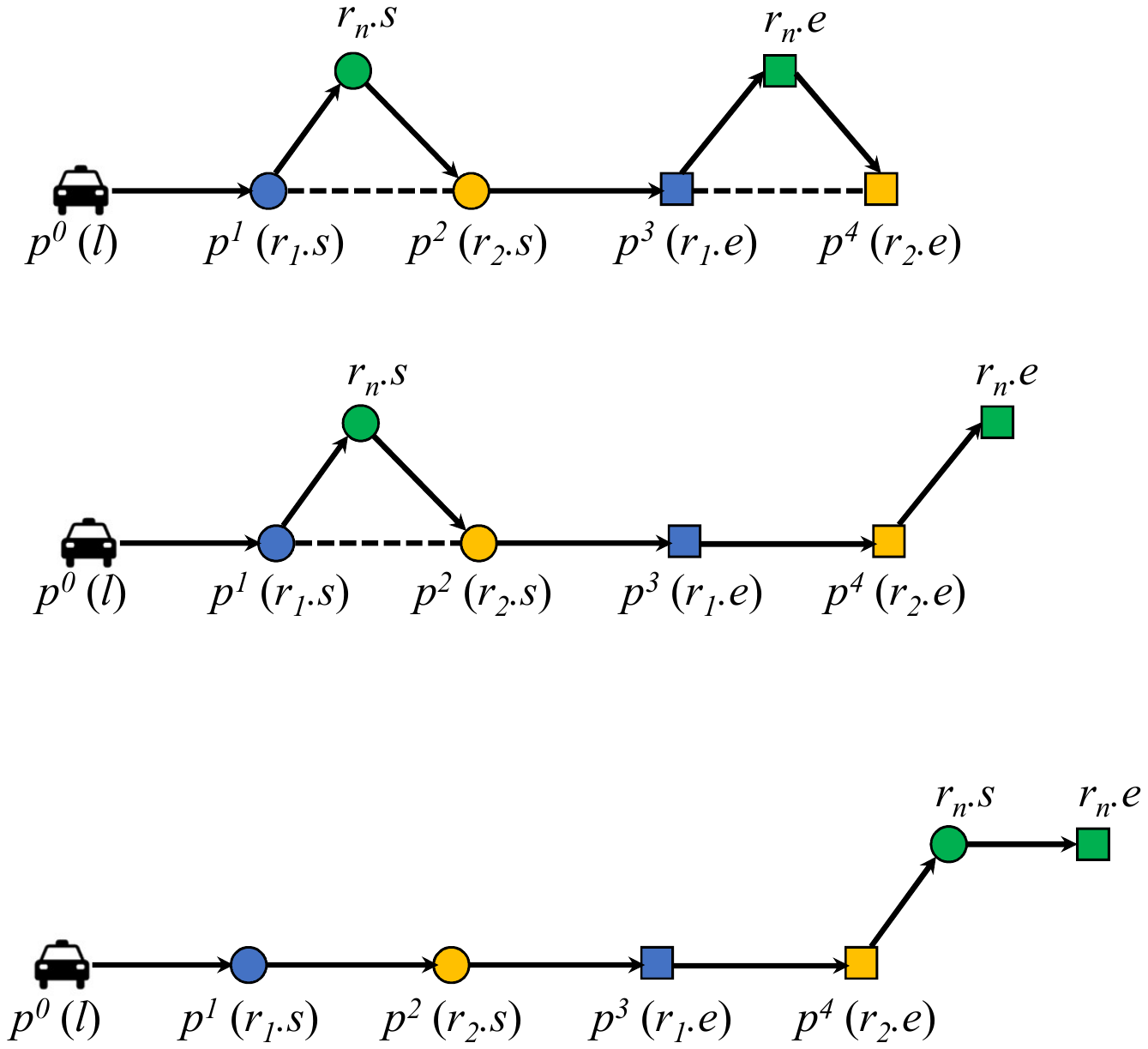}
			\caption{insert source and append destination.}
			\label{fig:insert_append}
		\end{subfigure}
		\vspace{1mm}
		\begin{subfigure}{0.4\textwidth}
			\centering
			\includegraphics[scale =0.5]{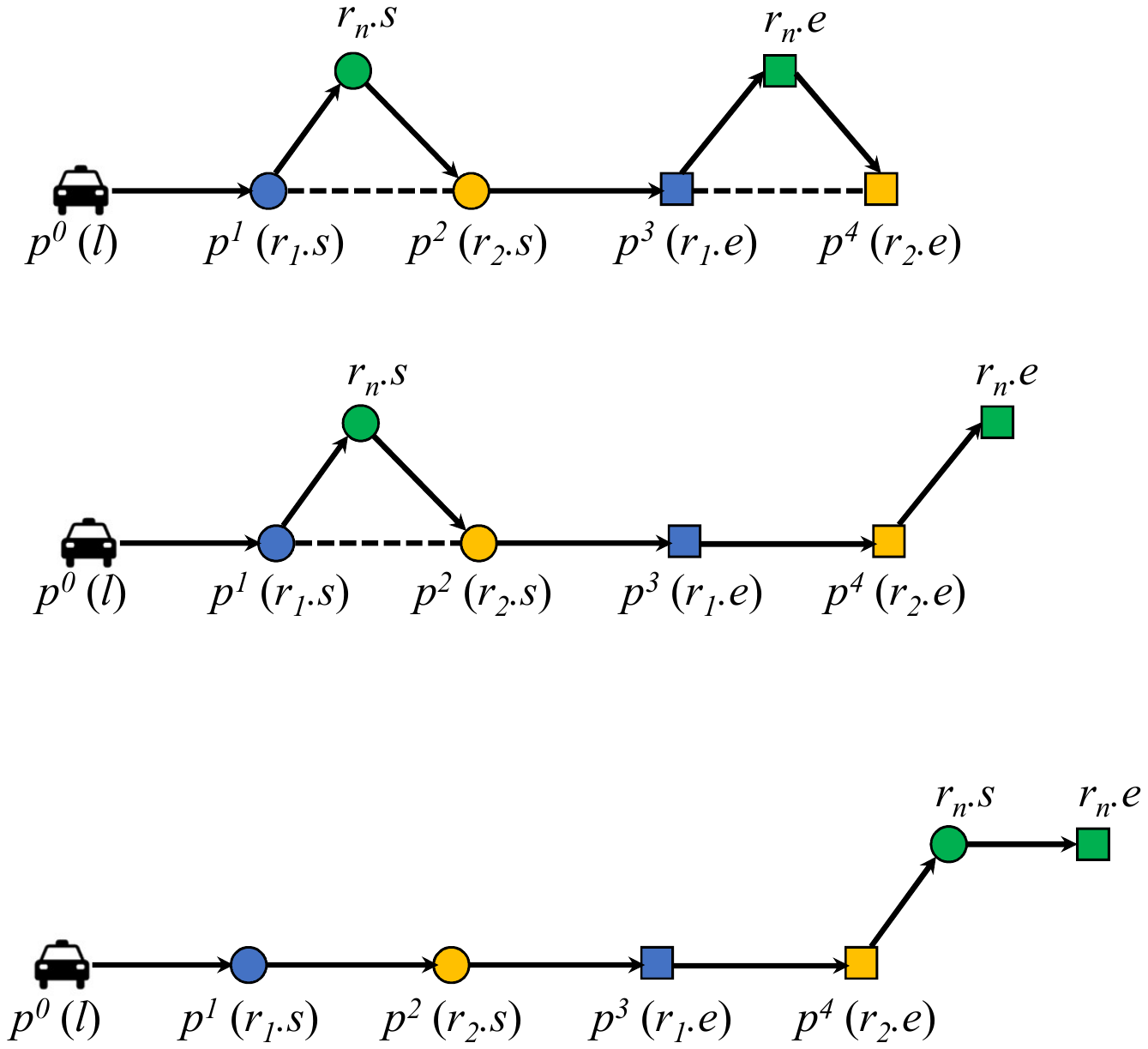}
			\caption{append source and append destination.}
			\label{fig:append_append}
		\end{subfigure}
		\caption{Cases to add a new trip request to a trip schedule}
		\label{fig:insertionCases}
	\end{figure}
	
	\subsection{Pruning Rules}
	\label{subsec:threeConditions}
	There are three cases shown in Figure~\ref{fig:insertionCases} when adding a new trip request $r_n$ to the trip schedule $c_i.S$
	of a vehicle $c_i$: 
	\begin{enumerate}
		\item \textbf{insert-insert:} insert $r_n.s$ into a segment of $c_i.S$ and insert $r_n.e$ into the same or another segment of $c_i.S$.
		%    The inserting segments of $r_n.s$ and $r_n.e$ can be the same or different. %{\color{red}JQ: Can thse two segments be the same segment?}
		\item \textbf{insert-append:} insert $r_n.s$ into a segment of $c_i.S$ and append $r_n.e$ to the end of $c_i.S$
		\item \textbf{append-append:} append both $r_n.s$ and $r_n.e$ to the end of  $c_i.S$. 
	\end{enumerate}
	
	We next analyze the conditions that $c_i$ needs to satisfy so that 
	adding $r_n$ to $c_i.S$ is valid for each case. 
	% Therefore, finding vehicle candidates can be achieved by finding vehicles satisfying the three conditions. 
	% Those unfounded vehicles can be safely pruned.
	
	\textbf{Insert-insert.} Figure~\ref{fig:insert_insert} illustrates the insert-insert case, where both $r_n.s$ and $r_n.e$ are inserted into some segments of the trip schedule $c_i.S$. 
	According to Lemma~\ref{lemma:waitingConstraint}, a segment is valid for inserting a stop only if the stop is inside the detour ellipse of the segment. Therefore, both $r_n.s$ and $r_n.e$ must be inside the 
	detour ellipse of at least one segment of $c_i.S$.

	A special case is to insert both $r_n.s$ and $r_n.e$ to the same segment of $c_i.S$, as shown in Figure~\ref{fig:insert_two}. In this case, both $r_n.s$ and $r_n.e$ must be inside the detour ellipse of the segment.
	
	\begin{figure}[t]
		\centering
		\includegraphics[width=0.5\linewidth]{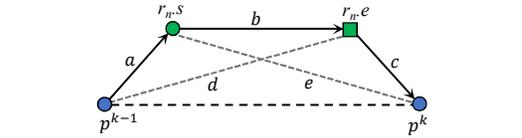}
		\caption{The special case of insert-insert.}
		\label{fig:insert_two}
	\end{figure}
	
	\begin{lemma}\label{lemma:insert_two_case}
		A segment $(p^{k-1}, p^{k})$ is valid to insert both $r_n.s$ and $r_n.e$ \emph{only if} $r_n.s$ and $r_n.e$ are both included in the detour ellipse of the segment $vd[k]$.
	\end{lemma}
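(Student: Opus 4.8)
The plan is to reduce the two-point insertion to the single-point case already settled by Lemma~\ref{lemma:segmentDetourConstraint}. When both $r_n.s$ and $r_n.e$ are inserted into the same segment, the path from $p^{k-1}$ to $p^{k}$ becomes $p^{k-1} \to r_n.s \to r_n.e \to p^{k}$, so a valid insertion means its total travel time $t(p^{k-1}, r_n.s) + t(r_n.s, r_n.e) + t(r_n.e, p^{k})$ does not exceed the maximum allowed travel time $arr[k]-arr[k-1]+slk[k]$ of the segment. I would take this single inequality as the starting point, since it encodes the validity assumption.

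First I would show that inserting $r_n.s$ \emph{alone} into $(p^{k-1}, p^{k})$ is also valid. By the triangle inequality for travel times, $t(r_n.s, p^{k}) \le t(r_n.s, r_n.e) + t(r_n.e, p^{k})$, hence $t(p^{k-1}, r_n.s) + t(r_n.s, p^{k}) \le t(p^{k-1}, r_n.s) + t(r_n.s, r_n.e) + t(r_n.e, p^{k})$, which is bounded by the maximum allowed travel time. The contrapositive of Lemma~\ref{lemma:segmentDetourConstraint} then forces $r_n.s \in vd[k]$, because any point whose single insertion stays within the time budget cannot lie outside the ellipse. A symmetric step, this time using $t(p^{k-1}, r_n.e) \le t(p^{k-1}, r_n.s) + t(r_n.s, r_n.e)$, shows that inserting $r_n.e$ alone is valid and therefore $r_n.e \in vd[k]$.

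The only subtlety is the gap between the geometric ellipse, which is defined through Euclidean distances and the major axis $(arr[k]-arr[k-1]+slk[k]) \cdot v$, and the constraints, which are phrased in travel time. This is precisely the gap Lemma~\ref{lemma:segmentDetourConstraint} already bridges via the chain $|ab| \le d(a,b) \le v\, t(a,b)$, so reusing it keeps the argument short. The main thing to get right is the direction of the two triangle inequalities, so that each two-point path dominates the corresponding one-point path rather than the reverse; I expect this to be the only place a careless sign could slip in.

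As a fallback, I could prove membership directly without invoking Lemma~\ref{lemma:segmentDetourConstraint}: combining the Euclidean triangle inequality $|p^{k-1} r_n.s| + |r_n.s\, p^{k}| \le |p^{k-1} r_n.s| + |r_n.s\, r_n.e| + |r_n.e\, p^{k}|$ with $|ab| \le v\, t(a,b)$ termwise yields $|p^{k-1} r_n.s| + |r_n.s\, p^{k}| \le v\,\bigl(t(p^{k-1}, r_n.s) + t(r_n.s, r_n.e) + t(r_n.e, p^{k})\bigr) \le vd[k].major$, which is exactly the ellipse-membership condition for $r_n.s$; the same estimate with the roles of the two stops exchanged gives $r_n.e \in vd[k]$. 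I expect the reduction-based write-up to be cleaner, with this direct computation kept in reserve should a self-contained statement be preferred.
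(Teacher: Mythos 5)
Your proposal is correct, and its core is the same as the paper's: use the triangle inequality to show that the three-leg path $p^{k-1}\to r_n.s\to r_n.e\to p^k$ dominates each two-leg path obtained by dropping one of the new stops, then conclude ellipse membership from the validity bound on the three-leg path. Your fallback computation is essentially verbatim the paper's proof, which labels the Euclidean distances $a,b,c,d,e$ and argues $a+e<a+b+c\le vd[k].major$ and $d+c<a+b+c\le vd[k].major$.

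The one genuine difference is where you apply the triangle inequality in your primary write-up: you apply it to the travel times $t(\cdot,\cdot)$ and then invoke the contrapositive of Lemma~\ref{lemma:segmentDetourConstraint}, whereas the paper first passes from travel times to road-network distances to Euclidean distances and only then uses the (unconditional) Euclidean triangle inequality. Your reduction is a cleaner modularization and reuses Lemma~\ref{lemma:segmentDetourConstraint} as a black box, but it quietly assumes that $t$ itself satisfies the triangle inequality, i.e., $t(a,c)\le t(a,b)+t(b,c)$. That holds when $t$ is a shortest-path travel time, but the paper defines $t$ as an \emph{estimated} travel time possibly fetched from a navigation service, for which this need not hold exactly; the paper's route only needs $|ab|\le d(a,b)\le v\cdot t(a,b)$ termwise plus the Euclidean triangle inequality, so it is slightly more robust. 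If you keep the reduction-based write-up, state the assumption on $t$ explicitly or fall back to the Euclidean-level argument, which is what the paper does.
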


	\begin{proof}
		We use Figure~\ref{fig:insert_two} to illustrate our proof, where the Euclidean distances among the stops are represented by $a, b, c, d, e$.
		Suppose that $(p^{k-1}, p^{k})$ is valid to insert both $r_n.s$ and $r_n.e$, and the trip schedule becomes  $(p^{k-1}, r_n.s, r_n.e, p^{k})$ after the insertion. 
		Traveling between $(p^{k-1}, p^{k})$ must satisfy the maximum allowed travel time constraint.
		Thus, $d(p^{k-1}, r_n.s) + d(r_n.s, r_n.e)+d(r_n.e, p^{k}) = (t(p^{k-1}, r_n.s) + t(r_n.s, r_n.e)+t(r_n.e, p^{k})) \cdot v \leq (arr[k] - arr[k-1] + slk[k]) \cdot v_{max} = vd[k].major$. 
% 		As $d(n_i,n_j)=t(n_i,n_j) \cdot v$, $d(p^{k-1}, r_n.s) + d(r_n.s, r_n.e)+d(r_n.e, p^{k}) \leq vd[k].major $.
		Since the Euclidean distance between two stops is no larger than their road network distance, $a+b+c \leq d(p^{k-1}, r_n.s) + d(r_n.s, r_n.e)+d(r_n.e, p^{k}) \leq vd[k].major$.
		According to the triangle inequality, $e<b+c$.
		Thus, $a+e<a+b+c \leq vd[k].major$. The  Euclidean distance sum from $r_n.s$ to $p^{k-1}$ and $p^{k}$ is smaller than $vd[k].major$ and $r_n.s$ must be inside $vd[k]$.
		Similarly, $d<a+b$, and $d+c$ $<a+b+c \leq vd[k].major$. $r_n.e$ must be inside $vd[k]$.
		% 	{\color{red}JQ: I didn't check the math.}
	\end{proof}

	The pruning rule for the insert-insert case is as follows:
	\begin{lemma}\label{lemma:insert_insert_condition}
		A vehicle $c_i$ may be matched with $r_n$ in the insert-insert case \emph{only if} it satisfies: 
		\begin{itemize}
			\item there exists a segment of $c_i.S$ with the detour ellipse that covers $r_n.s$, i.e., $r_n.s \in vd[k]$, $k = 1,...,m$; and 
			\item there exists a segment of $c_i.S$ with the detour ellipse that covers $r_n.e$, i.e., $r_n.e \in vd[k]$, $k = 1,...,m$.
		\end{itemize}
	\end{lemma}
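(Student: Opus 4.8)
The plan is to prove the necessary (``only if'') direction directly: assume the insert-insert match of $r_n$ into $c_i.S$ is valid, and exhibit one segment whose detour ellipse contains $r_n.s$ and one whose detour ellipse contains $r_n.e$. By the definition of the insert-insert case, validity means $r_n.s$ is inserted into some segment $(p^{a-1},p^a)$ and $r_n.e$ into some segment $(p^{b-1},p^b)$, and the point-order constraint forces the pickup before the drop-off, so $a\le b$. I would then split into two subcases. When $a=b$ (both points in the same segment), the conclusion is immediate from Lemma~\ref{lemma:insert_two_case}, which already shows $r_n.s,r_n.e\in vd[a]$, so both bullet conditions hold with $k=a$. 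The substantive work is the case $a<b$, where I aim to show $r_n.s\in vd[a]$ and $r_n.e\in vd[b]$ separately.

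For $a<b$ the central quantities are the two detour delays $\delta_s=t(p^{a-1},r_n.s)+t(r_n.s,p^a)-(arr[a]-arr[a-1])$ and $\delta_e=t(p^{b-1},r_n.e)+t(r_n.e,p^b)-(arr[b]-arr[b-1])$, both nonnegative. The key observation is that in the modified schedule $\delta_s$ delays every stop $p^i$ with $i\ge a$, while $\delta_e$ additionally delays every stop $p^i$ with $i\ge b$. Writing the latest-arrival requirement of the valid schedule then gives $arr[i]+\delta_s\le ddl[i]$ for $a\le i<b$ and $arr[i]+\delta_s+\delta_e\le ddl[i]$ for $i\ge b$.

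From these inequalities I would extract the two slack bounds. Since $\delta_e\ge0$, every inequality yields $\delta_s\le ddl[i]-arr[i]$ for all $i\ge a$, hence $\delta_s\le slk[a]=\min_{i\ge a}\{ddl[i]-arr[i]\}$; symmetrically, since $\delta_s\ge0$, I obtain $\delta_e\le slk[b]$. Each slack bound converts into an ellipse-containment statement exactly as in the proof of Lemma~\ref{lemma:segmentDetourConstraint}: $\delta_s\le slk[a]$ means $t(p^{a-1},r_n.s)+t(r_n.s,p^a)\le arr[a]-arr[a-1]+slk[a]=vd[a].major/v$, so the Euclidean sum $|p^{a-1}r_n.s|+|r_n.s\,p^a|$ is at most the road-network sum $d(p^{a-1},r_n.s)+d(r_n.s,p^a)\le vd[a].major$, placing $r_n.s$ inside $vd[a]$; the identical computation gives $r_n.e\in vd[b]$.

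The subtlety I expect to be the main obstacle is the coupling of the two insertions. The detour ellipses $vd[a]$ and $vd[b]$ are built from the \emph{original} schedule, whereas validity is a property of the schedule \emph{after both points are inserted}; inserting $r_n.s$ shifts the downstream arrival times, so the slack actually available to $r_n.e$ in the modified schedule is smaller than the original $slk[b]$. The argument above sidesteps this precisely because discarding the nonnegative term ($\delta_e$ when bounding $\delta_s$, and $\delta_s$ when bounding $\delta_e$) only loosens the constraint, leaving the original slacks as valid upper bounds. An even cleaner route for the $r_n.s$ claim is to note that deleting $r_n.e$ from the valid modified schedule leaves a still-valid schedule---removing a downstream detour cannot delay any stop and does not affect the arrival time at $r_n.s$, so $r_n$'s pickup deadline is still met---which reduces that half to a single-point insertion and lets me invoke Lemma~\ref{lemma:segmentDetourConstraint} verbatim. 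I would present the uniform $\delta$-bound argument as the main line, since it handles $r_n.s$ and $r_n.e$ symmetrically.
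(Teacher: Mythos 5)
Your proof is correct, and it is noticeably more careful than what the paper actually provides. The paper never writes a standalone proof of this lemma: it justifies it in one sentence by invoking the single-point insertion result (Lemma~\ref{lemma:segmentDetourConstraint}, though the text mis-cites Lemma~\ref{lemma:waitingConstraint}) independently for $r_n.s$ and for $r_n.e$, together with Lemma~\ref{lemma:insert_two_case} for the same-segment special case. That one-liner silently assumes that each insertion can be judged against the \emph{original} schedule's slack, which is exactly the coupling issue you flag: inserting $r_n.s$ consumes slack downstream, so the allowance actually available when $r_n.e$ is inserted is smaller than $slk[b]$. Your $\delta_s,\delta_e$ argument closes this gap cleanly --- writing the post-insertion deadline constraints as $arr[i]+\delta_s\le ddl[i]$ for $a\le i<b$ and $arr[i]+\delta_s+\delta_e\le ddl[i]$ for $i\ge b$, then discarding the other (nonnegative) delay to recover $\delta_s\le slk[a]$ and $\delta_e\le slk[b]$, after which the conversion to ellipse containment is verbatim the computation in Lemma~\ref{lemma:segmentDetourConstraint}. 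The alternative you mention (delete $r_n.e$ from the valid modified schedule to reduce the $r_n.s$ half to a single insertion) is also sound and is essentially the implicit logic the paper relies on. What the paper's terse route buys is brevity and the observation that only the necessary direction is needed for pruning; what yours buys is an actual verification that the two per-segment ellipse tests remain valid necessary conditions even though the two detours interact. One minor point worth stating explicitly if you write this up: the nonnegativity of $\delta_s$ and $\delta_e$ rests on the triangle inequality for the estimated travel times $t(\cdot,\cdot)$, which holds for shortest-path times but should be named as the assumption it is.
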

	
	\textbf{Insert-append.} Figure~\ref{fig:insert_append} illustrates the insert-append case. 
	According to Lemma~\ref{lemma:segmentDetourConstraint}, to insert $r_n.s$, there must be 
	a segment in the trip schedule of $c_i$ that has a detour ellipse covering 
	$r_n.s$. Meanwhile, any stop between $r_n.s$ and $r_n.e$ needs to be covered by the detour ellipse of $r_n$
	%	, as specified by the latest drop-off time constraint of $r_n$ 
	(see Lemma~\ref{lemma:detourConstraint}).
	
	Checking all the stops between $r_n.s$ and $r_n.e$ against the detour ellipse of $r_n$ is non-trivial. 
	For fast pruning, we only check the ending stop of the current trip schedule: if the ending stop is outside of the detour ellipse of $r_n$, it is invalid for  appending $r_n.e$. 
	Take Figure~\ref{fig:insert_append} as an example. We only check if $p^4$ is inside the detour ellipse of $r_n$.
	%	If not, we cannot add $r_n$ to the vehicle in the insert-append case.
	This simplified rule may bring in a small number of infeasible vehicles, which will be filtered later as explained in the next paragraphs. 
	%{\color{red}(JQ: Will it be pruned by the next case?)}
	The pruning rule for the insert-append case is as follows:
	
	\begin{lemma}\label{lemma:insert_append_condition}
		A vehicle $c_i$ may be matched with $r_n$ in  the insert-append case \emph{only if} it satisfies: 
		\begin{itemize}
			\item there exists a segment of $c_i.S$ with the detour ellipse that covers $r_n.s$, i.e., $r_n.s \in vd[k]$, $k = 1,...,m$; and 
			\item the ending stop of the vehicle schedule, $p^{m}$, is covered by the detour ellipse of $r_n$, i.e., $p^{m} \in r_n.rd$.
		\end{itemize}
	\end{lemma}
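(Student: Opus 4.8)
The plan is to establish this purely as a necessary condition. I would assume that some insert-append placement of $r_n$ already yields a valid trip schedule, and then derive the two bullet conditions independently, each as a consequence of an earlier lemma; no sufficiency needs to be shown, since the statement is an ``only if''.

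For the first condition I would reason about the source. In the insert-append case $r_n.s$ is inserted into some segment $(p^{k-1},p^k)$ of $c_i.S$, turning it into the sub-path $(p^{k-1}, r_n.s, p^k)$, while $r_n.e$ is appended after the final stop. Because appending $r_n.e$ strictly after $p^m$ does not change the arrival time of any pre-existing stop, the only constraint the insertion of $r_n.s$ must respect is the maximum allowed travel time of its host segment, which is exactly the quantity encoded by the pre-insertion ellipse $vd[k]$. Invoking the contrapositive of Lemma~\ref{lemma:segmentDetourConstraint}, a valid insertion forces $r_n.s \in vd[k]$, so at least one segment's detour ellipse must cover $r_n.s$.

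For the second condition I would reason about the destination. After inserting $r_n.s$ the schedule becomes $(p^0,\dots,p^{k-1}, r_n.s, p^k,\dots,p^m, r_n.e)$, so $r_n.e$ is added immediately after the original ending stop $p^m$, and the stops lying between $r_n.s$ and $r_n.e$ are precisely $p^k,\dots,p^m$. Applying Lemma~\ref{lemma:detourConstraint} with the appended slot taken to be after $p^m$, a valid appending requires every such intermediate stop to lie in $r_n.rd$; specializing to $p^m$ itself gives $p^m \in r_n.rd$, the second condition.

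I do not expect a hard estimate here; the main care will be in the bookkeeping of positions. I would need to confirm that appending $r_n.e$ leaves the pre-insertion ellipses $vd[k]$ the correct objects against which to test the source (true because the append lies strictly beyond every existing stop and so perturbs no earlier slack), and that $p^m$ really does sit on the sub-path from $r_n.s$ to $r_n.e$ so that Lemma~\ref{lemma:detourConstraint} applies to it. I would also flag that the lemma tests only the single ending stop $p^m$ rather than all of $p^k,\dots,p^m$: this is a weaker consequence of Lemma~\ref{lemma:detourConstraint} and hence still a legitimate necessary condition, even though its converse is deliberately not asserted.
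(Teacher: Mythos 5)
Your proposal is correct and matches the paper's own justification, which is given informally in the paragraph preceding the lemma: the first bullet follows from Lemma~\ref{lemma:segmentDetourConstraint} applied to the host segment of $r_n.s$, and the second from Lemma~\ref{lemma:detourConstraint} restricted to the single stop $p^m$, with the paper likewise noting explicitly that checking only the ending stop is a deliberate relaxation that may admit some infeasible vehicles. Your added bookkeeping (that the append does not invalidate the pre-insertion ellipses as the objects to test against) is a reasonable clarification but does not change the argument.
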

	
	\textbf{Append-append.} Figure~\ref{fig:append_append} illustrates the append-append case, where we append both $r_n.s$ and $r_n.e$ to the end of the trip schedule.
	In this case, $r_n$ will not affect any exiting stops. Only the service constraints of $r_n$ need to be considered. 
	Furthermore, no stop is scheduled between $r_n.s$ and $r_n.e$, and hence the detour constraint of $r_n$ is satisfied already.
	The only constraint to check is the waiting time constraint of $r_n$.
	According to Lemma~\ref{lemma:waitingConstraint}, all stops scheduled before $r_n.s$ must be covered by the waiting circle of $r_n$.
	For example, in Figure~\ref{fig:append_append}, the vehicle needs to visit $p^0, p^1, p^2, p^3, p^4$ before picking up $r_n.s$.
	Therefore, all these stops should be covered by the waiting circle of $r_n$.
	Similar to the insert-append case, we only check the ending stop, as summarized in Lemma~\ref{lemma:append_append_condition}.
	\begin{lemma}\label{lemma:append_append_condition}
		A vehicle $c_i$ may be matched with $r_n$ in the append-append case \emph{only if}  the ending stop of its trip schedule, $p^{m}$, is covered by the waiting circle of $r_n$, i.e., $p^m \in r_n.wc$.
	\end{lemma}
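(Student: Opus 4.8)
The plan is to derive this lemma as an immediate specialization of Lemma~\ref{lemma:waitingConstraint}. In the append-append case the schedule $c_i.S = (p^0, p^1, \ldots, p^m)$ becomes $(p^0, p^1, \ldots, p^m, r_n.s, r_n.e)$, so the new pickup stop $r_n.s$ is added directly after the ending stop $p^m$. This is exactly the hypothesis of Lemma~\ref{lemma:waitingConstraint} with the index $k = m$, and the whole proof will essentially amount to reading off that lemma in this particular configuration.

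First I would argue that, among the three validity constraints, only the latest pickup time of $r_n$ can be violated here, so it suffices to enforce the waiting-time constraint. The point-order constraint holds because $r_n.s$ is placed before $r_n.e$; the constraints of all existing requests are untouched since both new stops are appended after $p^m$, leaving the arrival times $arr[0], \ldots, arr[m]$ unchanged; and the latest drop-off time of $r_n$ is met automatically since no stop lies between $r_n.s$ and $r_n.e$, so the detour ellipse $r_n.rd$ imposes no constraint. This isolates the pickup constraint as the only binding requirement.

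Next I would invoke Lemma~\ref{lemma:waitingConstraint} with $k = m$: because $r_n.s$ is added after $p^m$, validity forces $p^m$ (and every stop preceding it) to lie inside $r_n.wc$. In particular $p^m \in r_n.wc$, which is precisely the stated condition. Taking the contrapositive, any vehicle whose ending stop falls outside $r_n.wc$ cannot yield a valid append-append match and can be safely pruned.

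I do not expect a genuine obstacle, as the statement is a corollary of Lemma~\ref{lemma:waitingConstraint}. The only subtlety worth flagging is that the rule checks just the ending stop $p^m$ rather than the full chain $p^0, \ldots, p^m$ that Lemma~\ref{lemma:waitingConstraint} requires to lie in $r_n.wc$. This still gives a valid \emph{only if} condition: since $p^m$ is among the stops the earlier lemma forces into $r_n.wc$, no feasible vehicle is ever discarded, while a vehicle passing this single test may remain infeasible and be rejected later in the selection stage. I would make this relaxation explicit so the reader sees the rule is a sound (necessary) pruning condition rather than an exact characterization.
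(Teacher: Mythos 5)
Your proposal is correct and follows essentially the same route as the paper: the paper's justification (given in the prose immediately preceding the lemma) likewise observes that appending leaves existing stops unaffected, that the detour constraint of $r_n$ is vacuous since nothing is scheduled between $r_n.s$ and $r_n.e$, and then applies Lemma~\ref{lemma:waitingConstraint} to require all of $p^0,\ldots,p^m$ to lie in $r_n.wc$, relaxing to the single check on $p^m$ exactly as you flag. Your explicit remark that this relaxation preserves soundness as a necessary condition is a point the paper only states implicitly (``similar to the insert-append case, we only check the ending stop''), but it is the same argument.
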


	%	\textbf{Discussion.} 
	%	Note that the pruning conditions specified in the three lemmas above are \emph{necessary} but \emph{not sufficient}. 
	%	A vehicle that does not satisfy any of these conditions must be infeasible to match
	%	with the new trip request $r_n$ because it is invalid to add the new trip request to the vehicle trip schedule with any case.
	%	%{\color{red}(JQ: explain why.)}. 
	%	On the other hand, a remaining vehicle may still be infeasible to match with $r_n$ due to the estimation of the reachable areas and the simplified checking rules.
	%	%{\color{red}(JQ: explain why.)}. 
	%	Thus, the pruning results are guaranteed to return all feasible vehicles but may also return infeasible vehicles (i.e., false positives).
	In our implementation, we use minimum bounding rectangles (MBRs) to represent ellipses and circles as they are easier to operate on and tightly bound the ellipses and circles.
	
	%now we know which insertions are feasible
	%then how to get all possible vehicles
	
	%	We first analyze the possible cases to add the new trip request to a vehicle schedule.
	%	For each case, the insertion is valid only when some geometric conditions are satisfied.
	%	We summarize these conditions and further propose an efficient algorithm to retrieve the vehicles satisfying these conditions.

	\subsection{Applying the Pruning Rules}\label{sec:applyPruning}
	When a new trip request $r_n$ arrives, we first compute the waiting circle and the detour ellipse of $r_n$. Then, we compute a set of vehicle candidates that may match $r_n$ based on the pruning rules above (Lemmas~\ref{lemma:insert_insert_condition}, \ref{lemma:insert_append_condition}, \ref{lemma:append_append_condition}).
	
	To facilitate the pruning, we compute sets of vehicles that:
	\begin{enumerate}
		\item have trip schedule segments with detour ellipses that cover $r_n.s$ (for the insert-insert and insert-append cases);
		\item have trip schedule segments with detour ellipses that cover $r_n.e$ (for the insert-insert case);
		\item have the ending stop of the trip schedule covered by $r_n.wc$ (for the append-append case); 
		\item have the ending stop of the trip schedule covered by $r_n.rd$ 
		(for the insert-append case).
	\end{enumerate}
	
	To find vehicles that satisfy a pruning rule, we just need to join  
	the relevant sets of vehicles computed above. 
	For example, vehicles that may satisfy the insert-insert case are those 
	in both the first and the second sets above.
	
	\textbf{R-tree based pruning.} We build two R-trees~\cite{guttman1984r} to accelerate the computation process,
	although other spatial indices may also be applied.
	One R-tree stores the detour ellipses of all segments for all vehicle trip schedules, denoted by $T_{seg}$; 
	the other R-tree stores the location of the ending stops of all non-empty vehicles, denoted as $T_{end}$.
	We run four queries:
	\vspace{-0.5mm}
	\begin{enumerate}
		\item $Q_1=T_{seg}.pointQuery (r_n.s)$ is a point query that returns all segments whose detour ellipses cover $r_n.s$; each segment returned may be used to insert $r_n.s$.
		\item $Q_2=T_{seg}.pointQuery (r_n.e)$ is a point query that returns all segments whose detour ellipses cover $r_n.e$; each segment returned may be used to insert $r_n.e$.
		\item $Q_3=T_{end}.rangeQuery(r_n.wc)$ is a range query that returns all ending stops covered by $r_n.wc$; 	each ending stop returned may be used to append $r_n.s$ and $r_n.e$.
		\item $Q_4=T_{end}.rangeQuery(r_n.rd)$ is a range query that returns all ending stops covered by $r_n.rd$; 
		each ending stop returned may be used to append $r_n.e$.
	\end{enumerate}
	
	%     The segments whose detour ellipses covering $r_n.s$ and $r_n.e$ can be easily calculated by applying two range queries on the \textit{detourRtree}.
	% 	Similarly, all end points covered by the waiting circle
	% 	or the detour circle of $r_n$ can be quickly located by applying two range queries on \textit{T_{end}}.
	
	The returned segments and ending stops are further pruned based on their time and capacity constraints.
	Specifically, for each segment $(p^{k-1}, p^{k})$ returned for inserting $r_n.s$ ($r_n.e$), we check whether the insertion violates the latest arrival time  of $p^{k}$ and $r_n.s$ ($r_n.e$). 
	The schedule between $(p^{k-1}, p^{k})$ becomes 
	($p^{k-1}, r_n.s (r_n.e),p^{k}$) after the insertion.
	For the new schedule, the arrival time of $r_n.s$ ($r_n.e$) and $p^{k}$ is estimated based on the arrival time of $p^{k-1}$ plus the travel time between them.
	%	{\color{red}(JQ: how to estimate?)}.
	%	If the estimated arrival time of $r_n.s$ ($r_n.e$) or $p^{k}$ exceeds their latest arrival time $r_n.lp$ ($r_n.ld$) and $ddl[k]$, the segment is still invalid and hence discarded.
	If the estimated arrival time of $r_n.s (r_n.e)$ or $p^k$ exceeds their latest arrival time, the segment is  discarded.
	For each ending stop $(p^{m})$ returned for appending $r_n.s$ ($r_n.e$), 
	we estimate the arrival time of $r_n.s$ ($r_n.e$) with the appended schedule by summing up the end stop arrival time and the travel time from the end stop to $r_n.s$ ($r_n.e$). If the estimated time exceeds the latest arrival time of $r_n.s$ ($r_n.e$),
	%	{\color{red}(JQ: how to estimate?)}
	we also discard the ending stop. 
	Besides the time constraint, we also check the capacity constraint for segments to insert $r_n.s$. If a segment $(p^{k-1},p^{k})$ is returned for inserting $r_n.s$, we sum up the number of passengers carried in $(p^{k-1},p^{k})$  and that of $r_n$ and discard the segment if the sum exceeds the capacity.
	% We use Euclidean distance in this test operation for efficiency consideration.

	%		After querying the R-trees, we can return 
	%	the vehicles corresponding the query answer segments/end stops as candidates for the selection stage referring to section~\ref{subsec:threeConditions}.  
	%	The selection stage then applies the state-of-the-art selection algorithm~\cite{tong2018unified} to find the optimal vehicle and positions to insert the new request. Specifically, 
	Let the sets of vehicles corresponding to the segments and ending stops 
	returned by the four queries above (after filtering) be 
	$O_1$, $O_2$, $O_3$ and $O_4$, respectively.
	The set of vehicles satisfying the three pruning cases are:  $F_1 = O_1 \cap O_2$ (insert-insert);
	$F_2 = O_1 \cap O_4$ (insert-append); $F_3=O_3$ (append-append).
	The union of these three sets, $F=F_1 \cup F_2 \cup F_3$, is returned as the candidate vehicles.

	\textbf{Processing empty vehicles.}
	\label{subsec:pruneEmpty}
	%	As our pruning strategy is mainly focus on non-empty vehicles, we also detail the algorithm we applied to prune empty vehicles.
	Empty vehicles do not have designated trip schedules yet. 
	We only need to check whether they are in the waiting circle of the new request. This can be done by a range query over all empty vehicles using 
	the  waiting circle as the query range.
	% 	(e.g., using an existing algorithm\cite{abeywickrama2016k},\cite{shen2017v},~\cite{papadias2003query}).
	
	%how you solve the problem
	Since our objective is to minimize the system-wide travel time, the optimal empty vehicle is just the nearest one. We thus take a step further and  directly compute the optimal 
	empty vehicle with a network nearest neighbor algorithm named \emph{IER}~\cite{papadias2003query} which has been shown to be highly efficient~\cite{abeywickrama2016k} (other network nearest neighbor algorithms may also apply). 
	
	\section{The GeoPrune algorithm}
	\label{sec:algorithm}
	%The problem of finding possible vehicles is reduced to finding  vehicles satisfying the conditions.
	%	 described in Lemma~\ref{lemma:insert_insert_condition}, Lemma~\ref{lemma:insert_append_condition} and Lemma~\ref{lemma:append_append_condition}.
	Next, we describe our algorithms to handle pruning using the pruning rules described in the previous section, including \emph{pruning}, \emph{match update}, and \emph{move update} algorithms.

	\begin{algorithm}[t]
		
		\DontPrintSemicolon 
		\KwIn{A new trip request $r_n$}
		\KwOut{a set of possible vehicles to serve $r_n$}
		
		\tcp{Pruning stage}
		$r_n.wc$ = the waiting circle of $r_n$\;
		$r_n.rd$ = the detour ellipse of $r_n$\;
		
		$Q_1 \leftarrow T_{seg}.pointQuery(r_n.s)$\;
		$Q_2 \leftarrow T_{seg}.pointQuery(r_n.e)$\;
		$Q_3 \leftarrow T_{end}.rangeQuery(r_n.wc)$\;
		$Q_4 \leftarrow T_{end}.rangeQuery(r_n.rd)$\;
		
		\For{ an element in $Q_1$, $Q_2$, $Q_3$, and  $Q_4$}{
			\If{the time or capacity constraint is violated}{
				remove the element\;
			}
		}
		
		Record the corresponding vehicles of the elements in $Q_1$, $Q_2$, $Q_3$, $Q_4$
		in $O_1$, $O_2$, $O_3$, $O_4$. 
		
		$F, F_1, F_2, F_3 \leftarrow \emptyset$\;
		$F_1 \leftarrow O_1 \cap O_2$\tcp*{insert-insert case}
		$F_2 \leftarrow O_1 \cap O_4$\tcp*{insert-append case}
		$F_3 \leftarrow O_3$\tcp*{append-append case}
		$F \leftarrow F_1 \cup F_2 \cup F_3 $\;
		
		\Return F

		\caption{Prune non-empty vehicles}
		\label{algo:insertion-prune}
	\end{algorithm}   
	\label{subsec:onlineAlgorithmSummary}

	\begin{algorithm}[t]
		\DontPrintSemicolon 
		\KwIn{A new trip request $r_n$ and the matched vehicle $c_i$}
		
		\If{$c_i$  empty }{
			$T_{ev}.remove(c_i)$\;
		}\Else{
			\For{$segment$ in the trip schedule of $c_i$ }{
				remove the ellipse of $segment$ from $T_{seg}$\;
			}
			$T_{end}$.remove(ending stop of $c_i$)\;
		}
		add $r_n.s$ and $r_n.e$ to the trip schedule of $c_i$\;
		\For{$segment$ in the trip schedule of $c_i$}{
			compute the detour ellipse of $segment$\;
			insert the ellipse of $segment$ into $T_{seg}$\;
		}    	
		$T_{end}$.insert(the end stop of $c_i$)\;
		
		%		$matchVehicle.r_n \leftarrow v_n$
		\caption{Update index - match}
		\label{algo:insertion-update-match}
	\end{algorithm}
	
	\begin{algorithm}[t]
		\DontPrintSemicolon 
		\KwIn{A moving vehicle $c_i$}
		
		$P \leftarrow$ obsolete segments of $c_i$\;
		
		\For{$p \in P$}{
			$T_{seg}.remove(p$)\;
		}
		\If{$c_i$ reaches the ending stop}{
			%			$T_{end}.remove(\text{ending stop of } c_i$)\;
			$T_{end}.remove$({ending stop of} $c_i$)\;
			$T_{ev}.insert(c_i)$\;
		}
		\caption{Update index - move}
		\label{algo:insertion-update-move}
	\end{algorithm}

	\textbf{Pruning.}
	Algorithm~\ref{algo:insertion-prune} summarizes the pruning algorithm.    
	For every new trip request $r_n$, we first compute the waiting circle and the detour ellipse for $r_n$ (line 1 to line 2).
	Then, we apply four queries to compute four sets $Q_1$, $Q_2$, $Q_3$, 
	and $Q_4$ as described in Section~\ref{sec:applyPruning} (line 3 to line 6).
	Each returned segment and ending stop  is checked against 
	the capacity and time constraints as described in Section~\ref{sec:applyPruning} (line 7 to line 9).
	The vehicles corresponding to the remaining segments and ending stops are  fetched as candidates (line 11 to line 16).

	% 	Different selection algorithms can be applied on the set of vehicles returned by the pruning algorithm (and the set of candidate empty vehicles) to compute the optimal vehicle based on different optimization objectives.
	%	A state-of-the-art selection algorithm for minimizing the overall increased distance~\cite{tong2018unified} is used in our experiments.
	
	%how to update the index if a new request is matched to a vehicle
	
	\textbf{Match update.}
	If a new trip request $r_n$ is matched with a vehicle $c_i$,
	we update the data structures as summarized in 
	Algorithm~\ref{algo:insertion-update-match}.
	If $c_i$ is an empty vehicle, the vehicle now becomes occupied.
	We remove the vehicle from an R-tree denoted by $T_{ev}$ that stores the empty vehicles for fast nearest empty vehicle computation (line 1 to line 2). Otherwise, we first remove the segments and the ending stop of $c_i$ from the two R-trees $T_{seg}$ and $T_{end}$ (line 4 to line 6).
	Then, we add the new trip request to the trip schedule of the matched vehicle $c_i$ (line 7).
	%	 {\color{red}(JQ: how is these computed?)}.
	Based on the updated vehicle schedule, we recompute the detour ellipses and insert them into $T_{seg}$ (line 8 to line 10). 
	The new ending stop is also inserted into $T_{end}$ (line 11).

	%move update
	
	\textbf{Move update.}
	We also update the data structures when the vehicles move.
	Algorithm~\ref{algo:insertion-update-move} summarizes this update procedure. 
	At every time point, we check if a vehicle has reached a stop in its trip schedule. If yes, the segments before the reached stop become obsolete
	and their detour ellipses are removed from  $T_{seq}$ (line 1 to line 3).
	When the vehicle reaches its ending stop, the vehicle becomes empty.
	We remove it from $T_{end}$ and insert it into $T_{ev}$ (line 4 to line 6).

	\subsection{Algorithm Complexity}
	We measure the complexity of our algorithm by the following two parameters $|S|$ and $C$ as they are key to our algorithm: $|S|$ is the maximum number of stops of the vehicle schedules (which is constrained by the vehicle capacity and is a small constant) and $|C|$ is the number of vehicles. We note, however, that instead of using $|S|$, we can use the number of requests $|R|$ instead because there is a linear relationship: $|S||C| \propto |R|$.
	
	\textbf{Pruning.} 
	It takes $O(1)$ time to compute the waiting circle and the detour ellipse of a new request.
	There are at most $|S||C|$ MBRs in $T_{seg}$ and $|C|$ entries in $T_{end}$.
	The point query on $T_{seg}$ returns at most $|S||C|$ results and hence the complexity is $O(\sqrt{|S||C|} + |S||C|)$~\cite{arge2008priority}\cite{manolopoulos2010r}.
	At most $|C|$ results will be returned from the range query on $T_{end}$ and the complexity is $O(\sqrt{|C|} + |C|)$.
	The time complexity of the queries on R-trees is thus $O(\sqrt{|S||C|} + |S||C|)$. 
	% {\color{red}(JQ: this is not true)}.
	Checking the time and capacity constraints takes $O(|S||C| + |C|)$ time.
	
	It takes $O(|S||C|+|C|)$ time to retrieve the corresponding vehicles and at most $|C|$ vehicles will be returned in each set after sorting ( $O(|S||C|\log(|S||C|))$ time).
	The set intersection hence takes O($|C|$) time~\cite{ding2011fast}.
	The overall time complexity of  GeoPrune  is thus $O(\sqrt{|S||C|} +  |S||C|\log(|S||C|))$.
	
	\textbf{Update.}
	When a new trip request is assigned to a vehicle $c_i$, it takes $O(\log|C|)$ time to delete $c_i$ from $T_{ev}$ if $c_i$ was empty, $O(|S|\log(|S||C|))$ time to remove invalid segments from $T_{seg}$, and $O(\log|C|)$ time to remove the obsolete record in $T_{end}$~\cite{arge2008priority}\cite{manolopoulos2010r}.
	For the new schedule of $c_i$, there are at most $|S|$ new segments.
	It thus takes at $O(|S|)$ time to compute the new detour ellipses for these new segments and $O(|S|\log^2(|S||C|))$ time to insert the ellipses to $T_{seg}$~\cite{arge2008priority}\cite{manolopoulos2010r}.
	The overall update time for a new request is $O(|S|\log^2(|S||C|))$.
	
	When a vehicle moves, the number of obsolete scheduled stops is at most $|S|$.
	Therefore, the time to remove obsolete vehicle ellipses from $T_{seg}$ is $O(|S|\log(|S||C|))$.
	At most $|C|$ vehicles change their status while moving, hence the time to update $T_{end}$ and $T_{ev}$ is at most $O(|C|\log^2|C|)$. 
	%	{\color{red}(JQ: Not sure if this is true)}. 
	Therefore, the overall update time for moving all vehicles in a time slot is $O(|S|\log(|S||C|) + |C|\log^2|C|)$.

	\section{Experiments}
	\label{sec:experiments}
	
	In this section, we study the empirical performance of our \ouralgorithm algorithm and compare it against the state-of-the-art pruning algorithms. 
	All algorithms are implemented in C++ and run on a 64-bit virtual node with a 1.8 GHz CPU and 128 GB memory from
	an academic computing cloud (Nectar~\cite{nectar}) running on OpenStack.
	The travel distance between points is computed by a shortest path algorithm on road networks~\cite{akiba2014fast}.

	\subsection{Experimental Setup}

	\textbf{Dataset.}
	We perform the experiments on real-world road network datasets, \emph{New York City} (NYC) and \emph{Chengdu} (CD, a capital city in China).
	These two road networks are extracted from OpenStreetMap~\cite{osm}.
	We transform the coordinates to Universal Transverse Mercator (UTM) coordinates to support pruning based on Euclidean distance.
	We use real-world taxi request data on the two road networks~\cite{NYTaxiData}\cite{ChengduTaxiData} and remove unrealistic trip requests, i.e.,  duration time less than 10 seconds or longer than 6 hours. 
	There are 448,128 taxi requests (April 09, 2016) for NYC and 259,423 (November 18, 2016) taxi requests for Chengdu.
	Every taxi request consists of a source location, a destination location and an issue time. We map the locations to their respective nearest road network vertices.
	Similar to previous studies~\cite{chen2018price}\cite{huang2014large}, we assume the number of passengers to be one per request. 
	% 	{\color{red}(RB: Why?)}

	\begin{table}[t]
		\setlength{\tabcolsep}{5pt}
		\caption{Datasets}
		\label{tab:roadNetworkDatasets}
		\centering
		\begin{tabular}{l r r r}
			\hline
			\textbf{Name}  & \textbf{\# vertices} & \textbf{\# edges} & \textbf{\# requests}   \\ \hline
			NYC   & 166,296       & 405,460     &    448,128   \\ 
			CD & 254,423 & 467,773 & 259,343  \\ \hline
		\end{tabular}
	\end{table}
	
	\begin{table}[t]
		\setlength{\tabcolsep}{12pt}
		\caption{Experiment parameters}
		\label{tab:expSettings}
		\centering
		\begin{tabular}{l r r }
			\hline
			\textbf{Parameters}  & \textbf{Values} & \textbf{Default}    \\ \hline
			Number of vehicles  & $2^{10}$ to $2^{17}$     & $2^{13}$         \\ 
						Capacity  & 2, 4, 6, 8, 10        & 4        \\ 
			Waiting time (min)  & 2, 4, 6, 8, 10      & 4  \\ 
			Detour ratio    & 0.2, 0.4, 0.6, 0.8     & 0.2     \\ 
			Number of requests & \makecell[tr]{20k to 100k} & 60k \\ 
			\makecell[tl]{Frequency of requests \\ (\# requests/second)}    &    1  to 10  &  \makecell[tr]{refer to\\ table~\ref{tab:roadNetworkDatasets}}\\ 
			\makecell[tl]{Transforming speed (km/h)} & \makecell[tr]{20 to 140} & 48 \\ \hline 
		\end{tabular}\\
	\end{table}
	\textbf{Implementation.}
	We run simulations following the settings of previous studies~\cite{tong2018unified},\cite{huang2014large}.
	The initial positions of vehicles are randomly selected from the road network vertices.
	Non-empty vehicles move on the road network following their trip schedules (shortest paths) while empty vehicles stay at their latest drop-off location until they are committed to new requests.
	Similar to previous studies~\cite{huang2014large},\cite{chen2018price}, we use a constant travel speed for all edges in the road network (48km/h).
	For the selection step, we apply the state-of-the-art insertion  algorithm~\cite{tong2018unified} to minimize the total travel distance for all methods compared. 
	%	More details of this algorithm are presented in Section~\ref{sec:relatedWork}.
	%	This algorithm has two sub-steps. The first sub-step checks the insertions using Euclidean distance and ranks vehicles based on their increased Euclidean distance, which provides lower bound of the increased road network.
	%%	{\color{red}(JQ: Lower bound of what? How to compute?)}.  
	%	The second sub-step calculates the increased road network distance for each vehicle following the ranked order.
	%	If a vehicle is found to have the increased road network distance larger than the increased Euclidean distance of all remaining vehicles (lower bound), the checking terminates and the found optimal vehicle is returned.
	%	{\color{red}(JQ: Check for what? How?)}. 
	If no satisfying vehicle is found for a new trip request, the trip request is ignored.
	
	Table~\ref{tab:expSettings} summarizes the parameters used in our experiments. By default,
	we simulate ride-sharing on $2^{13}$ vehicles with a capacity  of 4 and 60,000 trip requests, and the maximum waiting time and the detour ratio are 4\,min and 0.2.

	\textbf{Baselines.}
	We compare \ouralgorithm against the following state-of-the-art pruning algorithms.
	The parameter values are set according to the numbers reported in the original papers.
	\begin{itemize}
		\item 	\textbf{GreedyGrids~\cite{tong2018unified}.} This algorithm retrieves  all vehicles that are currently in the nearby grid cells.
		%	 The grid length is set as 1\,km.
		
		\item \textbf{Tshare~\cite{ma2013t}.} This is the single-side search algorithm of Tshare~\cite{ma2013t}. The grid cell lengths of both GreedyGrids and Tshare are set to 1\,km~\cite{tong2018unified}.
		
		\item \textbf{Xhare~\cite{thangaraj2017xhare}.} This algorithm only checks the non-empty vehicles. 
		% 		{\color{red}JQ: You just mentioned that ``If no satisfying vehicle is found for a new trip request, even though our proposed \ouralgorithm algorithm can still assign it to the nearest empty vehicle, we ignore the trip request to be consistent with the baseline methods.'' Is this consistent?}
		To make it applicable for finding empty vehicles, we prune empty vehicles in Xhare using the same algorithm applied in our method (see Section~\ref{sec:applyPruning}). We optimize the update process by precomputing the pair-wise distance between clusters.
		The landmark size is set to 16,000 for NYC and 23,000 for Chengdu, and the grid cell length is set to 10\,m.
		The maximum distance between landmarks in a cluster is set to 1\,km.
		%	    {\color{red}JQ: need to explain a general idea of these algorithm and their parameter settings (if any); or to explain them in the related work section and refer to the related work section here.}
	\end{itemize}

	%todo::parameters
	
	%we use nn algorithm for empty vehicles, and we precompute pairwise distance between clusters, and show the paramter.
	\begin{figure}[t]
		\centering
		\begin{subfigure}[t]{0.24\textwidth}  
			\centering 
			\includegraphics[width=1.0\textwidth]{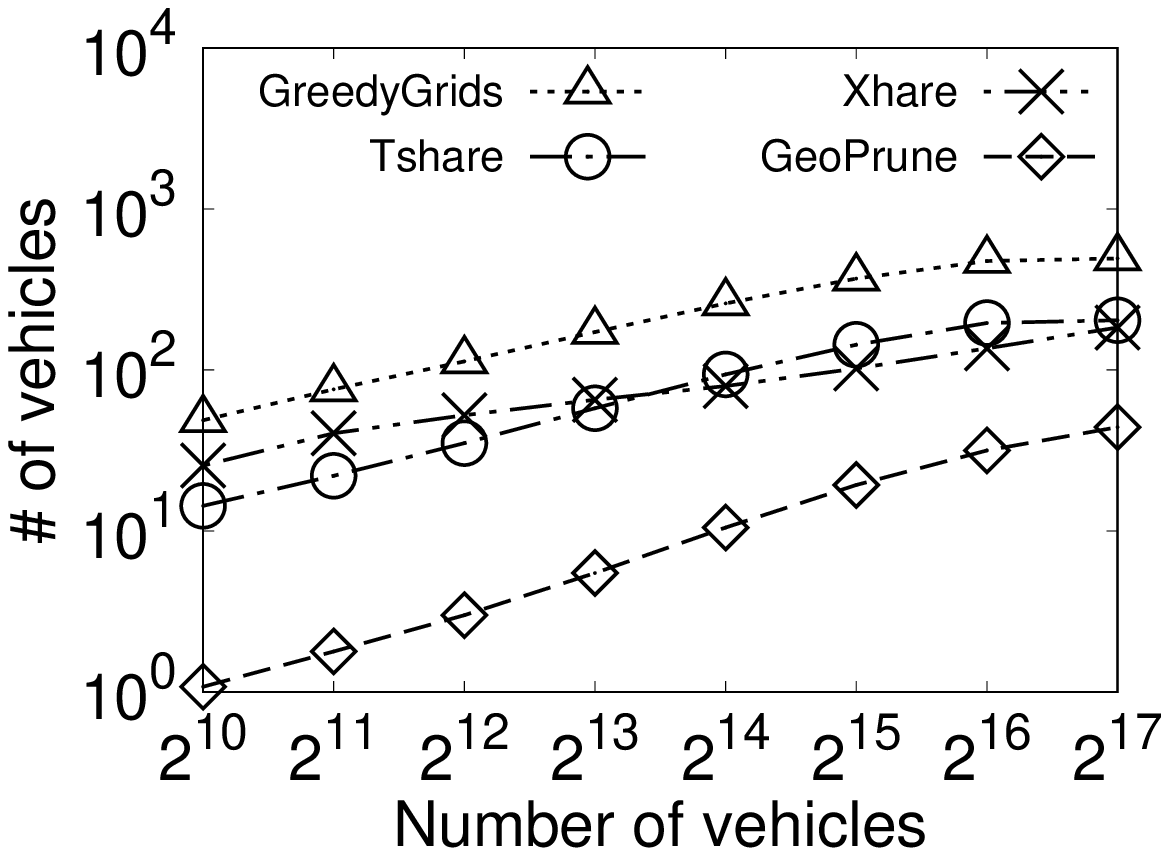}
			\vspace{-6mm}
			\caption{\# remaining  vehicles (NYC).}%
			\label{fig:NYVarynumVehicles_numPrune}
		\end{subfigure}%\hspace{1mm}%
		\begin{subfigure}[t]{0.24\textwidth}  
			\centering 
			\includegraphics[width=1.0\textwidth]{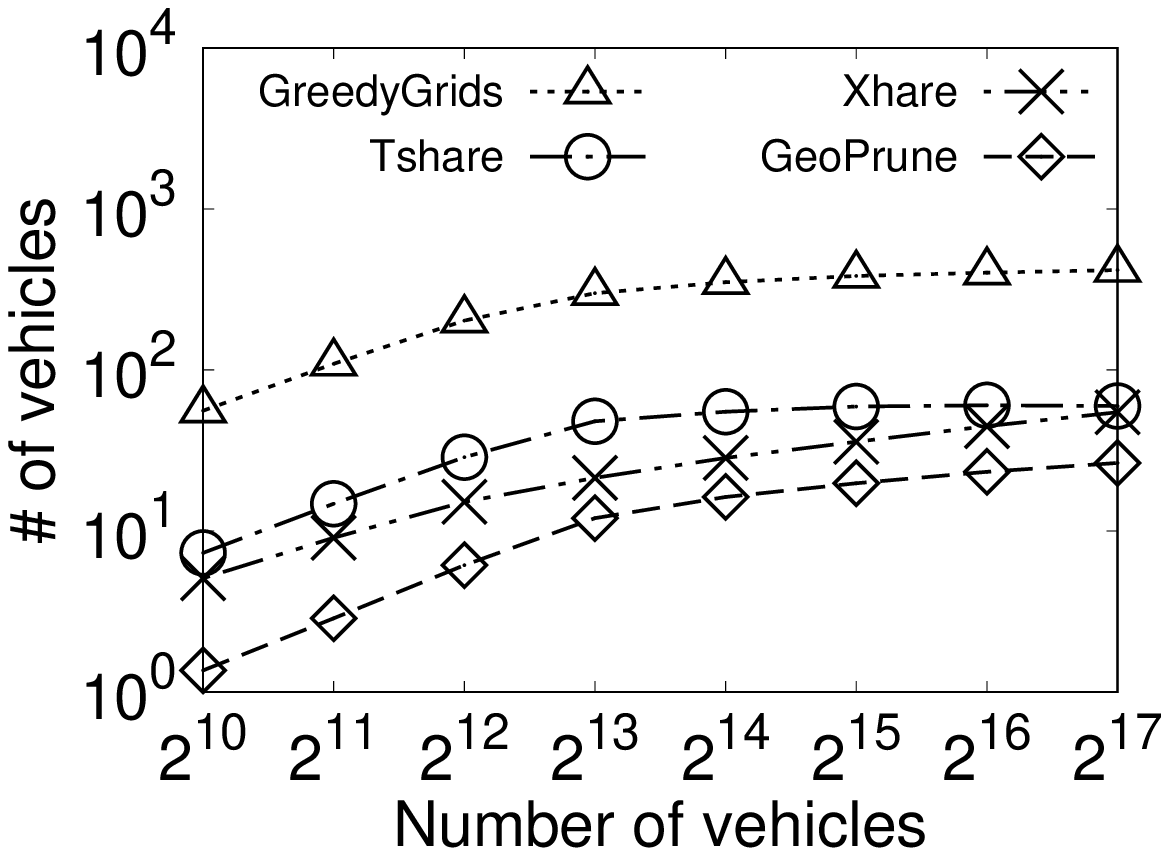}
			\vspace{-6mm}
			\caption{\# remaining vehicles (CD).}%
			\label{fig:chengduVarynumVehicles_numPrune}
		\end{subfigure}%\hspace{1mm}%
		\vspace{3mm}
		\begin{subfigure}[t]{0.24\textwidth}  
			\centering 
			\includegraphics[width=1.0\textwidth]{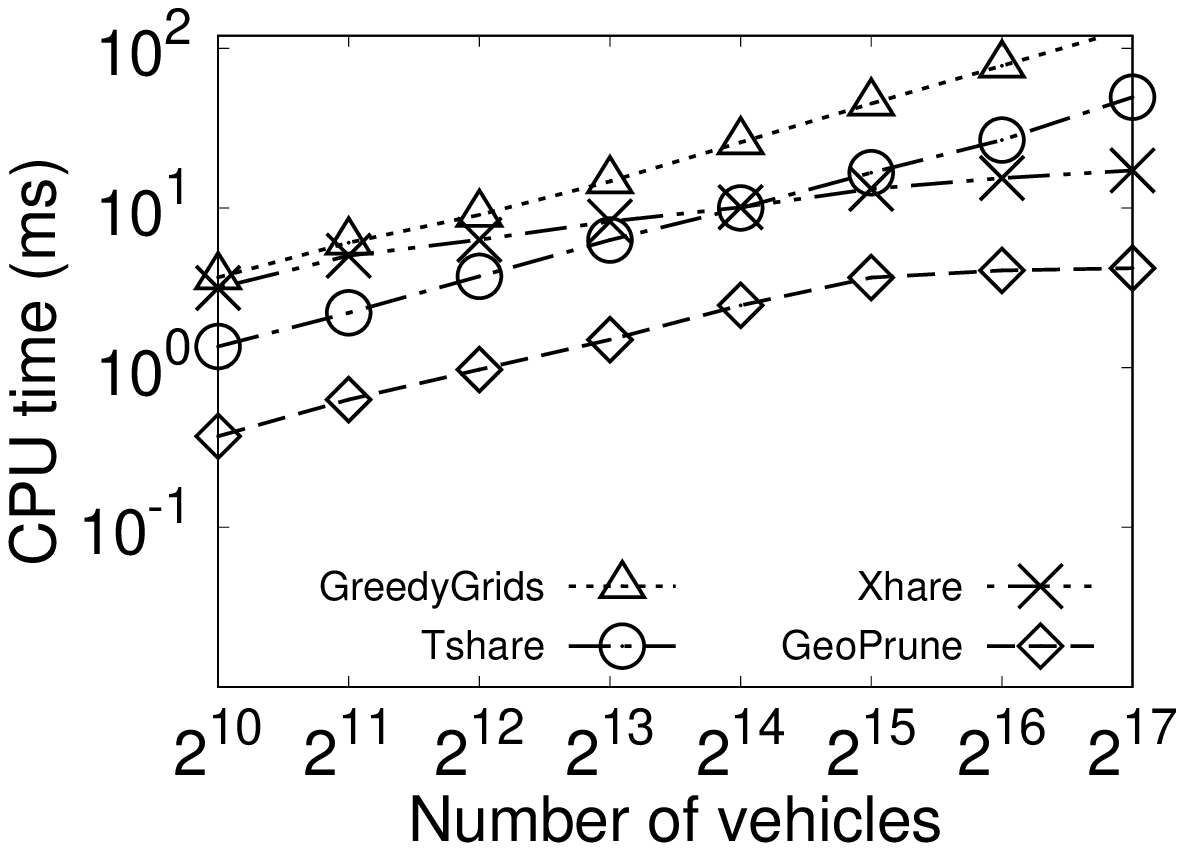}
			\vspace{-6mm}
			\caption{Overall match time (NYC).}%
			\label{fig:NYVarynumVehicles_matchTime}
		\end{subfigure}%\hspace{1mm}%
		\begin{subfigure}[t]{0.24\textwidth}  
			\centering 
			\includegraphics[width=1.0\textwidth]{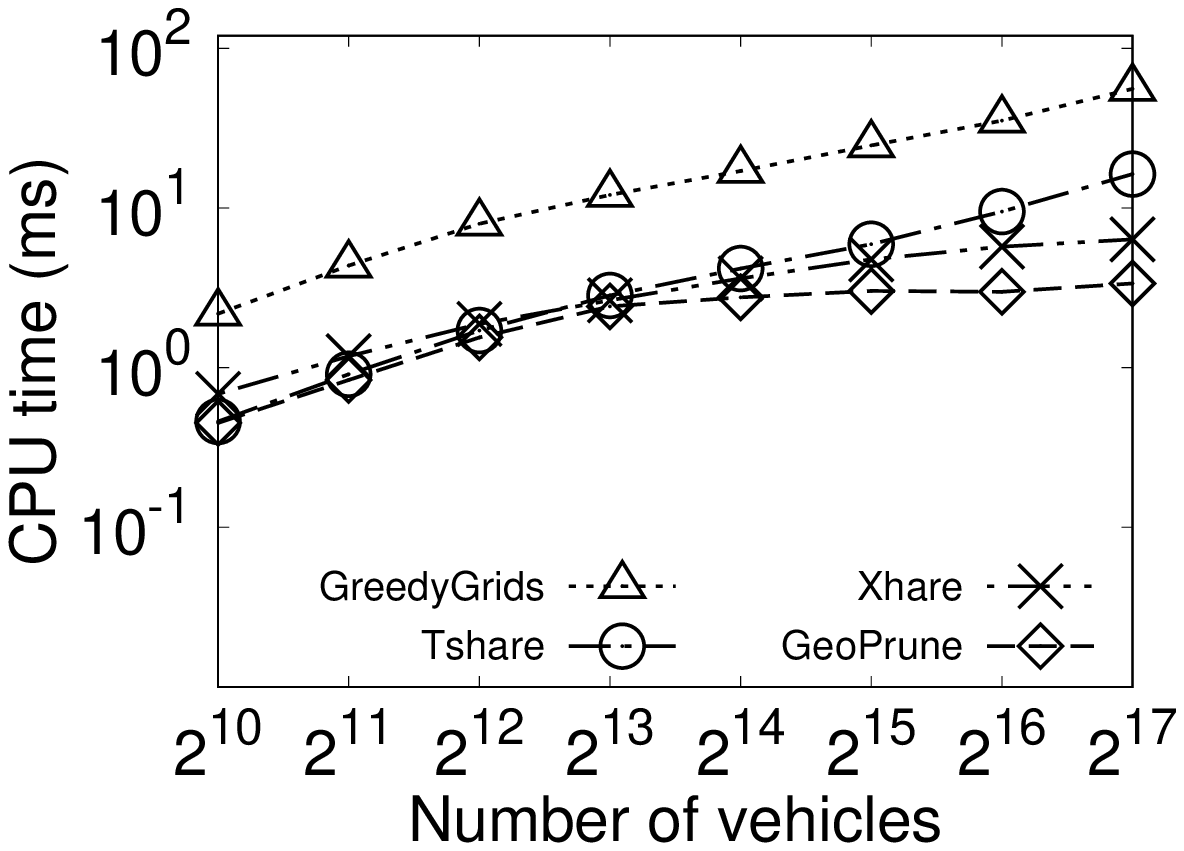}
			\vspace{-6mm}
			\caption{Overall match time (CD).}%
			\label{fig:chengduVarynumVehicles_matchTime}
		\end{subfigure}%\hspace{1mm}%
		\vspace{3mm}
		\begin{subfigure}[t]{0.24\textwidth}  
			\centering 
			\includegraphics[width=\textwidth]{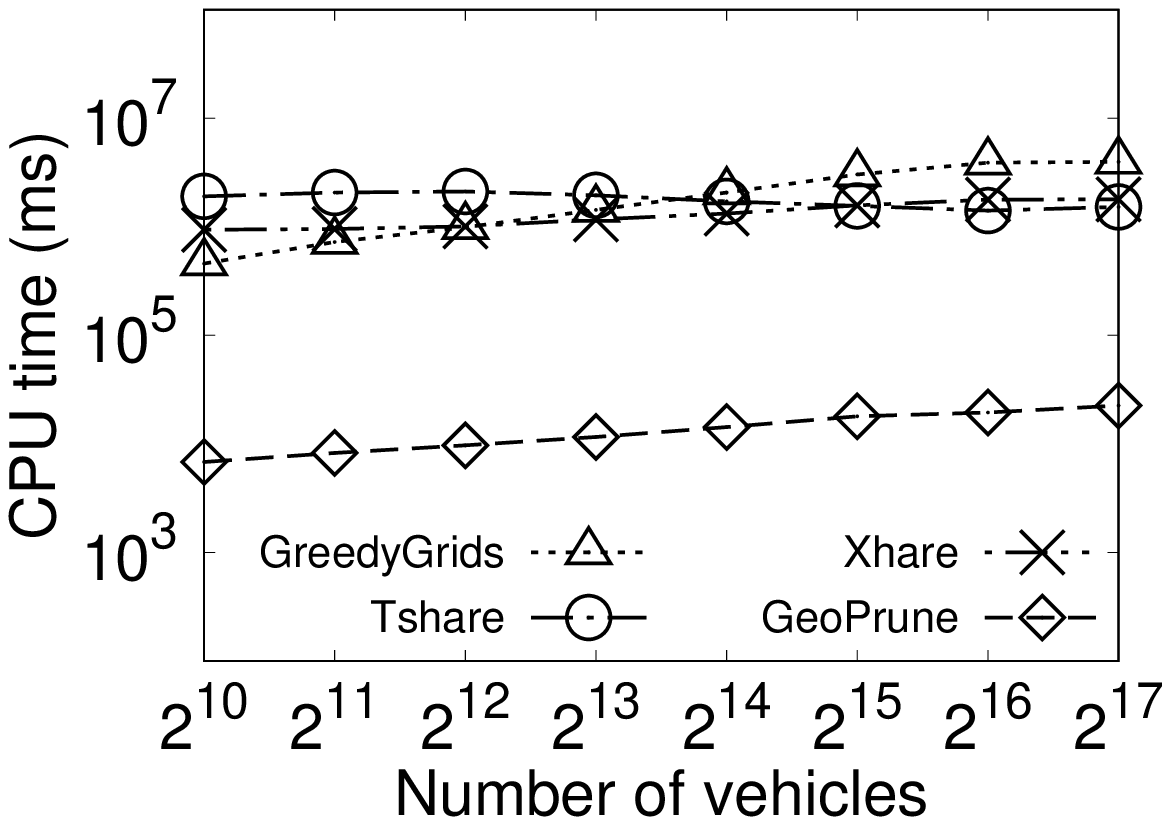}
			\vspace{-6mm}
			\caption{Overall update time (NYC).}%
			\label{fig:NYVarynumVehicles_overallUpdateTime}
		\end{subfigure}%\hspace{1mm}%
		\begin{subfigure}[t]{0.24\textwidth}  
			\centering 
			\includegraphics[width=\textwidth]{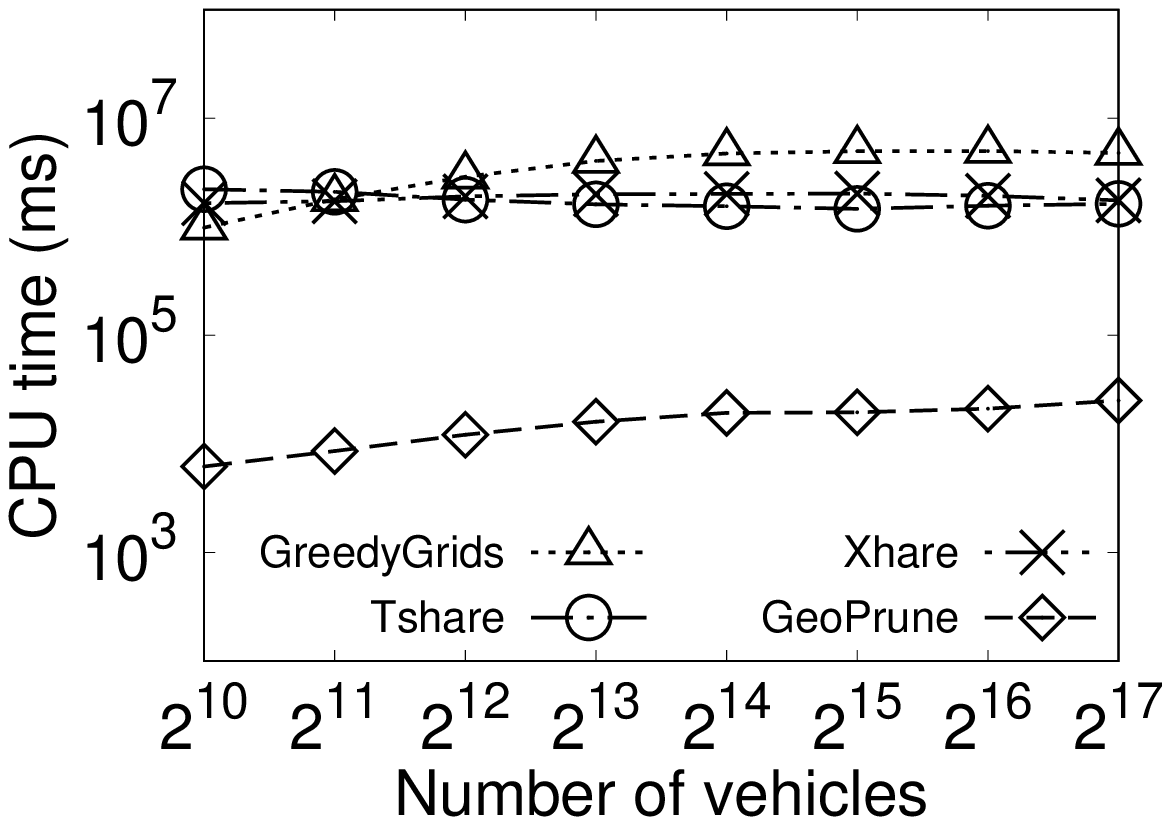}
			\vspace{-6mm}
			\caption{Overall update time (CD).}%
			\label{fig:chengduVarynumVehicles_overallUpdateTime}
		\end{subfigure}%\hspace{1mm}%
		\caption{Effect of the number of vehicles.}
		\label{fig:varyNumVehicles}
	\end{figure}
	
				\begin{figure}[t]
		\centering
		\begin{subfigure}[t]{0.24\textwidth}
			\centering
			\includegraphics[width=\textwidth]{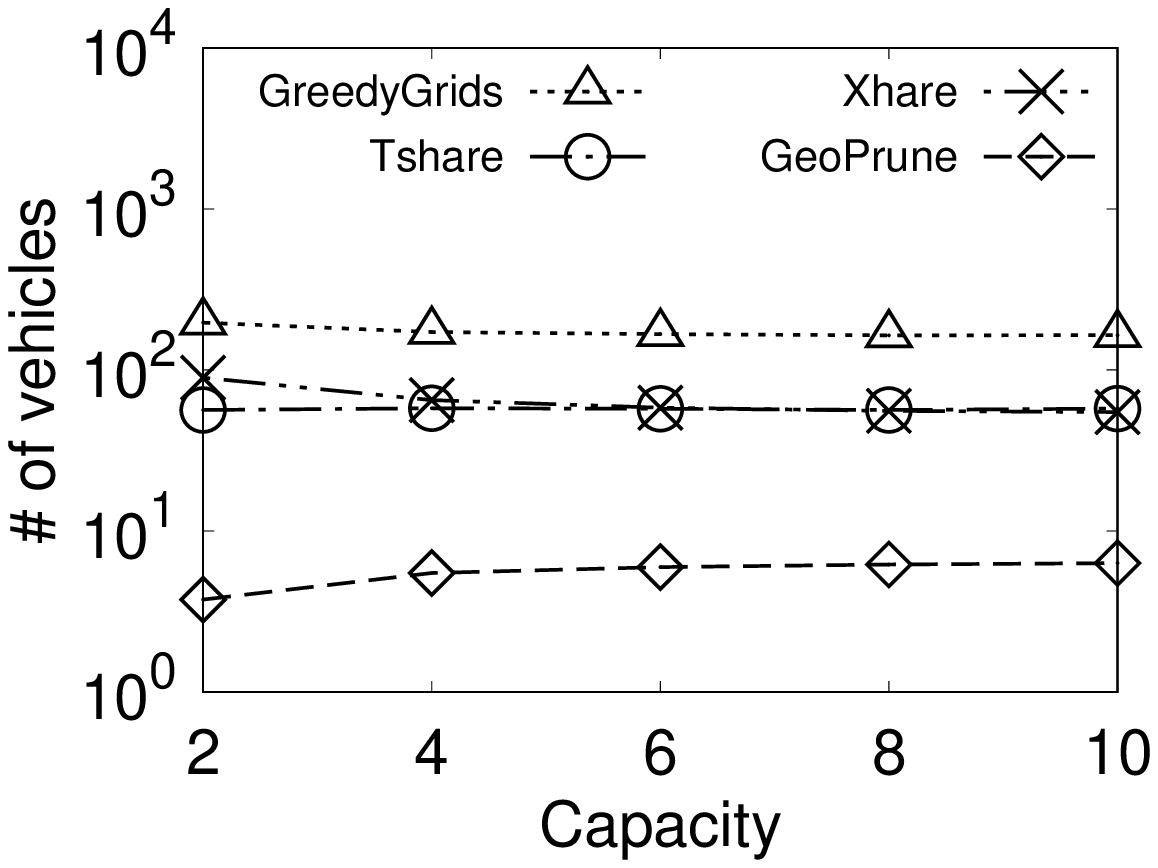}
			\vspace{-6mm}
			\caption{\# remaining vehicles (NYC).}%
			\label{fig:NYVaryCapacity_numPrune}
		\end{subfigure}%\hspace{1mm}%
		\begin{subfigure}[t]{0.24\textwidth}
			\centering
			\includegraphics[width=\textwidth]{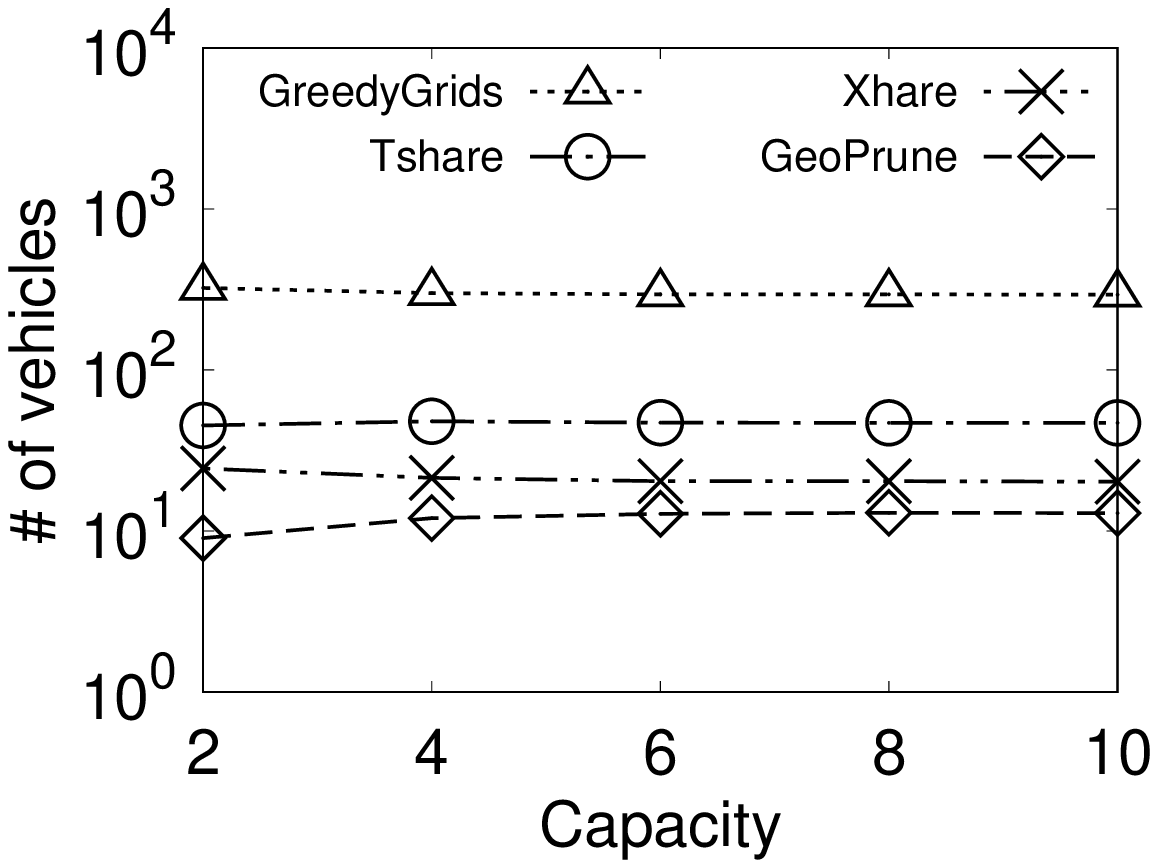}
			\vspace{-6mm}
			\caption{\# remaining vehicles (CD).}%
			\label{fig:ChengduVaryCapacity_numPrune}
		\end{subfigure}%\hspace{1mm}%
		\vspace{3mm}
		\begin{subfigure}[t]{0.24\textwidth}
			\centering
			\includegraphics[width=\textwidth]{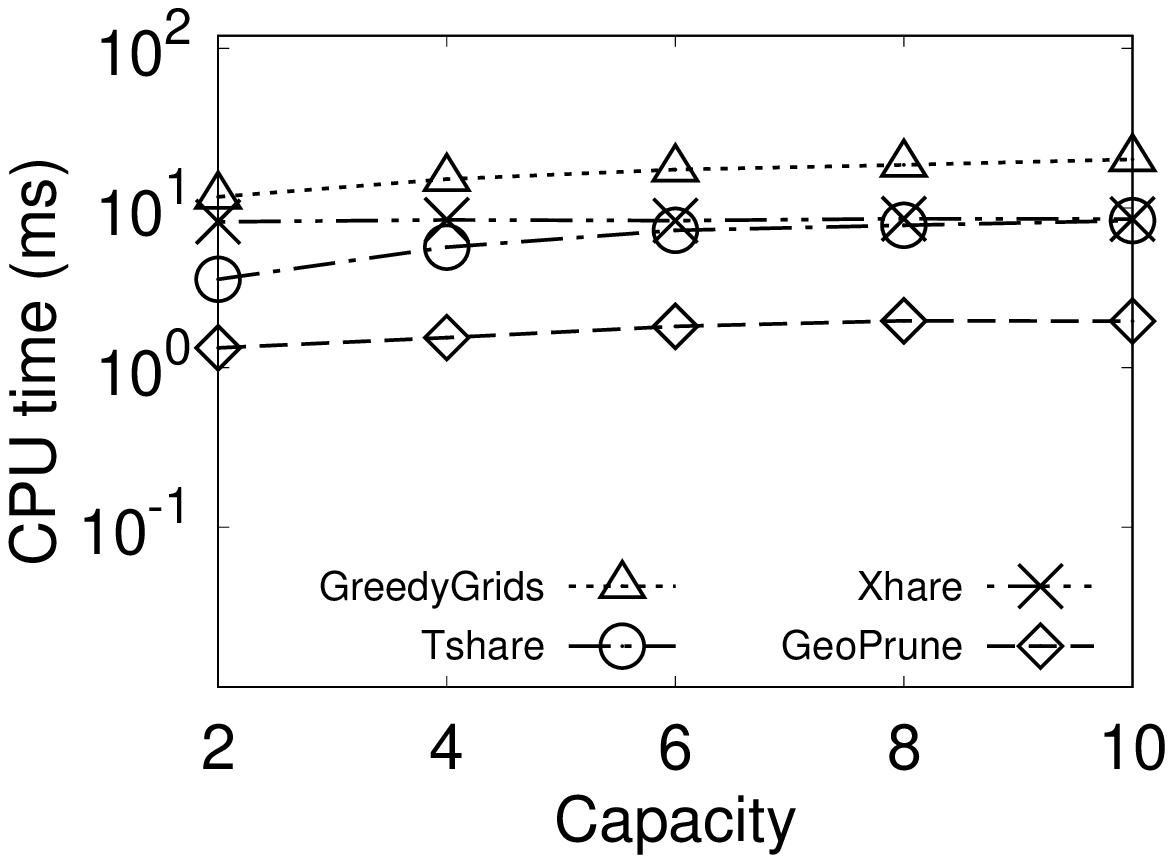}
			\vspace{-6mm}
			\caption{Overall match time (NYC).}%
			\label{fig:NYVaryCapacity_matchTime}
		\end{subfigure}%\hspace{1mm}%
		\begin{subfigure}[t]{0.24\textwidth}
			\centering
			\includegraphics[width=\textwidth]{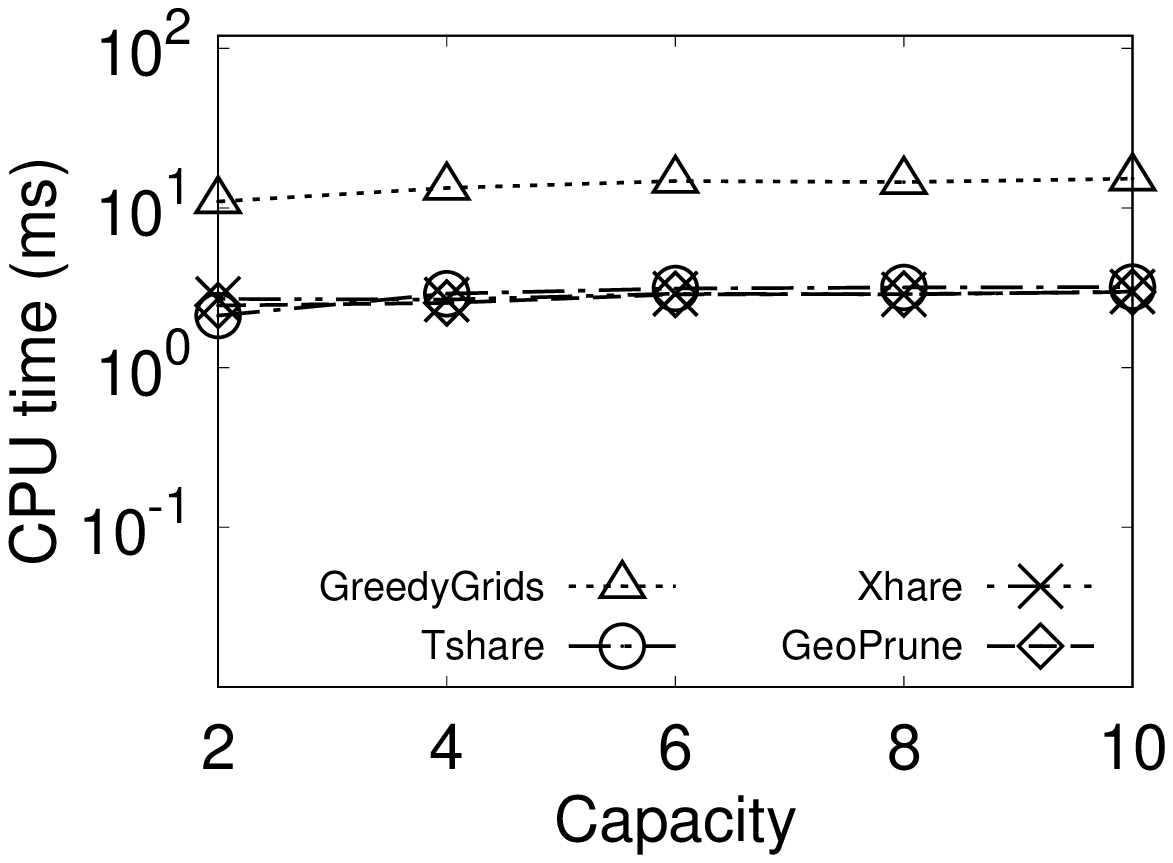}
			\vspace{-6mm}
			\caption{Overall match time (CD).}%
			\label{fig:ChengduVaryCapacity_matchTime}
		\end{subfigure}%\hspace{1mm}%
		\vspace{3mm}
		\begin{subfigure}[t]{0.24\textwidth}  
			\centering 
			\includegraphics[width=\textwidth]{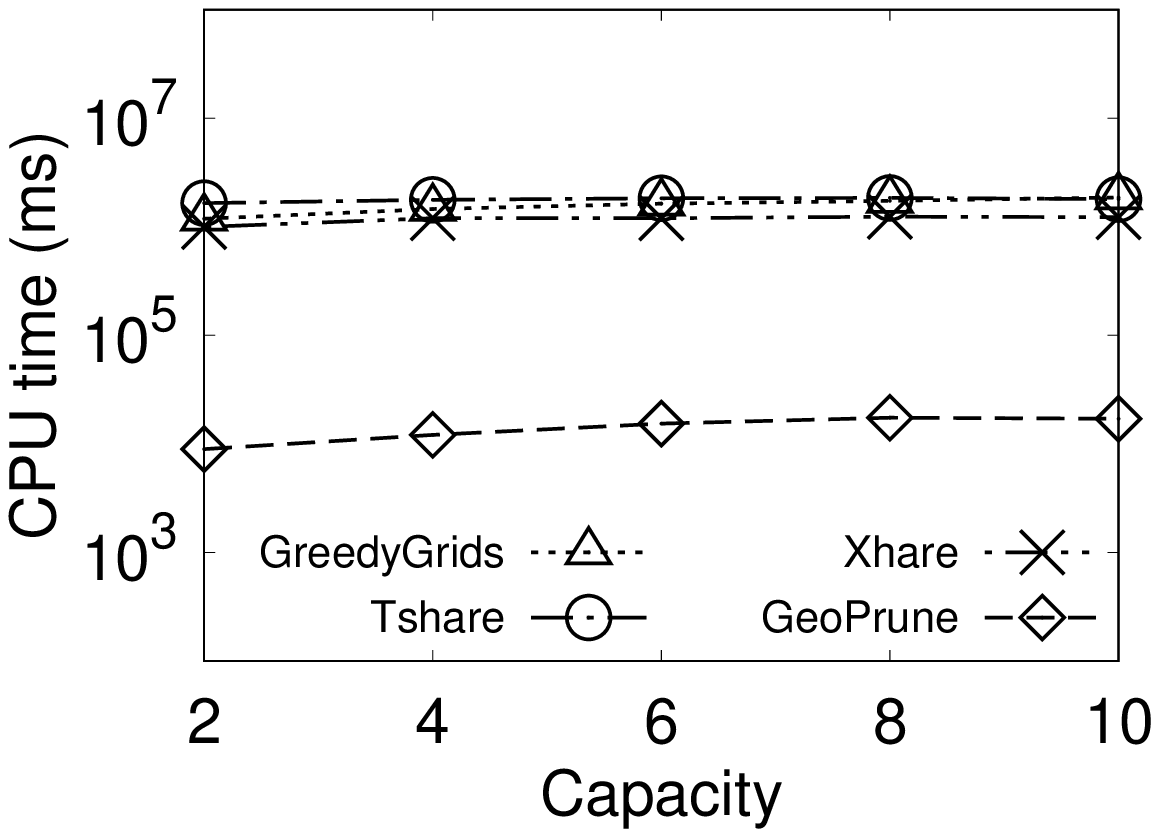}
			\vspace{-6mm}
			\caption{Overall update time (NYC).}%
			\label{fig:NYVaryCapacity_overallUpdateTime}
		\end{subfigure}%\hspace{5mm}%
		\begin{subfigure}[t]{0.24\textwidth}  
			\centering 
			\includegraphics[width=\textwidth]{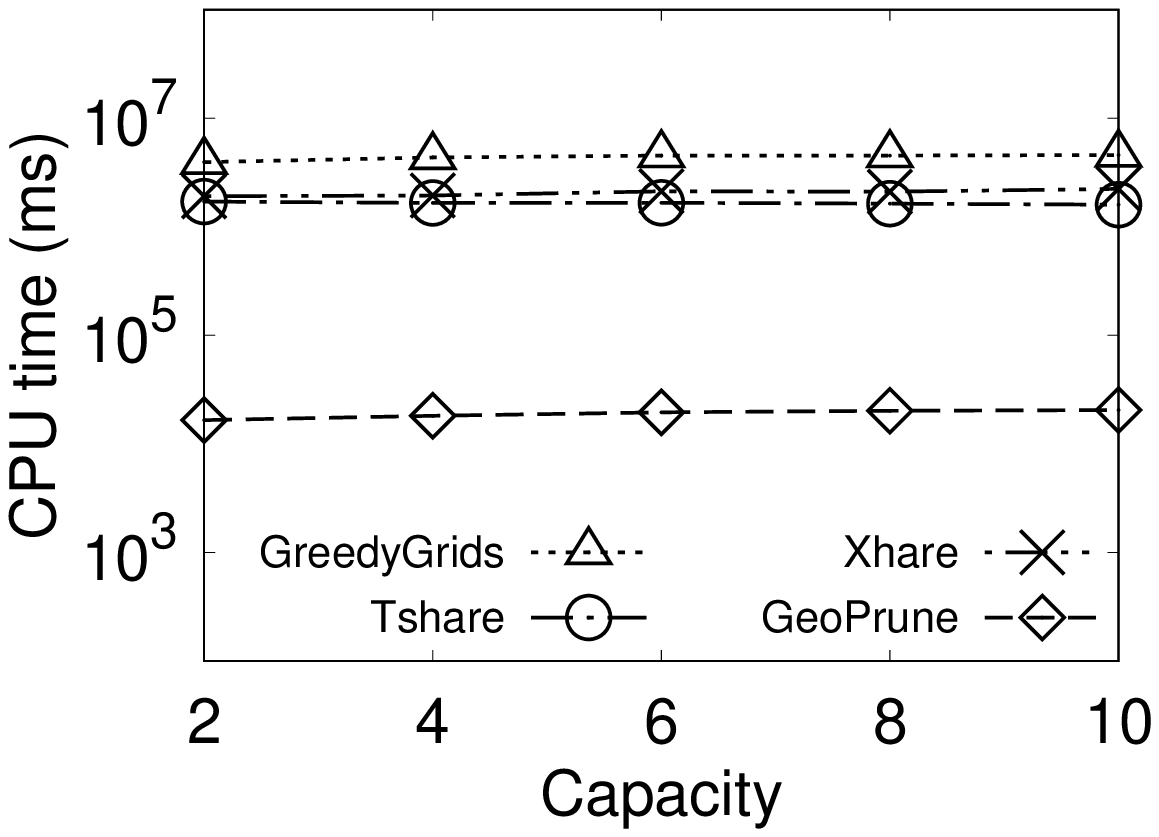}
			\vspace{-6mm}
			\caption{Overall update time (CD).}%
			\label{fig:ChengduVaryCapacity_overallUpdateTime}
		\end{subfigure}%\hspace{5mm}%
		\caption{Effect of the capacity. }
		\label{fig:varyCapacity}
	\end{figure}

	\textbf{Metrics.}
	We measure and report the following metrics:
	%	\textit{Search time.} Search time records the time for each algorithm to prune vehicles and return a set of candidates. A promising real-time ride-sharing algorithm should return the result in a short time.
	\begin{itemize}
		\item \textit{Number of remaining vehicles} -- the number of remaining candidate vehicles after the pruning.
		Note that \ouralgorithm prunes empty vehicles and non-empty vehicles separately with different criteria and such a scheme is applied on Xhare to make it applicable.  
		GreedyGrids and Tshare, however, process the two types of vehicles together and return both types after pruning.
		For consistency, we only compare the number of remaining non-empty vehicles.
		
		\item 
		\textit{Match time} -- the total running time of the matching process, including both pruning and selection time.
		
		\item 	\textit{Overall update time} -- the overall match update and move update time.
		
		\item 
		\textit{Memory consumption} -- the memory cost of the data structures of an algorithm.
		
	\end{itemize}

	%
	%	\textit{Match update time.} Match update time records time for each algorithm to update their index when a new trip request is assigned to a vehicle.
	%	
	%	\textit{Move update time.} Move update time records the total time to update the index when vehicles are moving in the street.
	%	
	
	\vspace{-1em}	
	\subsection{Experimental Results}
	
	\vspace{-0.4em}
	\subsubsection{Effect of the Number of Vehicles}
	
	\begin{figure}[t]
		\centering
		\begin{subfigure}[t]{0.24\textwidth}
			\centering
			\includegraphics[width=\textwidth]{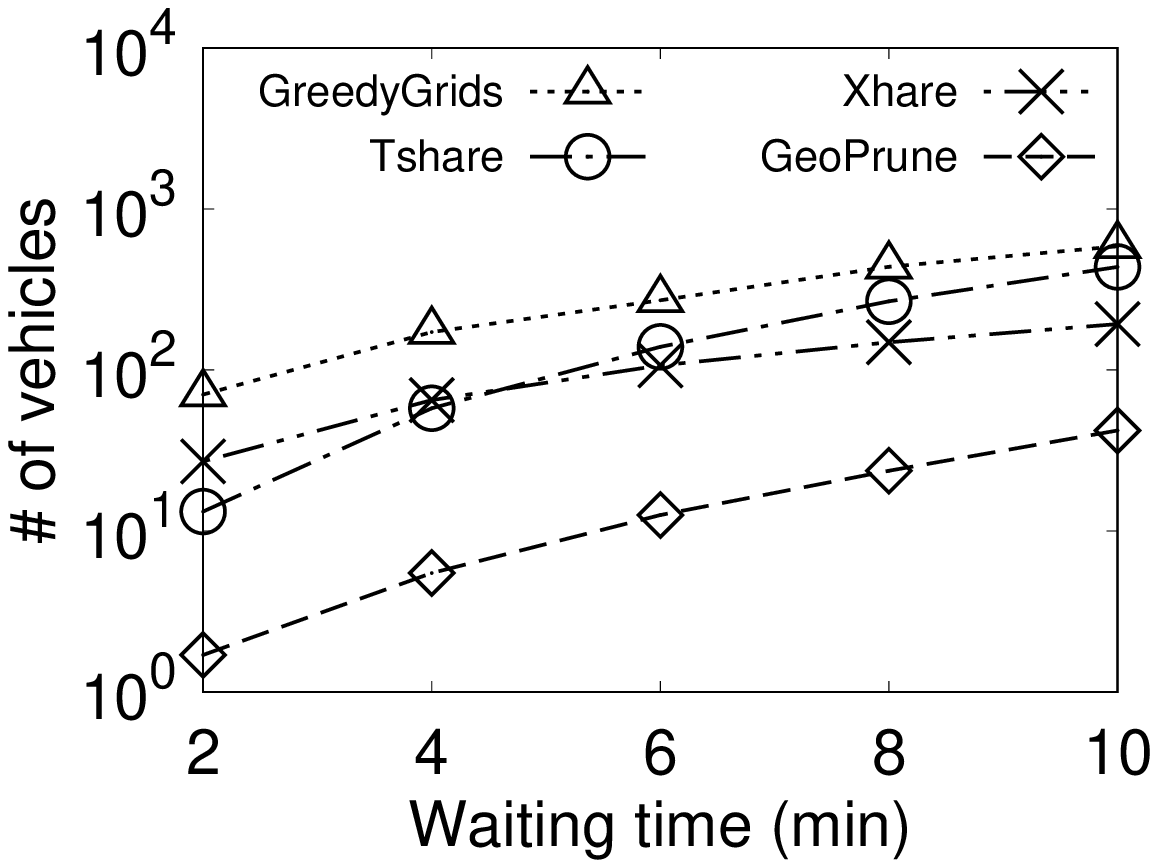}
			\vspace{-6mm}
			\caption{\# remaining vehicles (NYC).}%
			\label{fig:NYVaryWaitingTime_numPrune}
		\end{subfigure}%\hspace{1mm}%
		\begin{subfigure}[t]{0.24\textwidth}
			\centering
			\includegraphics[width=\textwidth]{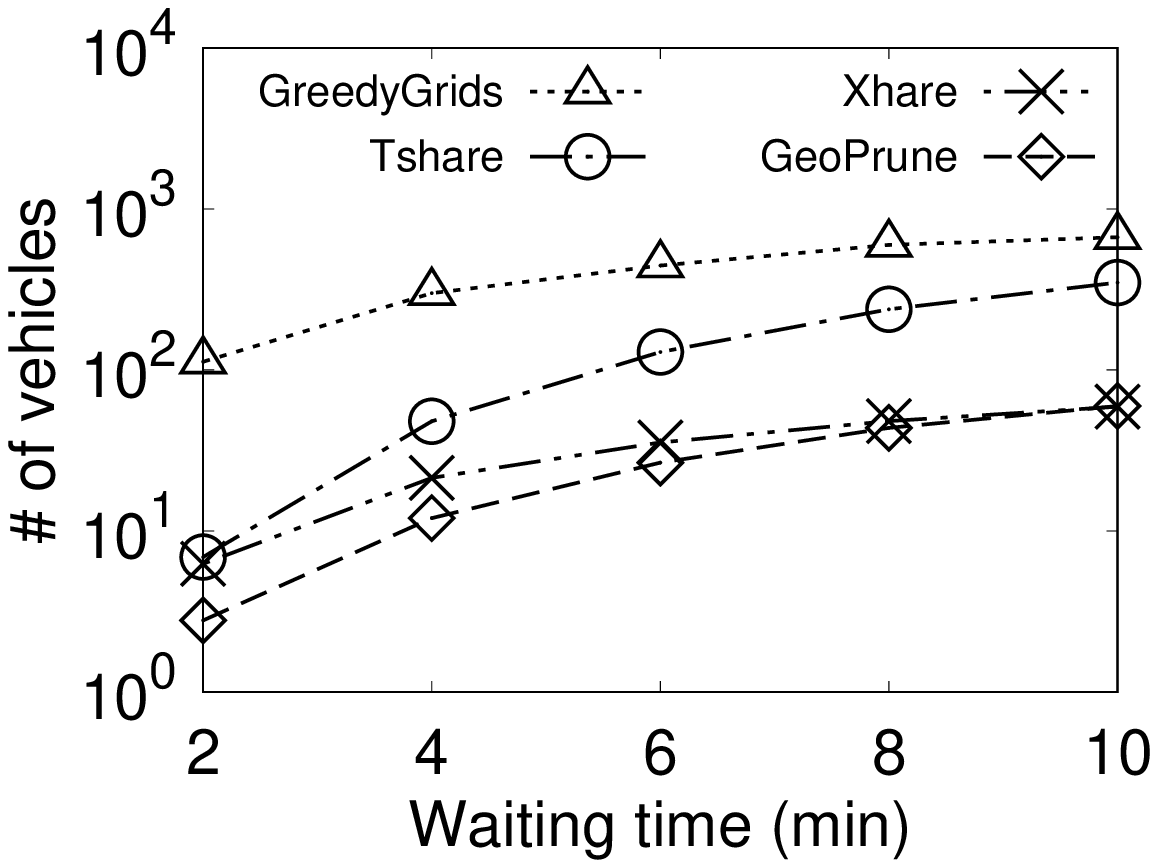}
			\vspace{-6mm}
			\caption{\# remaining vehicles (CD).}%
			\label{fig:ChengduVaryWaitingTime_numPrune}
		\end{subfigure}%\hspace{1mm}%
		\vspace{3mm}
		\begin{subfigure}[t]{0.24\textwidth}
			\centering
			\includegraphics[width=\textwidth]{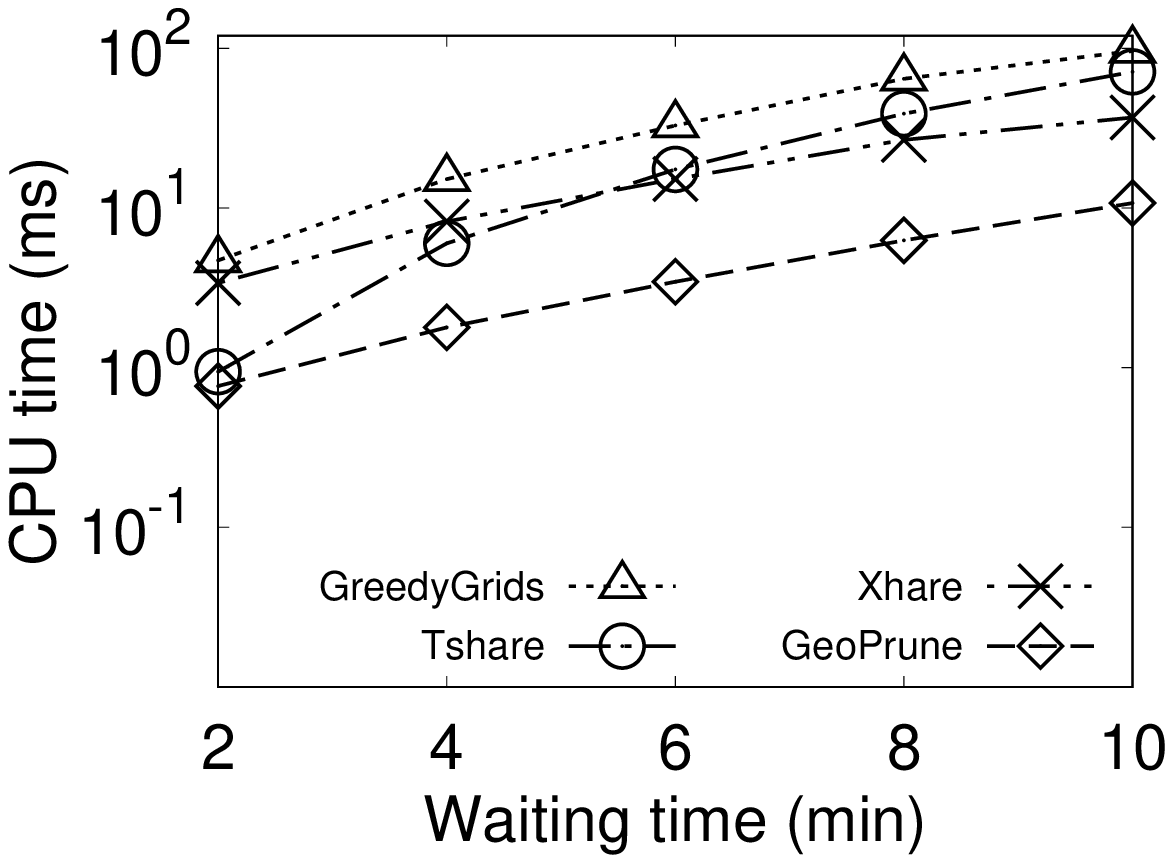}
			\vspace{-6mm}
			\caption{Overall match time (NYC).}%
			\label{fig:NYVaryWaitingTime_matchTime}
		\end{subfigure}%\hspace{1mm}%
		\begin{subfigure}[t]{0.24\textwidth}
			\centering
			\includegraphics[width=\textwidth]{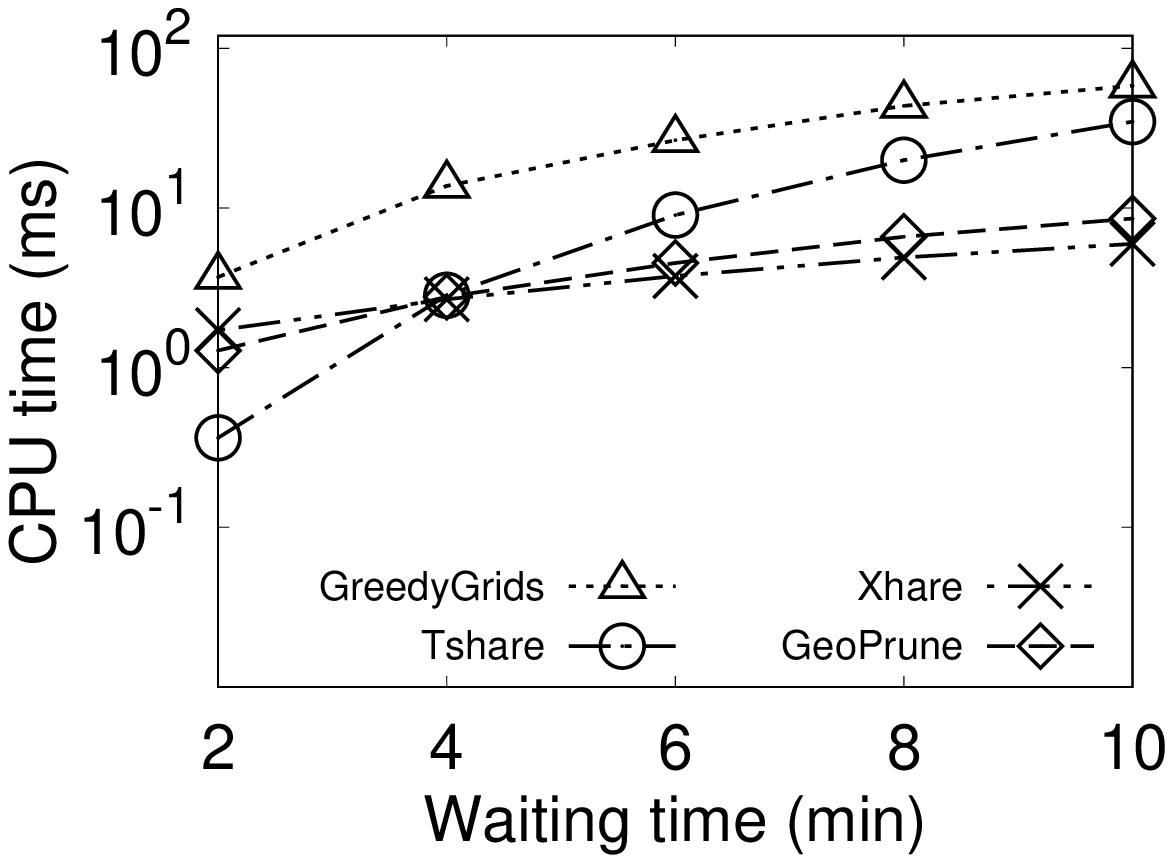}
			\vspace{-6mm}
			\caption{Overall match time (CD).}%
			\label{fig:ChengduVarywaitingTime_matchTime}
		\end{subfigure}%\hspace{1mm}%
		\vspace{3mm}
		\begin{subfigure}[t]{0.24\textwidth}  
			\centering 
			\includegraphics[width=\textwidth]{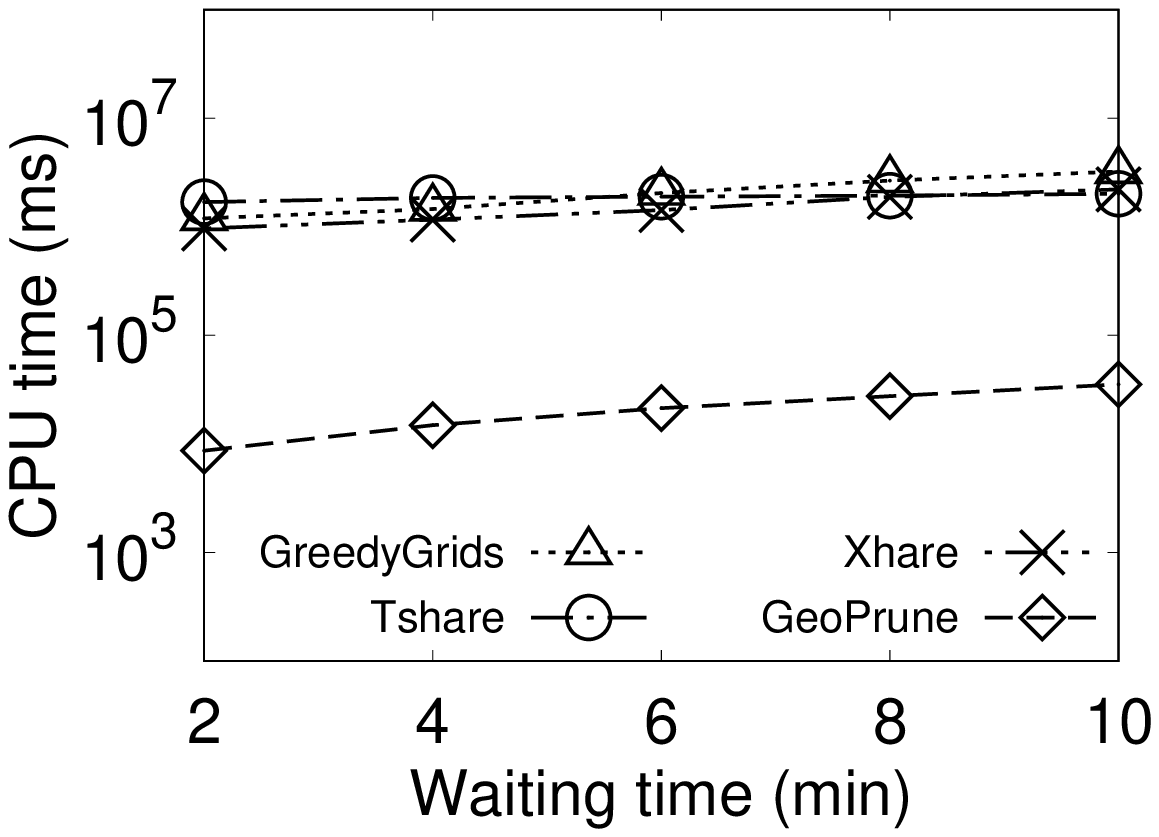}
			\vspace{-6mm}
			\caption{Overall update time (NYC).}%
			\label{fig:NYVaryWaitingTime_overallUpdateTime}
		\end{subfigure}%\hspace{5mm}%
		\begin{subfigure}[t]{0.24\textwidth}  
			\centering 
			\includegraphics[width=\textwidth]{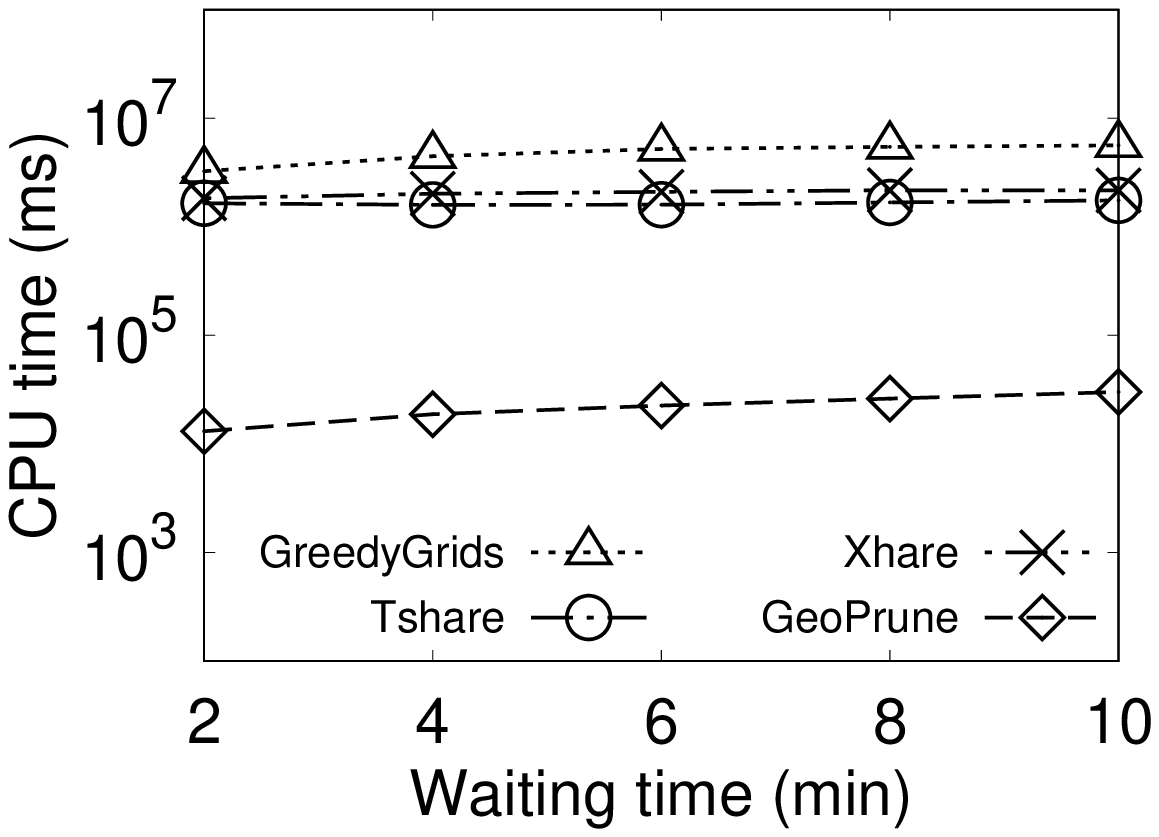}
			\vspace{-6mm}
			\caption{Overall update time (CD).}%
			\label{fig:ChengduVaryWaitingTime_overallUpdateTime}
		\end{subfigure}%\hspace{5mm}%
		\caption{Effect of the waiting time. }
		\label{fig:NYVaryWaitingTime}
	\end{figure}
	
	\begin{figure}[t]
		\centering
		\begin{subfigure}[t]{0.24\textwidth}
			\centering
			\includegraphics[width=\textwidth]{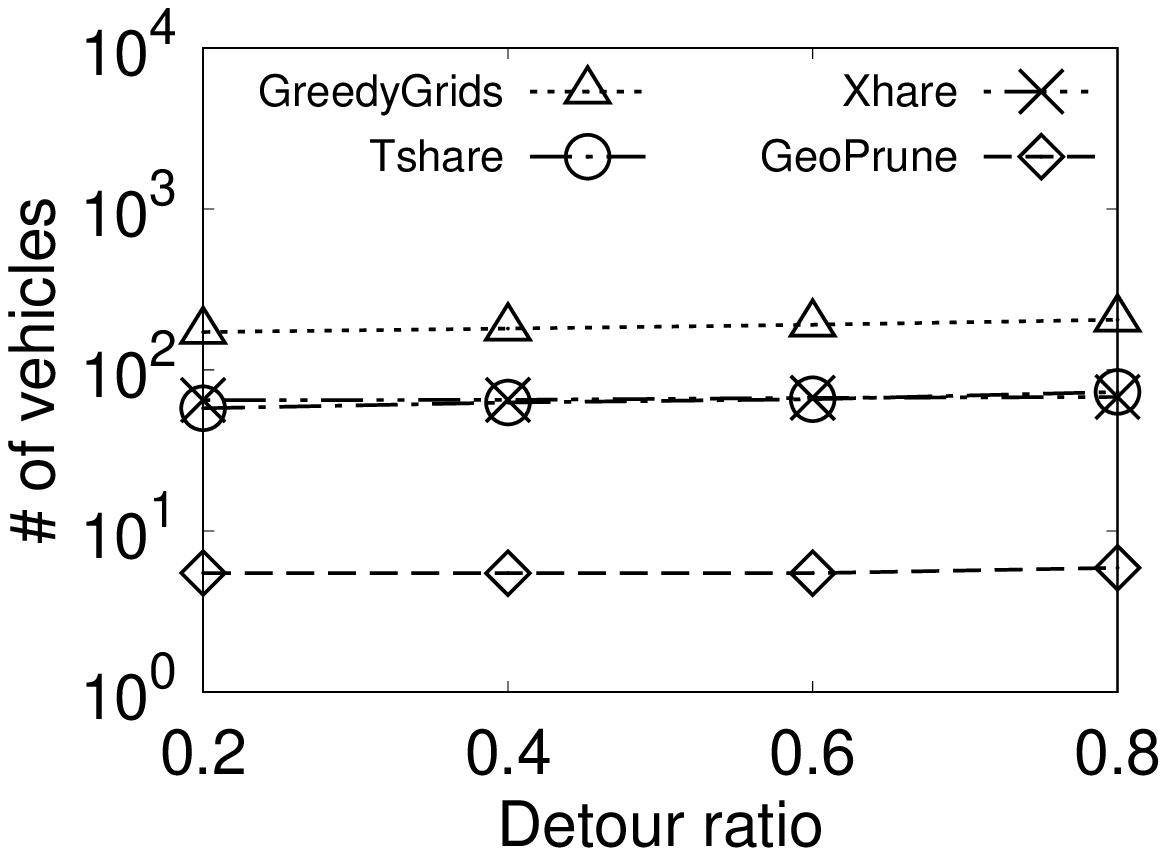}
			\vspace{-6mm}
			\caption{\# remaining vehicles (NYC).}%
			\label{fig:NYVaryDetourRatio_numPrune}
		\end{subfigure}%\hspace{1mm}%
		\begin{subfigure}[t]{0.24\textwidth}
			\centering
			\includegraphics[width=\textwidth]{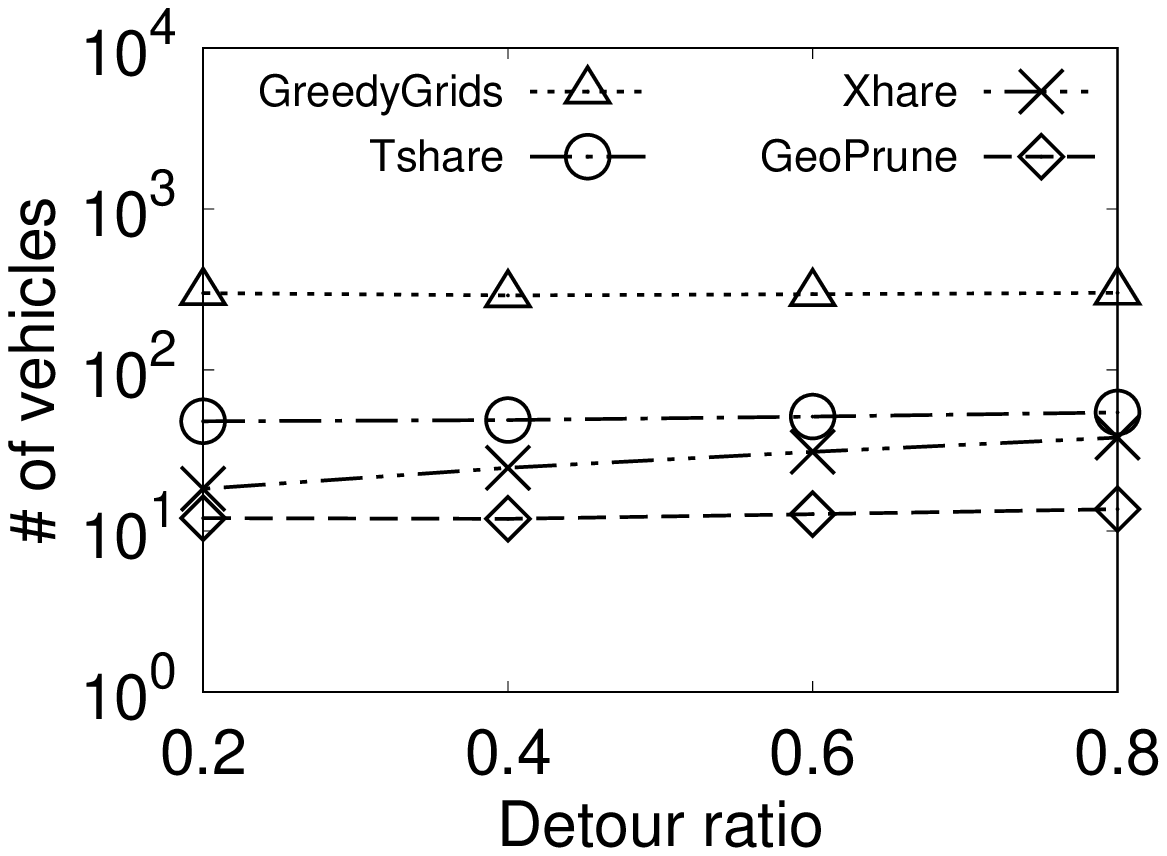}
			\vspace{-6mm}
			\caption{\# remaining vehicles (CD).}%
			\label{fig:ChengduVaryDetourRatio_numPrune}
		\end{subfigure}%\hspace{1mm}%
		\vspace{3mm}
		\begin{subfigure}[t]{0.24\textwidth}
			\centering
			\includegraphics[width=\textwidth]{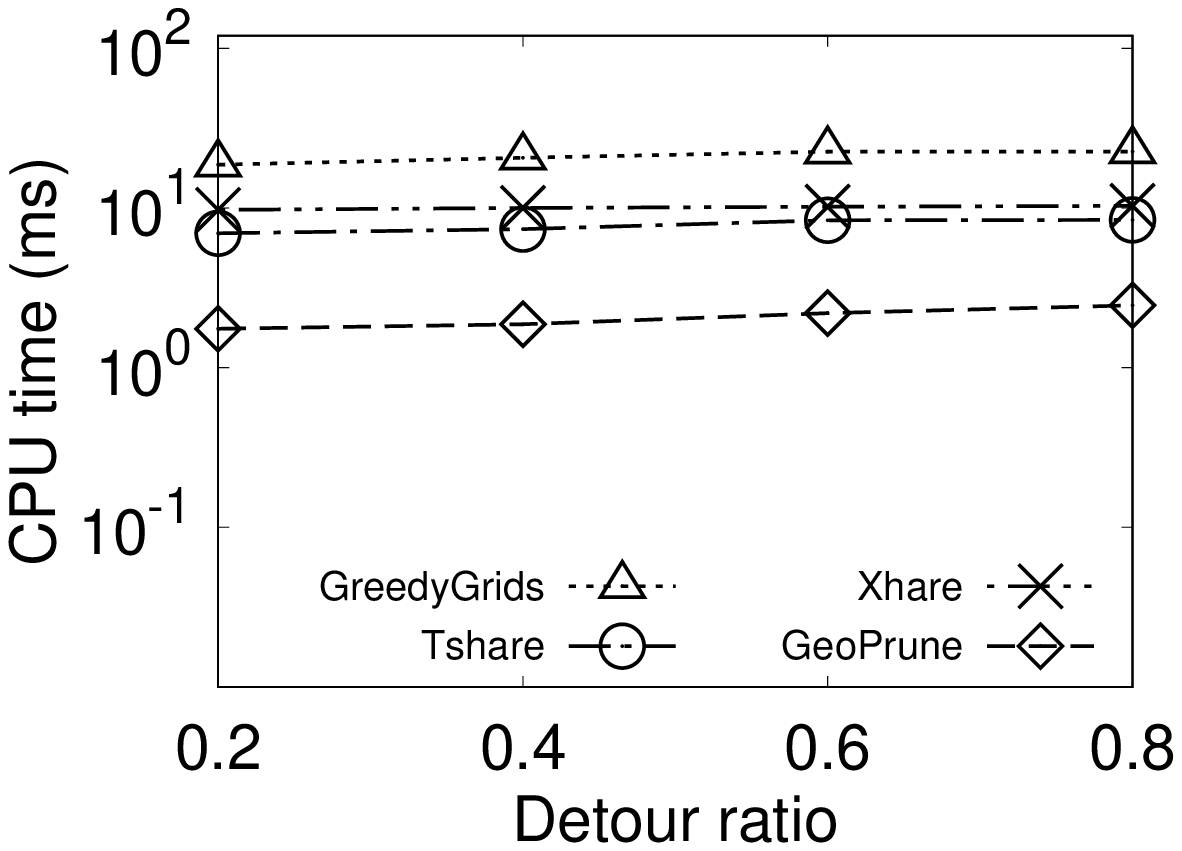}
			\vspace{-6mm}
			\caption{Overall match time (NYC).}%
			\label{fig:NYVaryDetourRatio_matchTime}
		\end{subfigure}%\hspace{1mm}%
		\begin{subfigure}[t]{0.24\textwidth}
			\centering
			\includegraphics[width=\textwidth]{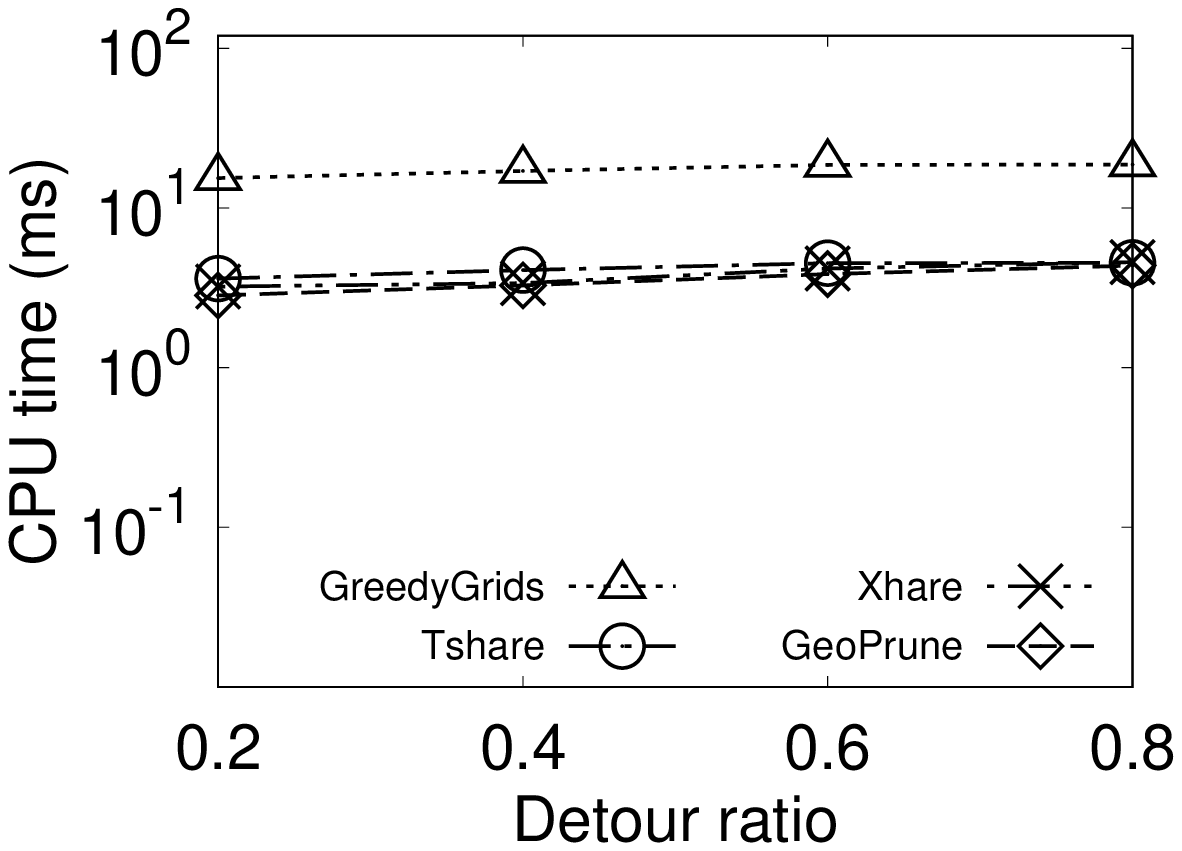}
			\vspace{-6mm}
			\caption{Overall match time (CD).}%
			\label{fig:ChengduVaryDetourRatio_matchTime}
		\end{subfigure}%\hspace{1mm}%
		\vspace{3mm}
		\begin{subfigure}[t]{0.24\textwidth}  
			\centering 
			\includegraphics[width=\textwidth]{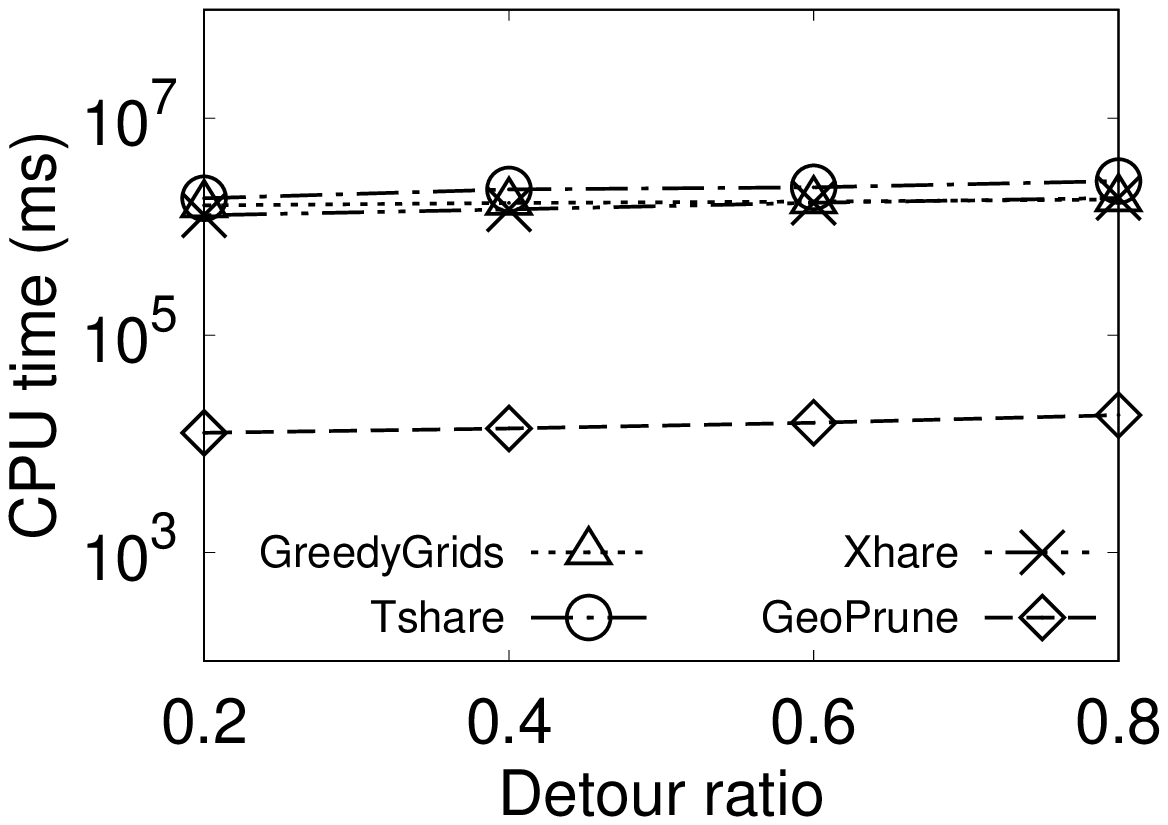}
			\vspace{-6mm}
			\caption{Overall update time (NYC).}%
			\label{fig:NYVaryDetourRatio_overallUpdateTime}
		\end{subfigure}%\hspace{5mm}%
		\begin{subfigure}[t]{0.24\textwidth}  
			\centering 
			\includegraphics[width=\textwidth]{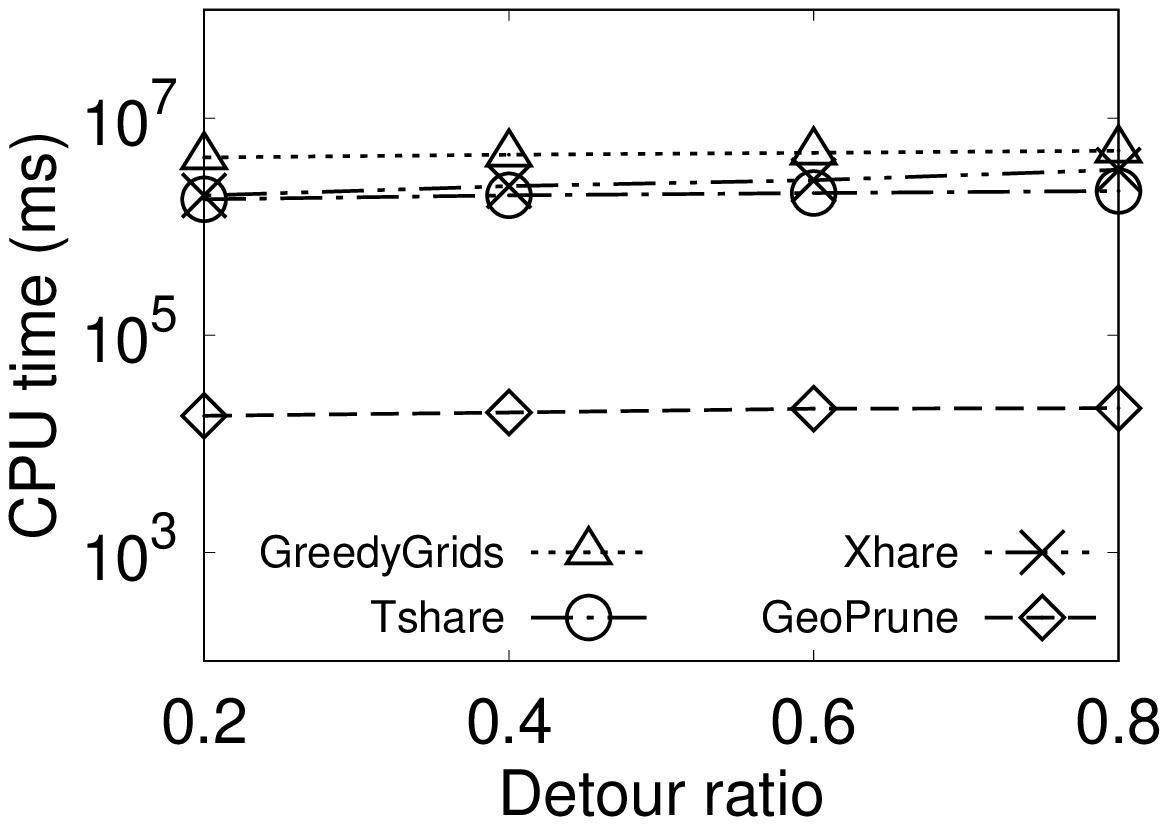}
			\vspace{-6mm}
			\caption{Overall update time (CD).}%
			\label{fig:ChengduVaryDetourRatio_overallUpdateTime}
		\end{subfigure}%\hspace{5mm}%
		\caption{Effect of the detour ratio. }
		\label{fig:VaryDetourRatio}
	\end{figure}

	\begin{figure}[t]
		\centering
		\begin{subfigure}[t]{0.24\textwidth}
			\centering
			\includegraphics[width=\textwidth]{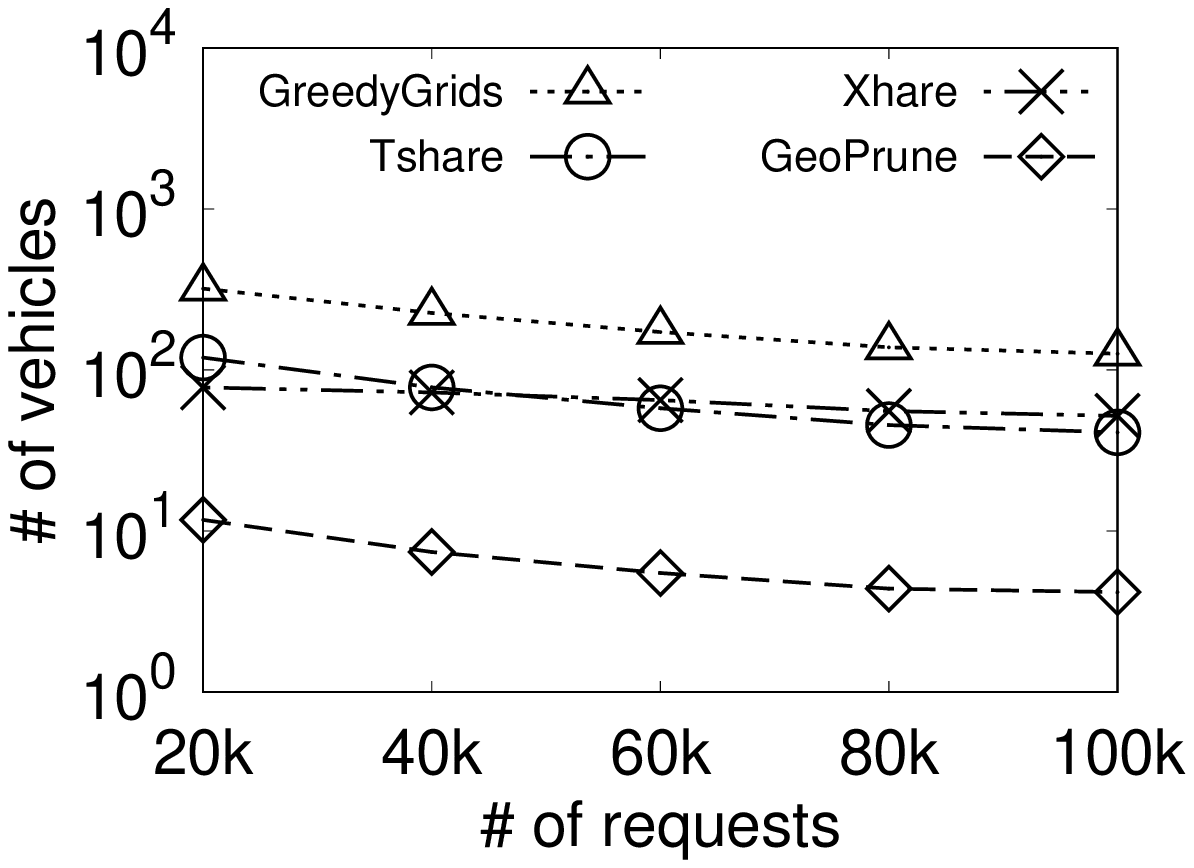}
			\vspace{-6mm}
			\caption{\# remaining vehicles (NYC).}%
			\label{fig:NYVaryNumRequests_numPrune}
		\end{subfigure}%\hspace{1mm}%
		\begin{subfigure}[t]{0.24\textwidth}
			\centering
			\includegraphics[width=\textwidth]{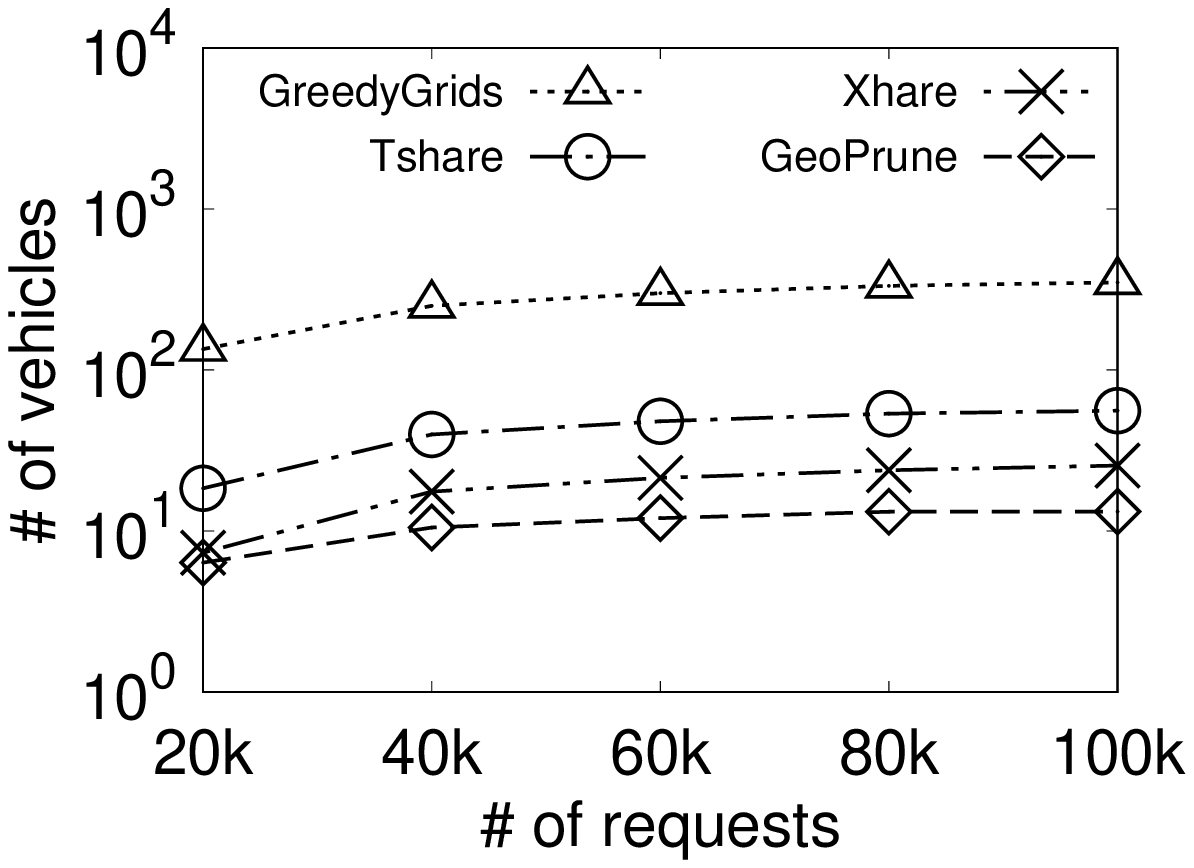}
			\vspace{-6mm}
			\caption{\# remaining vehicles (CD).}%
			\label{fig:ChengduVaryNumRequests_numPrune}
		\end{subfigure}%\hspace{1mm}%
		\vspace{3mm}
		\begin{subfigure}[t]{0.24\textwidth}
			\centering
			\includegraphics[width=\textwidth]{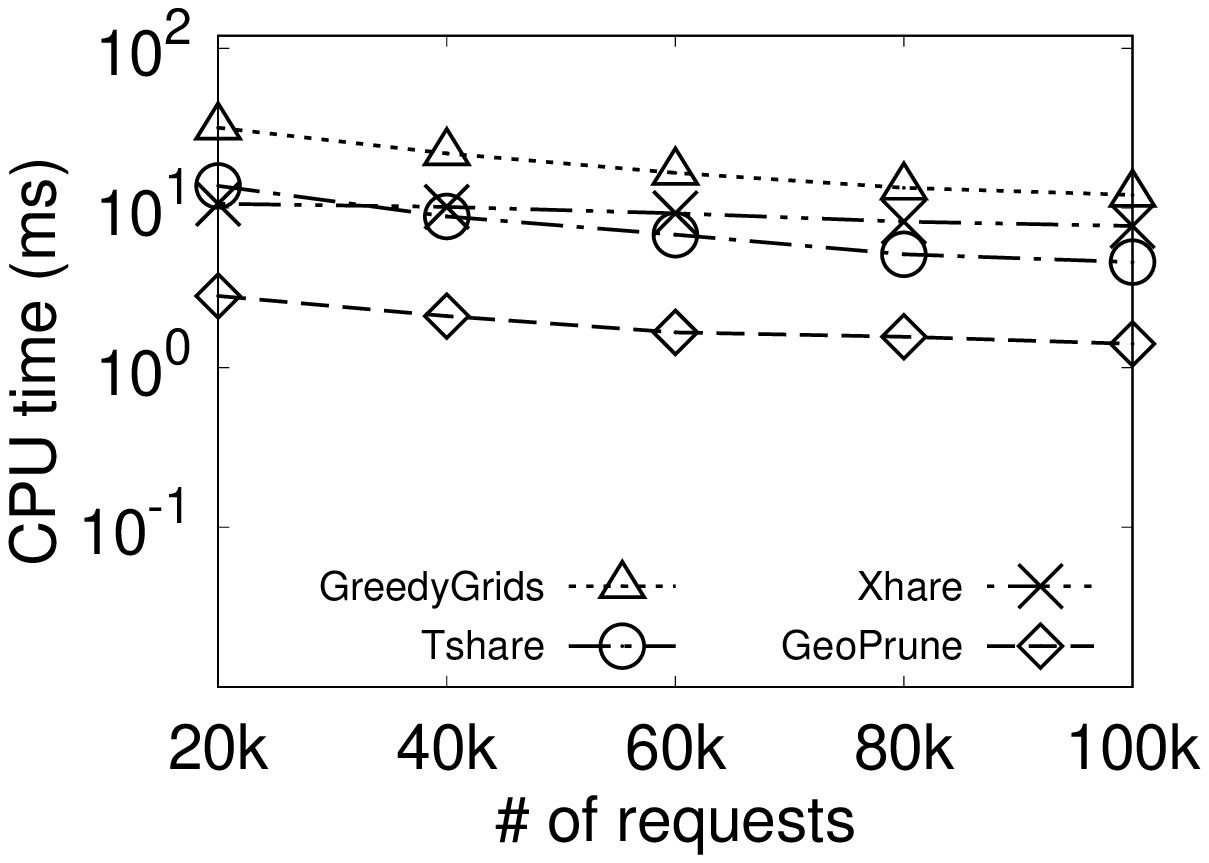}
			\vspace{-6mm}
			\caption{Overall match time (NYC).}%
			\label{fig:NYVaryNumRequests_matchTime}
		\end{subfigure}%\hspace{1mm}%
		\begin{subfigure}[t]{0.24\textwidth}
			\centering
			\includegraphics[width=\textwidth]{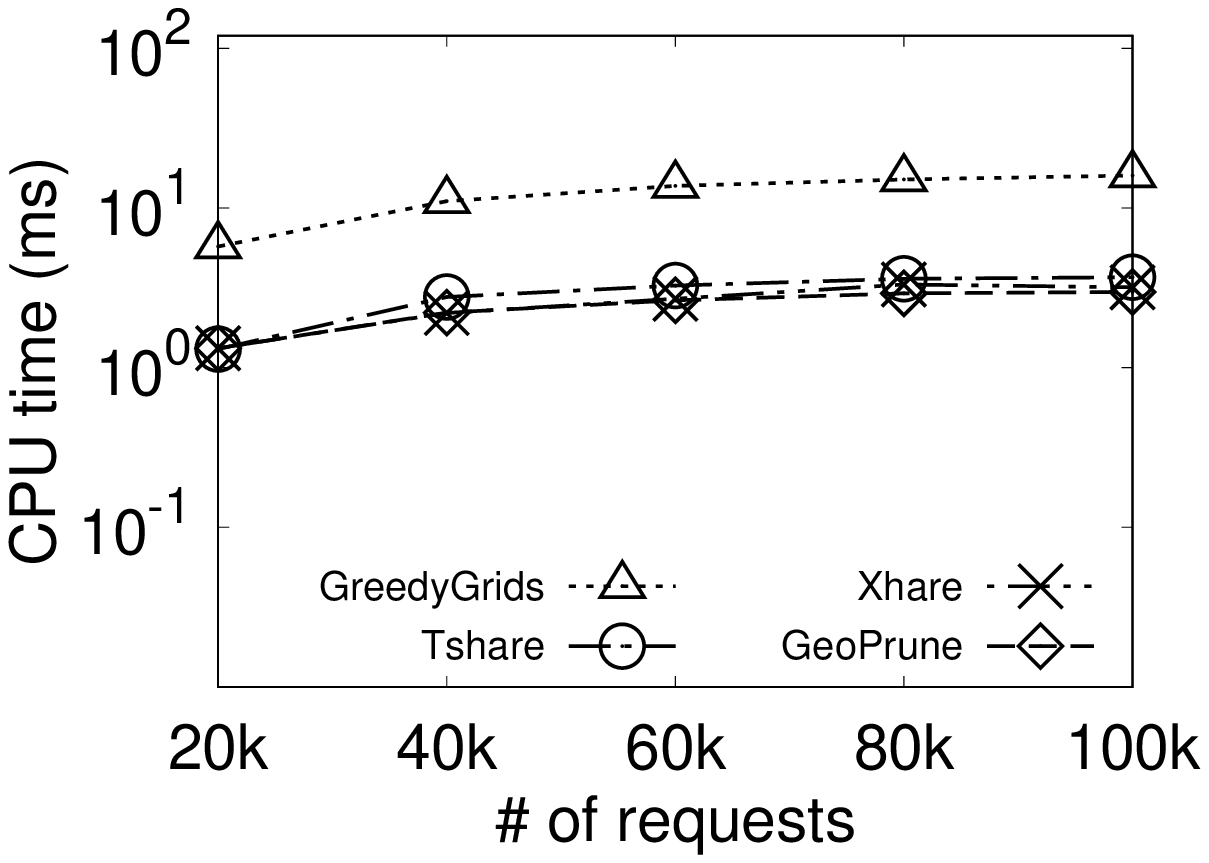}
			\vspace{-6mm}
			\caption{Overall match time (CD).}%
			\label{fig:ChengduVaryNumRequests_matchTime}
		\end{subfigure}%\hspace{1mm}%
		\vspace{3mm}
		\begin{subfigure}[t]{0.24\textwidth}  
			\centering 
			\includegraphics[width=\textwidth]{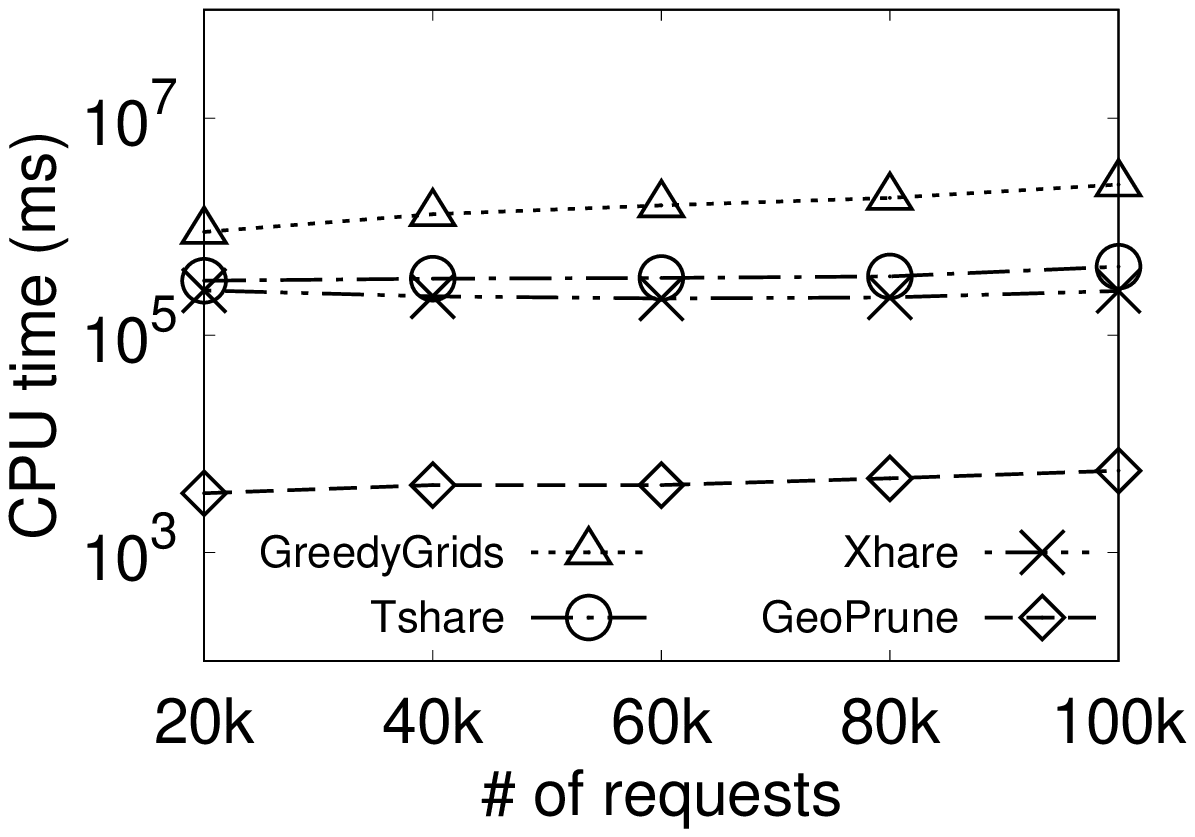}
			\vspace{-6mm}
			\caption{Overall update time (NYC).}%
			\label{fig:NYVaryNumRequests_overallUpdateTime}
		\end{subfigure}%\hspace{5mm}%
		\begin{subfigure}[t]{0.24\textwidth}  
			\centering 
			\includegraphics[width=\textwidth]{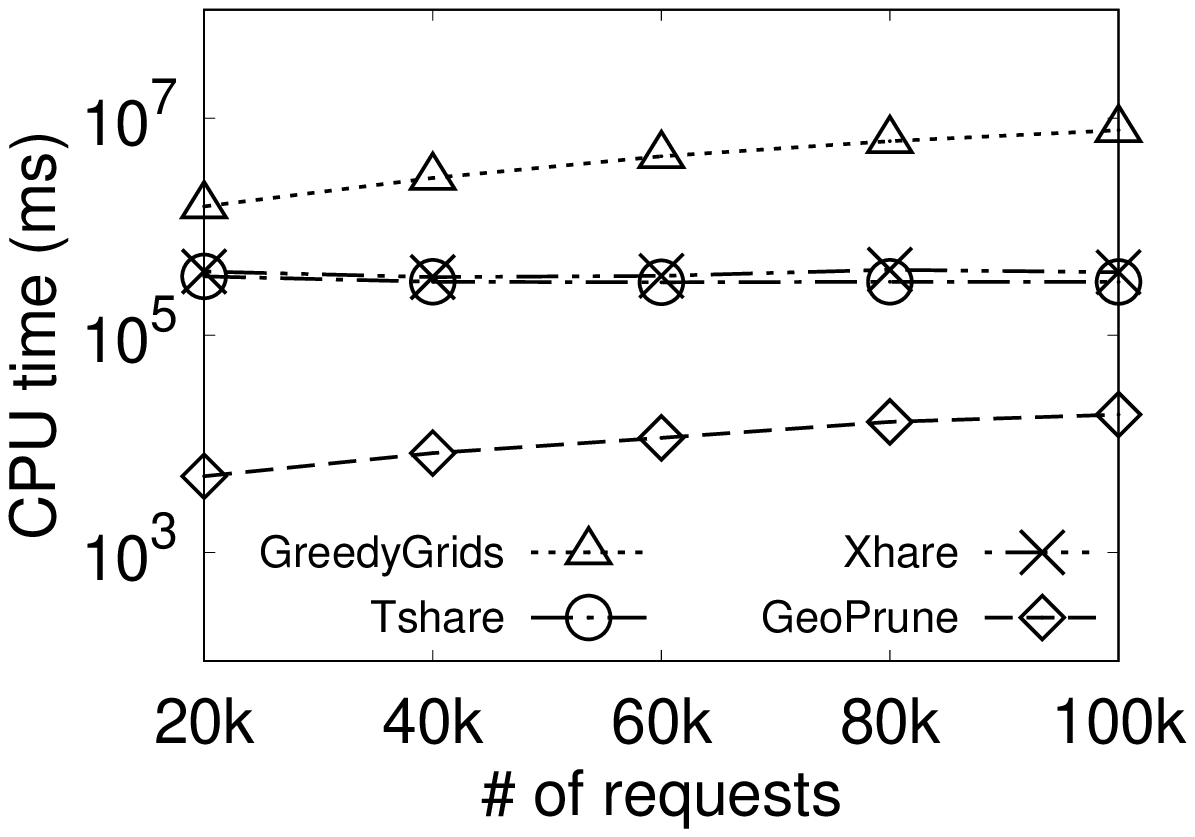}
			\vspace{-6mm}
			\caption{Overall update time (CD).}%
			\label{fig:ChengduVaryNumRequests_overallUpdateTime}
		\end{subfigure}%\hspace{5mm}%
		\caption{Effect of the number of requests. }
		\label{fig:VaryNumRequests}
	\end{figure}

	Figure~\ref{fig:varyNumVehicles} shows the  results on varying the number of vehicles. \ouralgorithmCapital substantially reduces the number of remaining candidate vehicles compared to other algorithms.
	When the number of vehicles is $2^{13}$, the average number of candidates of  \ouralgorithm is only 5 on the NYC dataset, while the other algorithms return 57 $\sim$ 172 candidates per request.
	GreedyGrids returns the largest set of candidates as it simply retrieves all vehicles in the nearby grid cells, among which only a few are feasible.
	Tshare and Xhare find fewer candidates than GreedyGrids but may result in false negatives due to the approximate search.
	
	The number of remaining vehicles largely affects the running time of the selection stage and the overall match time.
	As shown in Figure~\ref{fig:NYVarynumVehicles_matchTime} and Figure~\ref{fig:chengduVarynumVehicles_matchTime}, \ouralgorithm reduces the overall match time by 71\% to 90\% on the NYC dataset and up to 80\% on the  Chengdu dataset.
	Consistent with experiments shown in the previous studies~\cite{tong2018unified}, \cite{Xu2019Efficient},\cite{pan2019ridesharing}, all algorithms exhibit longer pruning time with more vehicles as the number of vehicle candidates increases.
	The matching time of Tshare and Xhare is comparable with \ouralgorithm on the Chengdu dataset when the number of vehicles is small but continuously increases with more vehicles, showing that GeoPrune scales better with the increase in the number of vehicles.

	In terms of the update cost, \ouralgorithm is two to three orders of magnitude faster because GeoPrune mainly relies on circles and ellipses for pruning while other algorithms require real-time maintenance of indices over the road networks.
	% When a new trip request is matched, Xhare and Tshare need to explore the space to calculate the pass-through and reachable areas of the matched vehicle, while \ouralgorithm only need to calculate the ellipses to bound the reachable areas.
	% As for GreedyGrids, although it does not update the index when a new trip request is matched, it needs to track the located grids of continuously moving vehicles, which makes it two orders of magnitude slower for updating the index while vehicles are moving in the street.
	
		\subsubsection{Effect of the capacity of vehicles}
	Figure~\ref{fig:varyCapacity} illustrates the algorithm performance when varying the capacity of vehicles.
	\ouralgorithmCapital outperforms other state-of-the-arts in all capacity settings on NYC dataset and shows comparable match time with Tshare and Xhare on Chengdu dataset.
	As shown in Figure~\ref{fig:NYVaryCapacity_matchTime} and Figure ~\ref{fig:ChengduVaryCapacity_matchTime}, the number of remaining vehicles and the overall match time keep stable when the vehicle capacity varies on both road networks, which may be caused by the limited shareability between trip requests under the parameter settings.
	% 	Even though the vehicles are able to take more passengers at the same time, the other constraints limit the possible vehicles can be found for requests.
%	For example, when the capacity is 2, \ouralgorithm finds 3 candidates for every request on average on NYC dataset and this number only increases to 7 when the capacity is 10.	
	
	The update cost of algorithms when varying the capacity of vehicles is shown in Figure~\ref{fig:NYVaryCapacity_overallUpdateTime} and Figure~\ref{fig:NYVaryCapacity_overallUpdateTime}.
	The overall update cost of  \ouralgorithm  is again observed to be two to three orders of magnitude faster than other algorithms.
	The reason that why overall update cost is barely affected by the vehicle capacity might be the stable length of the vehicle schedule, which is caused by the limited shareability between requests and the low capacity (at most 10).

	\subsubsection{Effect of the Maximum Waiting Time}
	Figure~\ref{fig:NYVaryWaitingTime} shows the experimental results when varying the maximum waiting time of trip requests.
	All algorithms exhibit longer match time with the increasing waiting time because of larger shareability between requests and more returned vehicle candidates.
	% 	Increasing the waiting time of requests leads to more possibilities for requests to share with each other and thus results in more candidates.
	\ouralgorithmCapital again shows the best pruning performance  in almost all cases.
	Tshare requires less matching time than \ouralgorithm when the waiting time is  2\,min on the Chengdu dataset.
	However, longer waiting time requires Tshare to check more nearby grid cells and thus the matching time of Tshare increases continuously and becomes five times slower than \ouralgorithm when the waiting time is 10\,min.
	Xhare finds fewer vehicles and requires less matching time than GeoPrune when the waiting time is longer than 6\,min on the Chengdu dataset.
	This is because Xhare assumes vehicles travel on pre-defined routes, and new requests can only be served on the way of these routes.
	A long waiting time brings more feasible vehicles with append-append case and Xhare may miss these vehicles.
	%     However, even with a small number of candidates, as we illustrated before, Tshare may miss some possible vehicles due to its estimate distance measurement.
	%    The difference between \ouralgorithm other state-of-the-arts with large waiting time shows that \ouralgorithm has better scalability on the waiting time.

	\begin{figure}[t]
		\centering
		\begin{subfigure}[t]{0.24\textwidth}
			\centering
			\includegraphics[width=\textwidth]{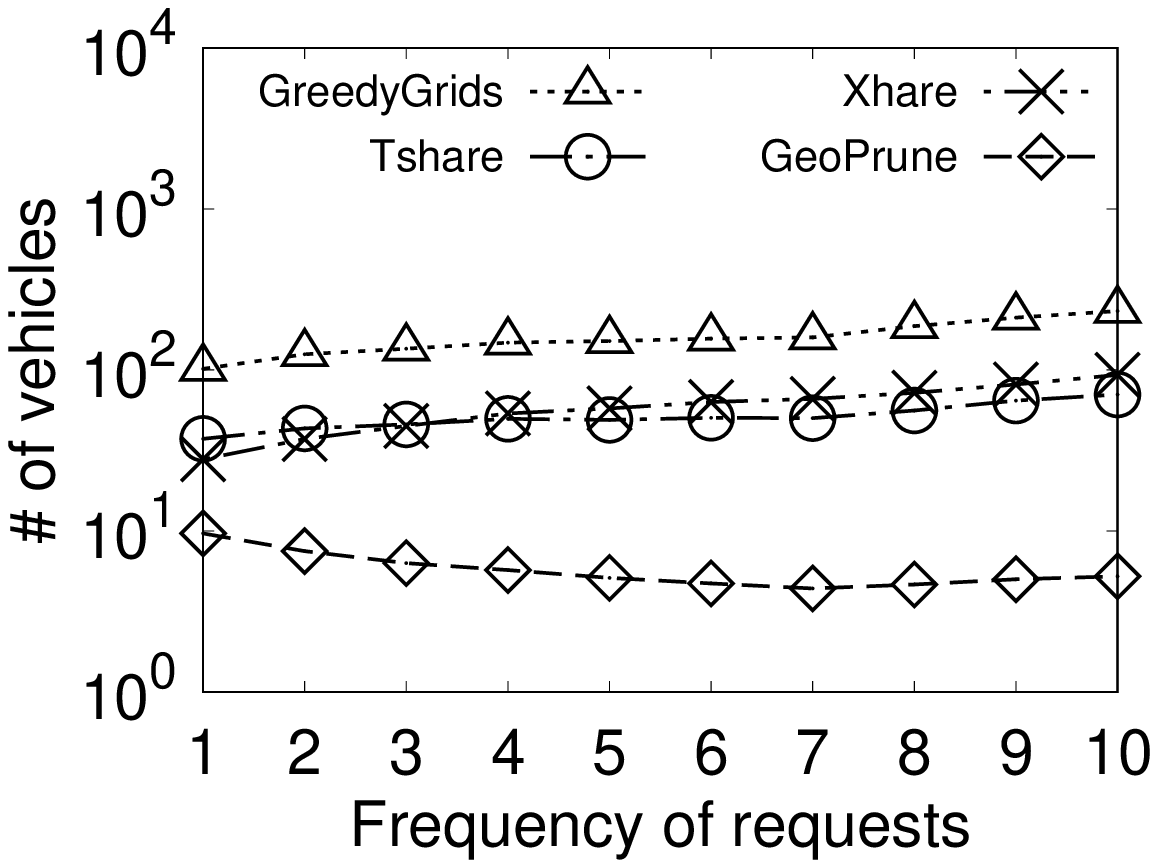}
			\vspace{-6mm}
			\caption{\# remaining vehicles (NYC).}%
			\label{fig:NYVaryFrequency_numPrune}
		\end{subfigure}%\hspace{1mm}%
		\begin{subfigure}[t]{0.24\textwidth}
			\centering
			\includegraphics[width=\textwidth]{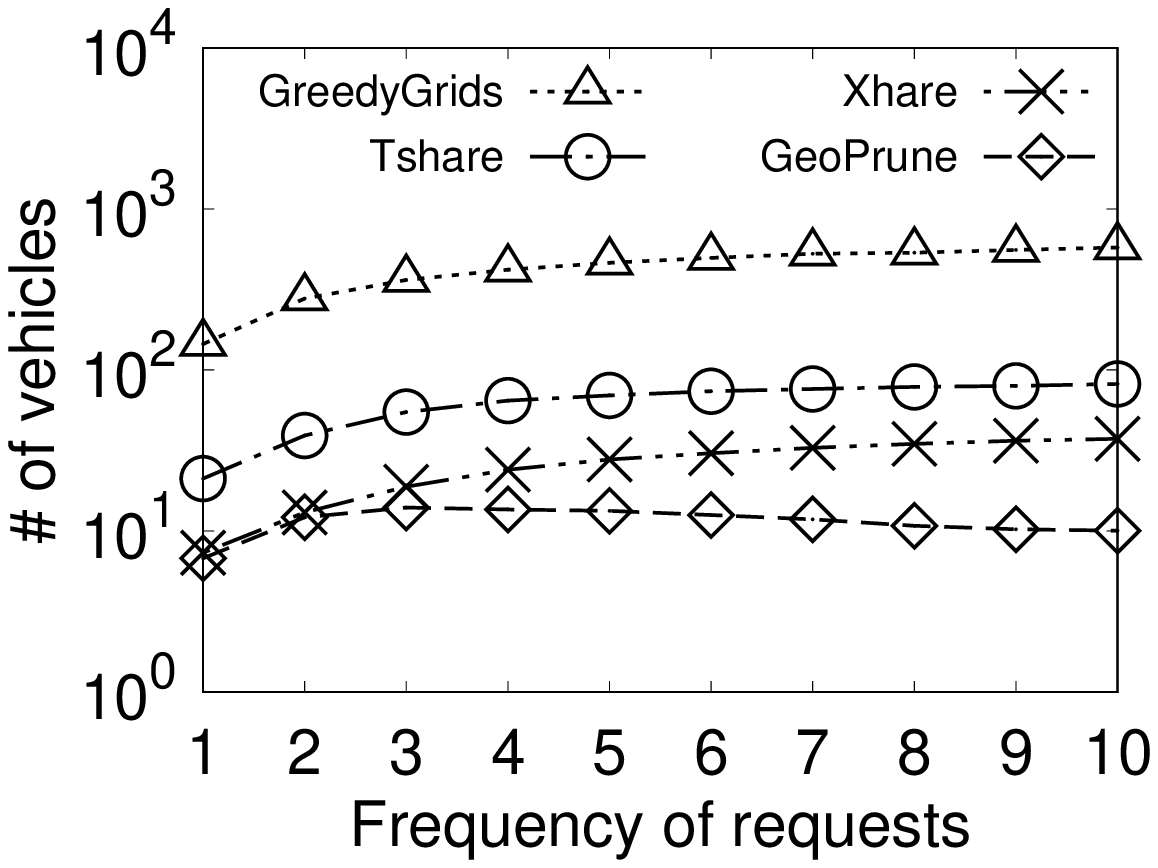}
			\vspace{-6mm}
			\caption{\# remaining vehicles (CD).}%
			\label{fig:ChengduVaryFrequency_numPrune}
		\end{subfigure}%\hspace{1mm}%
		\vspace{3mm}
		\begin{subfigure}[t]{0.24\textwidth}
			\centering
			\includegraphics[width=\textwidth]{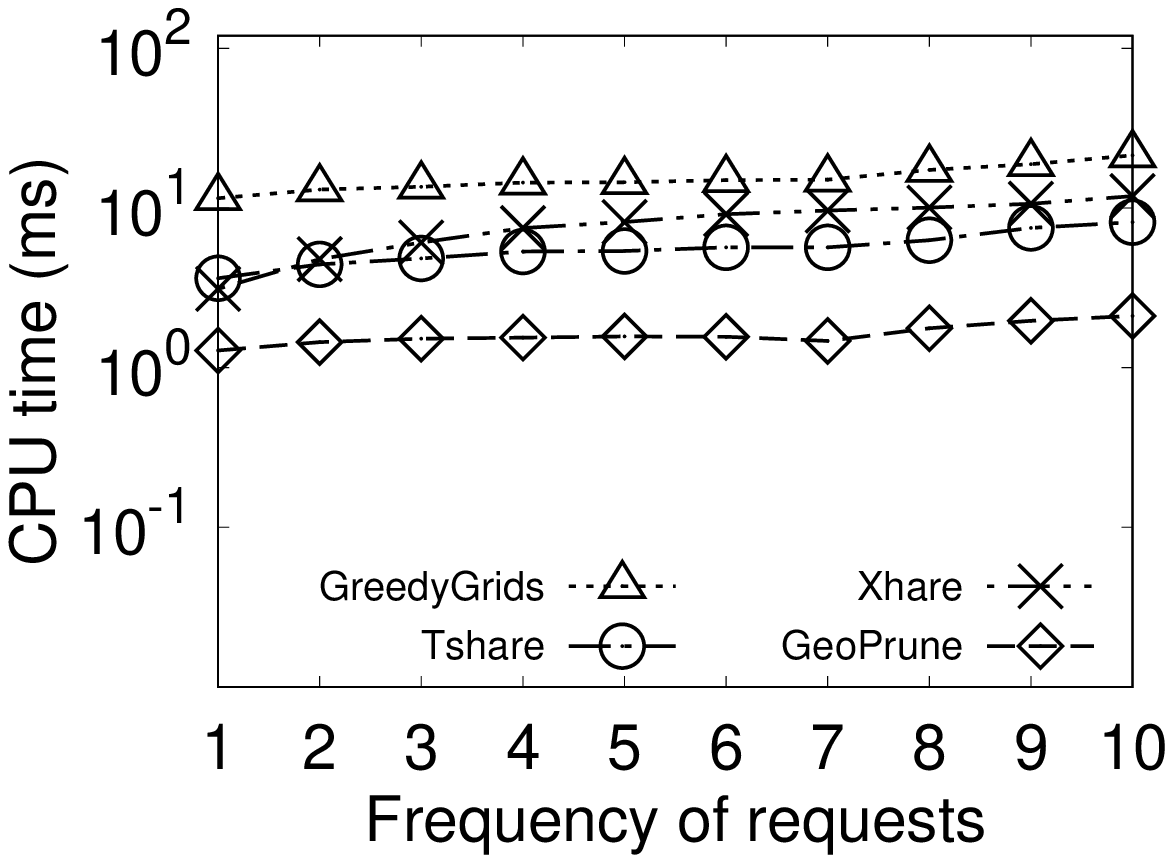}
			\vspace{-6mm}
			\caption{Overall match time (NYC).}%
			\label{fig:NYVaryFrequency_matchTime}
		\end{subfigure}%\hspace{1mm}%
		\begin{subfigure}[t]{0.24\textwidth}
			\centering
			\includegraphics[width=\textwidth]{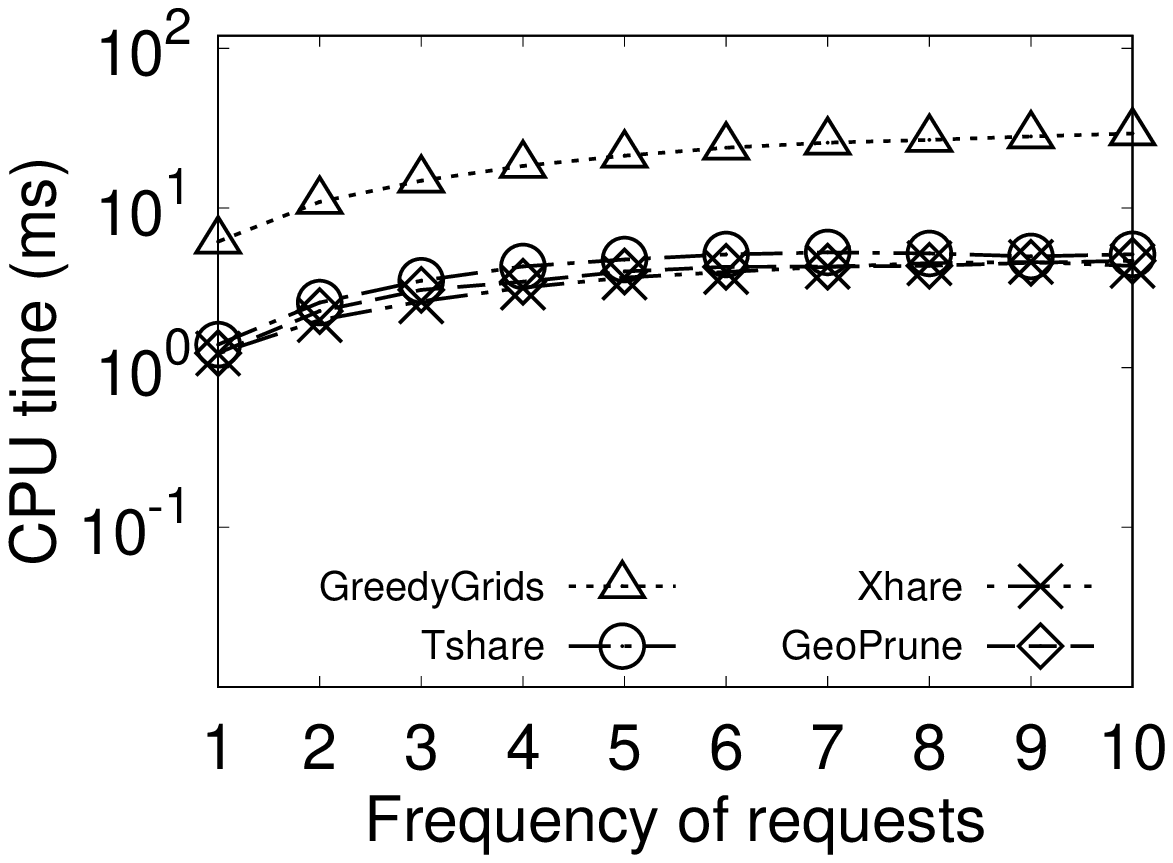}
			\vspace{-6mm}
			\caption{Overall match time (CD).}%
			\label{fig:ChengduVaryFrequency_matchTime}
		\end{subfigure}%\hspace{1mm}%
		\vspace{3mm}
		\begin{subfigure}[t]{0.24\textwidth}  
			\centering 
			\includegraphics[width=\textwidth]{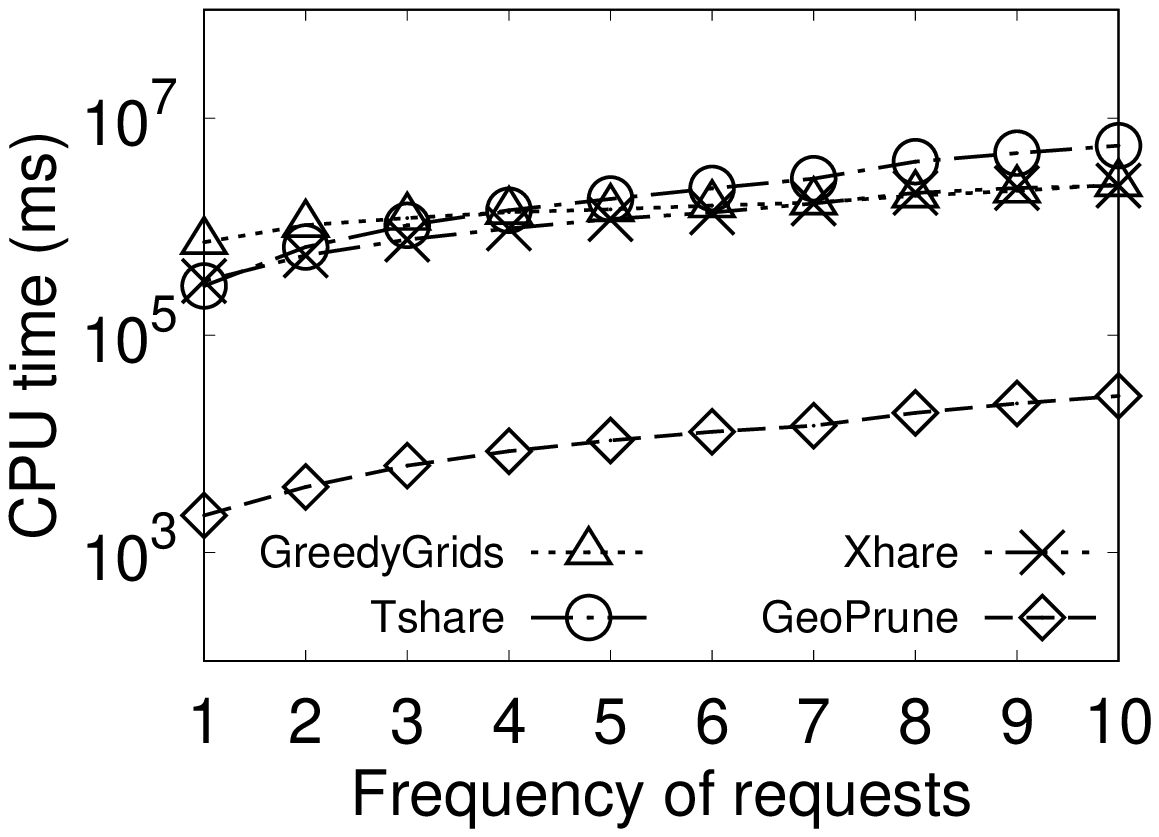}
			\vspace{-6mm}
			\caption{Overall update time (NYC).}%
			\label{fig:NYVaryFrequency_overallUpdateTime}
		\end{subfigure}%\hspace{5mm}%
		\begin{subfigure}[t]{0.24\textwidth}  
			\centering 
			\includegraphics[width=\textwidth]{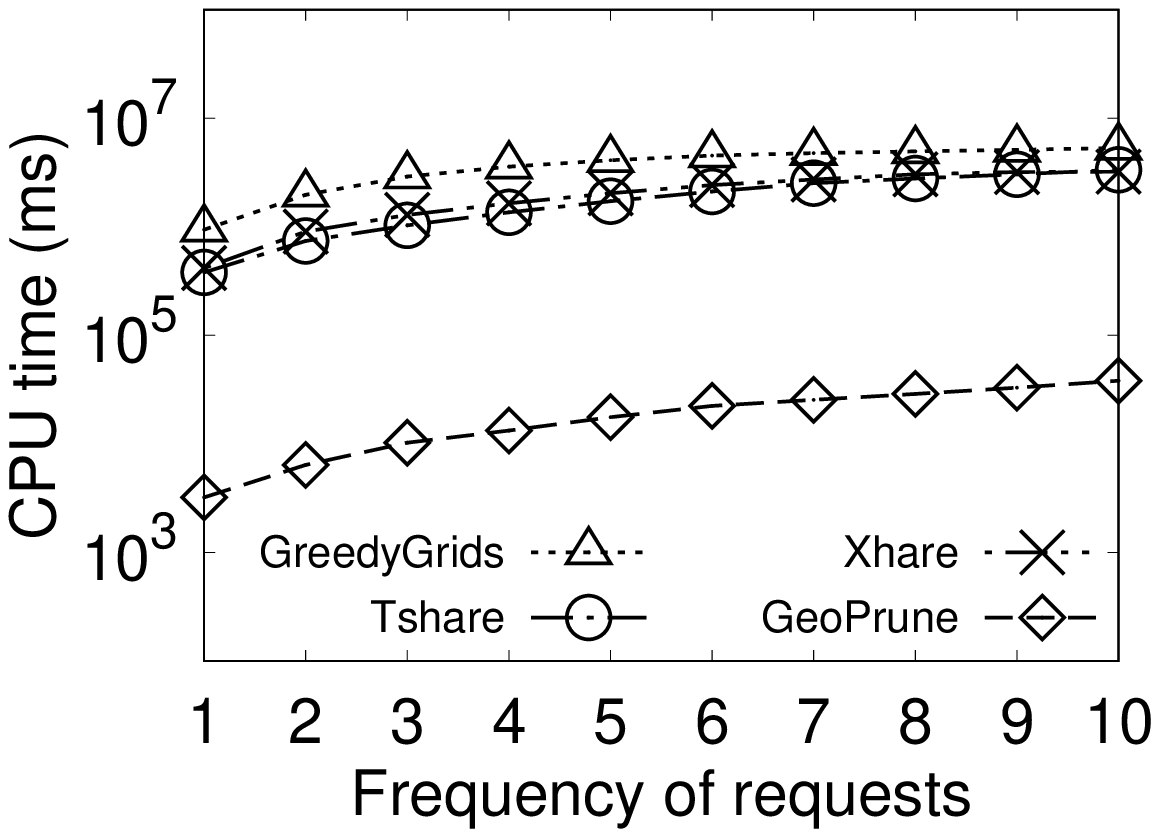}
			\vspace{-6mm}
			\caption{Overall update time (CD).}%
			\label{fig:ChengduVaryFrequency_overallUpdateTime}
		\end{subfigure}%\hspace{5mm}%
		\caption{Effect of the frequency of requests. }
		\label{fig:VaryFrequency}
	\end{figure}
	Figure~\ref{fig:NYVaryWaitingTime_overallUpdateTime} and Figure~\ref{fig:ChengduVaryWaitingTime_overallUpdateTime} show the update cost, which increases for all algorithms with the larger waiting time, since the schedule of vehicles becomes longer and more requests can be shared.
	% 	When a new trip request is matched to a vehicle, the longer vehicle schedule takes more time to recalculate the reachable area of the vehicle.
	% 	Besides, it takes higher update cost when vehicles are moving in the street due to more frequent visiting of scheduled stops.
	Still, \ouralgorithmCapital is two to three orders of magnitude faster on update compared to the state-of-the-art.
	
	\subsubsection{Effect of the Detour Ratio}
	Figure~\ref{fig:VaryDetourRatio} examines the sensitivity over the detour ratio.
	Again, \ouralgorithm prunes more infeasible vehicles, and its match time is three to ten times lower than the other algorithms on the NYC dataset and comparable with Tshare and Xhare on the Chengdu dataset.
	The number of remaining vehicles of all algorithms keeps almost stable due to the limited shareability. The update cost of all algorithms remains stable (and three orders of magnitude smaller for \ouralgorithmEnd) since the length of vehicle schedules is barely affected by different detour ratios.
	%	Whdetour ratien the detoosur ratio changes fby the rom 0.2 to 0.8, the number of remaining vehicles returned by \ouralgorithm in NYC dataset only increases from 7 to 10.
	%
	% 	The update cost of all algorithms also keep stable when the detour ratio varies, as shown in Figure~\ref{fig:NYVaryDetourRatio_overallUpdateTime} and Figure~\ref{fig:ChengduVaryDetourRatio_overallUpdateTime}.
	% 	As we discussed before, the update cost is affected by the length of vehicle schedules.

	\subsubsection{Effect of the Number of Trip Requests}
	Figure~\ref{fig:VaryNumRequests} shows the experiments when the number of requests varies.
	Interestingly, algorithms show different behavior on the two datasets.
	When the number of requests changes from 20\,k to 100\,k, the candidates returned by \ouralgorithm for each request decreases from 11 to 4 on the NYC dataset but increases from 6 to 13 on the Chengdu dataset, meaning that
	the shareability between requests decreases on the NYC dataset while increases on the Chengdu dataset.
	The trend of the overall match time is consistent with that of the number of remaining vehicles, which again validates that the overall match time is largely affected by the number of remaining vehicles.
	
	More trip requests correspond to longer simulation time and increase the total update cost (with \ouralgorithm still being two to three orders of magnitude cheaper in terms of update cost).

	\subsubsection{Effect of the Trip Request Frequency}
	% Figure~\ref{fig:NYVaryFrequency} shows the experimental results with different frequency of trip requests over 24 hours.
	Figure~\ref{fig:VaryFrequency} shows the scalability of algorithms with the frequency of trip requests varying from 1 to 10 requests per second over 3\,hours.
	Note that the frequencies of the original NYC and Chengdu datasets are  5.19  and 3 requests per second respectively.
	To generate trip requests less frequent than the original datasets, we uniformly sample trip requests from the original datasets.
	As for more frequent trip requests, we extract a certain number of trip requests  according to the frequency, e.g., 10,800$\times$7=75,600 trip requests when the frequency is 7.
	We then uniformly distribute the request issue time over 3\,hours.
	
	All algorithms return more vehicle candidates when the frequency increases while \ouralgorithm keeps almost stable.
	This shows that \ouralgorithm provides tighter pruning and is more scalable to highly dynamic scenarios.
	The overall matching time of \ouralgorithm consistently outperforms others  on the NYC dataset and is comparable with Tshare and Xhare on Chengdu dataset.
	The update cost of all algorithms grows with higher frequency due to more frequent updates while \ouralgorithm again outperforms others by two to three orders of magnitude.

	\begin{figure}[t]
		\centering
		\begin{subfigure}[t]{0.24\textwidth}  
			\centering 
			\includegraphics[width=\textwidth]{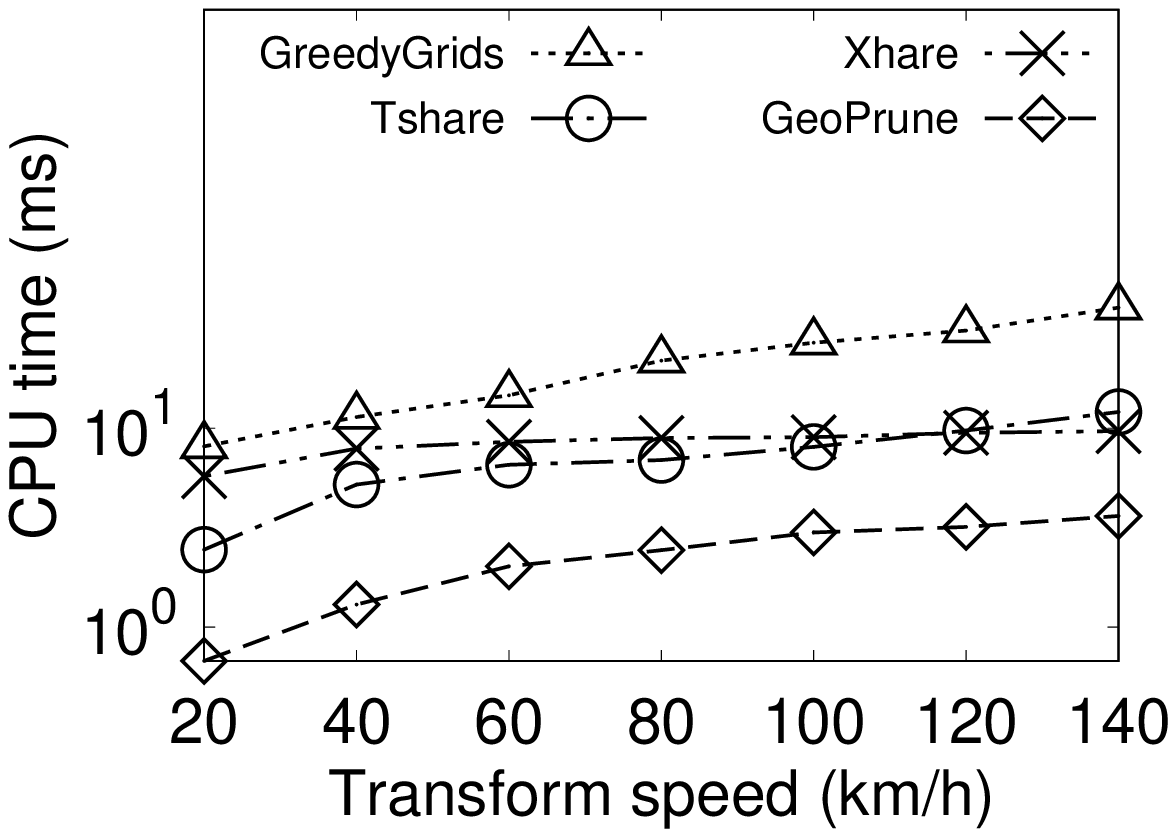}
			\vspace{-6mm}
			\caption{Overall match time.}%
			\label{fig:NYVarySpeed_overallMatchTime}
		\end{subfigure}%\hspace{5mm}%
		\begin{subfigure}[t]{0.24\textwidth}  
			\centering 
			\includegraphics[width=\textwidth]{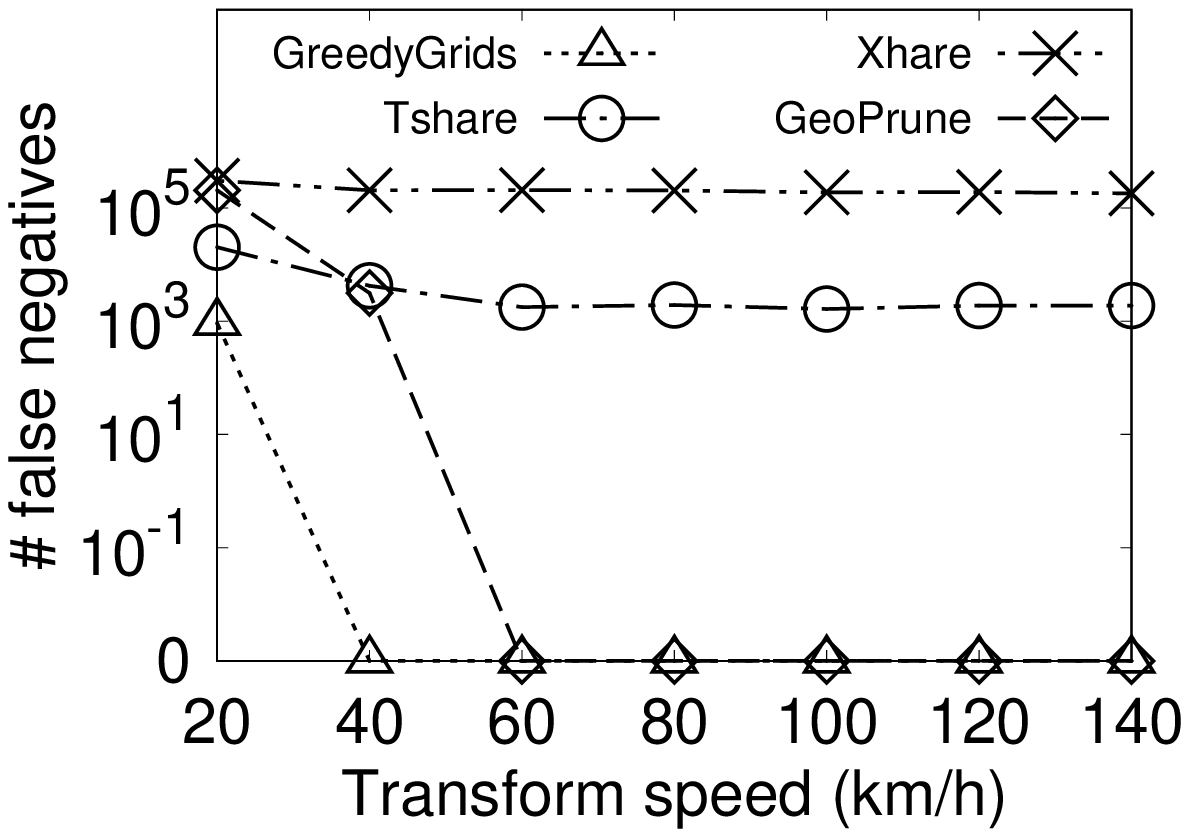}
			\vspace{-6mm}
			\caption{ \# of false negatives.}%
			\label{fig:NYVarySpeed_FNs}
		\end{subfigure}%\hspace{5mm}%
		\caption{Effect of the transforming speed (NYC). }
		\label{fig:VarySpeed}
	\end{figure}

	\subsubsection{Effect of the Transforming Speed:}
	All algorithms need a speed value to transform the time constraint to distance constraint so that pruning based on geographical locations can be applied. 
	Figure~\ref{fig:VarySpeed} shows the effect of the transforming speed.
	A higher speed enlarges the search space and thus all algorithms show longer match time.
	However, GeoPrune consistently performs efficient pruning with all speed values.
	Figure~\ref{fig:NYVarySpeed_FNs} shows the total number of false negatives wrongly pruned for 60,000 requests.
	The same as GreedyGrids, GeoPrune ensures no false negatives when the speed is greater than the vehicle speed (48km/h), whereas Xhare and Tshare still result in false negatives even with a high transforming speed.
	This demonstrates the robustness of GeoPrune to real-time traffic conditions, where the transforming speed can be set as the maximum speed of vehicles such that the pruning is still correct and the processing time only increases slightly.

	\begin{table}[t]
		\caption{Memory consumption (MB) (\# vehicles = $2^{13}$).}
		\label{tab:memory}
		\setlength{\tabcolsep}{3pt}
		\centering
		\begin{tabular}{l r r  r r }
			\hline
			& \textbf{GreedyGrids} & \textbf{Tshare} & \textbf{Xhare} &\textbf{\ouralgorithm}  \\ \hline
			NYC   &   0.38    &   100.34   &   1546.40 & 6.56    \\ 
			Chengdu & 1.67 & 9965.37 & 21282.46 & 6.43  \\
			\hline
		\end{tabular}
	\end{table}
	
	\subsubsection{Memory Consumption}
	Table~\ref{tab:memory} illustrates the maximum memory usage of the algorithms under the default setting.
	All state-of-the-arts consume more memory on the Chengdu dataset as it has a large road network,  while \ouralgorithm keeps stable.
 All state-of-the-arts require maintaining an index over the road network, which is hence largely affected by the network size.
	For example, the grid size of Tshare in NYC is 46$\times$46 but increases to 174$\times$174 in Chengdu.
	\ouralgorithmEnd, however, only maintains several R-trees and thus is less affected by the road network size.
	GreedyGrids has the smallest memory footprint as it only records a list of in-cell vehicles for each grid cell, which is consistent with the observation in~\cite{tong2018unified}.
	Tshare and Xhare consume much more memory than \ouralgorithm due to the large road network index.

	\subsubsection{Cost Breakdown of Algorithm Steps}
	Figure~\ref{fig:Sensitivity analysis} compares the cost of different phases in the match process and update process when varying the number of requests on the NYC dataset.
	Figure~\ref{fig:NYVaryNumRequests_search} shows the cost of the pruning algorithms while Figure~\ref{fig:NYVaryNumRequests_selection} shows the selection cost based on their pruning results.
	\ouralgorithm requires slightly longer time for pruning than Tshare and Xhare but can reduce the selection time by more than 88\% due to the fewer remaining vehicles.
	The selection time of algorithms (Figure~\ref{fig:NYVaryNumRequests_selection}) is consistent with the number of remaining vehicles (Figure~\ref{fig:NYVaryNumRequests_numPrune}), which again demonstrates that the selection step is largely affected by the number of remaining vehicles.

	Figure~\ref{fig:NYVaryNumRequests_matchUpdateTime} shows the update cost when a request is newly assigned.
	\ouralgorithmCapital takes slightly longer time than GreedyGrids to update the R-trees.
	Xhare and Tshare, however, need much more time than \ouralgorithm as they need to  update the pass-through and reachable areas of the matched vehicle while \ouralgorithm can quickly bound the reachable areas by ellipses.
	
	Figure~\ref{fig:NYVaryNumRequests_moveUpdateTime} compares the update cost when vehicles are moving in the street. GreedyGrids is two orders of magnitude slower than the other three algorithms because it needs to track the located grid cells of continuously moving vehicles.
	
	\begin{figure}[t]
		\centering
		\begin{subfigure}[t]{0.24\textwidth}
			\centering
			\includegraphics[width=\textwidth]{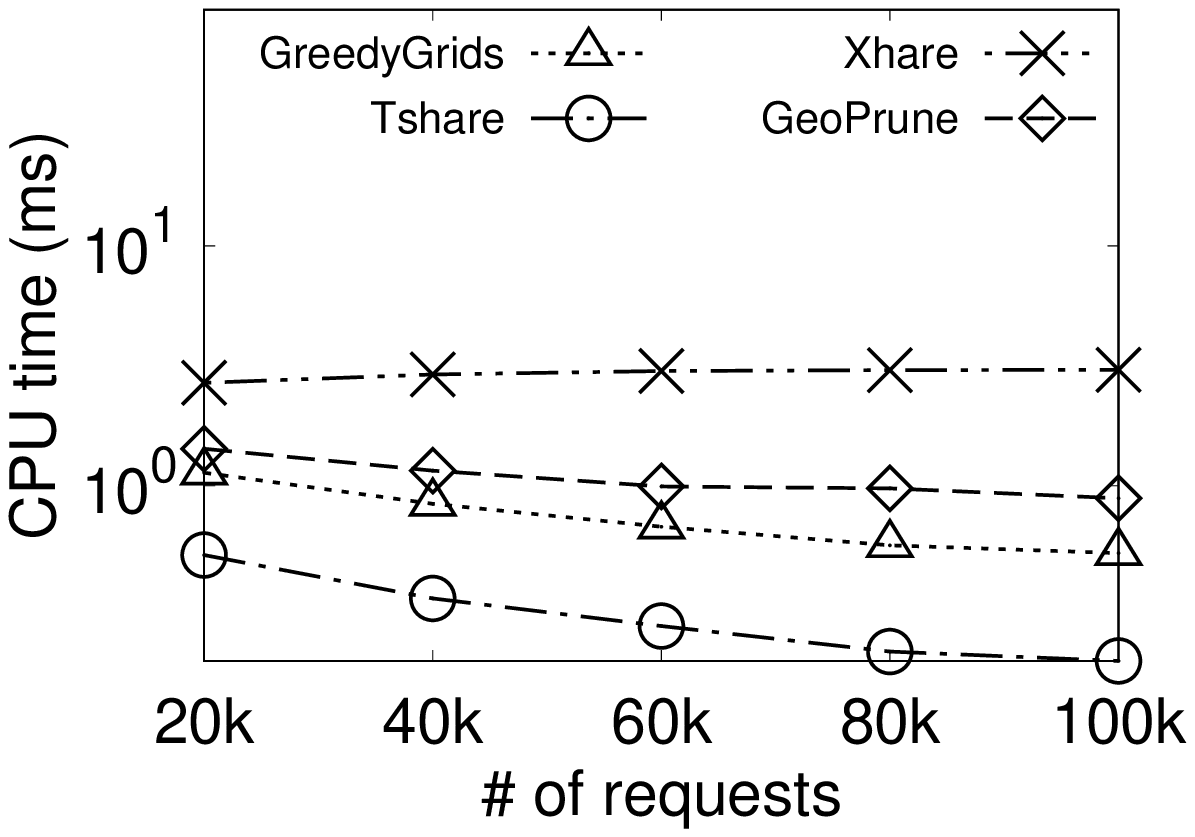}
			\vspace{-6mm}
			\caption{Pruning time.}%
			\label{fig:NYVaryNumRequests_search}
		\end{subfigure}%\hspace{1mm}%
		\begin{subfigure}[t]{0.24\textwidth}
			\centering
			\includegraphics[width=\textwidth]{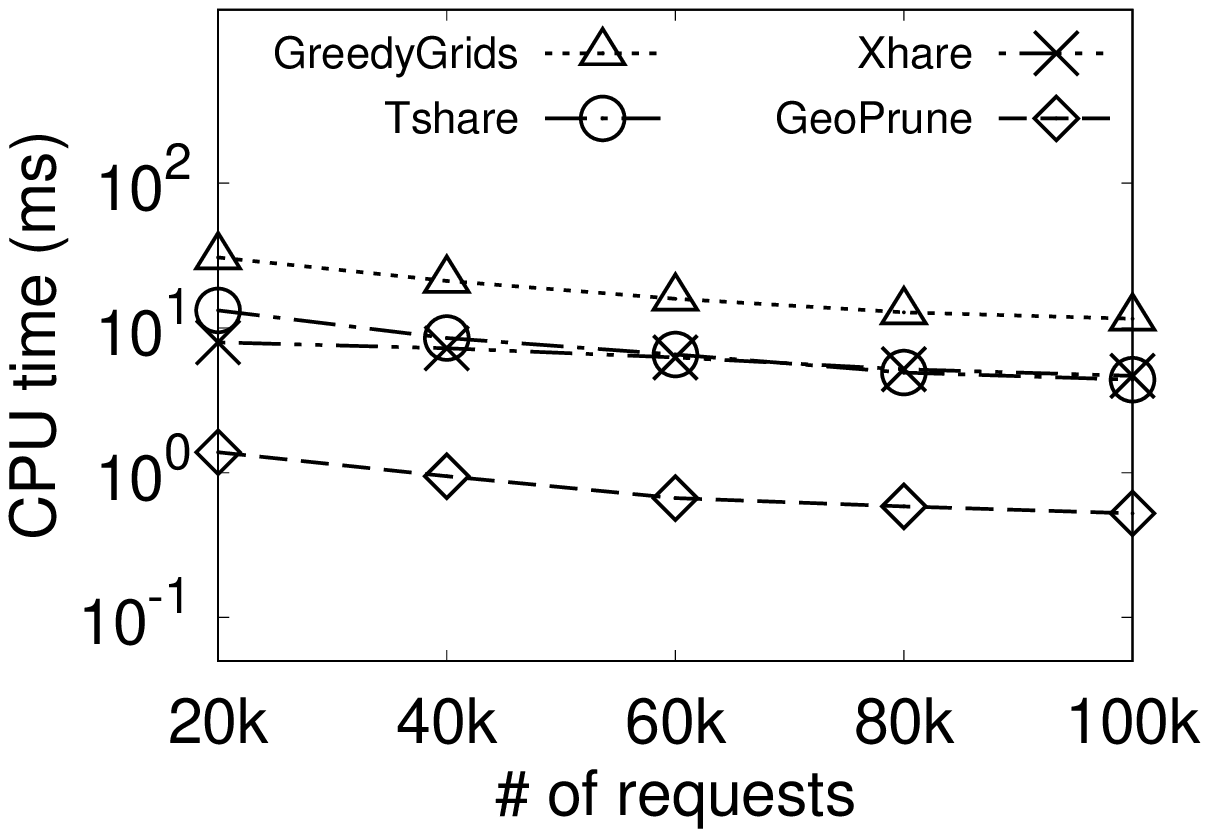}
			\vspace{-6mm}
			\caption{Selection time.}%
			\label{fig:NYVaryNumRequests_selection}
		\end{subfigure}%\hspace{1mm}%
		\vspace{3mm}
		\begin{subfigure}[t]{0.24\textwidth}
			\centering
			\includegraphics[width=\textwidth]{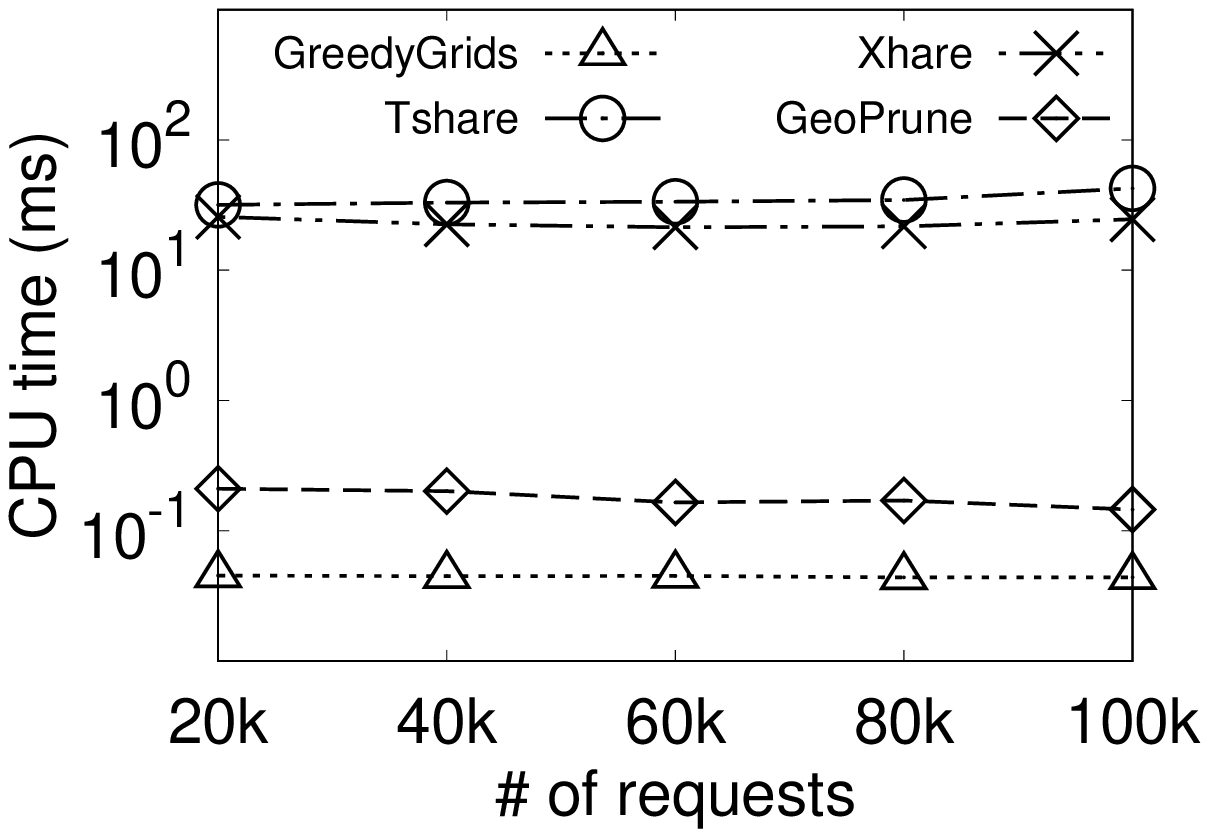}
			\vspace{-6mm}
			\caption{Match update time.}%
			\label{fig:NYVaryNumRequests_matchUpdateTime}
		\end{subfigure}%\hspace{1mm}%
		\begin{subfigure}[t]{0.24\textwidth}
			\centering
			\includegraphics[width=\textwidth]{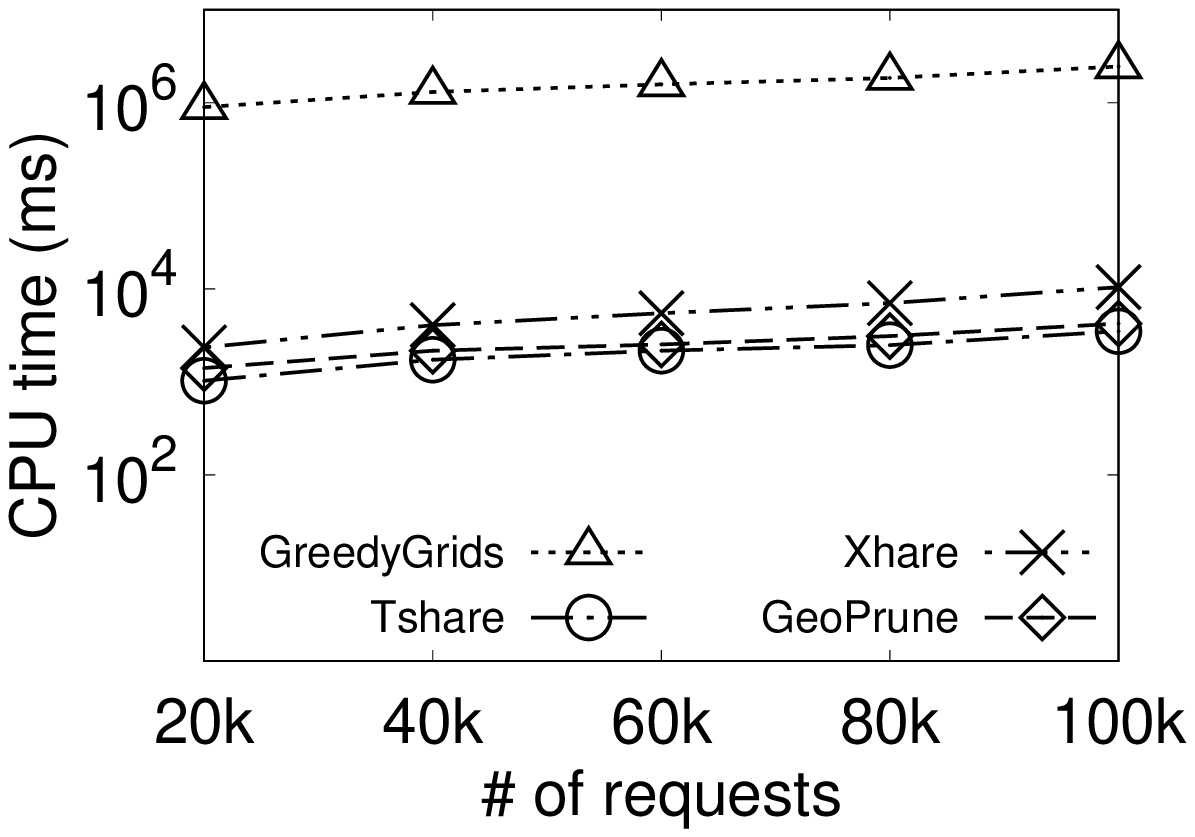}
			\vspace{-6mm}
			\caption{Move update time.}%
			\label{fig:NYVaryNumRequests_moveUpdateTime}
		\end{subfigure}%\hspace{1mm}%
		\caption{The cost breakdown of algorithm steps. }
		\label{fig:Sensitivity analysis}
	\end{figure}

	\vspace{-5pt}
	\section{Related work}
	\label{sec:relatedWork}
	%    {\color{red}(JQ: Most papers were only briefly mentioned. More details are needed.)}
	
	% 	The real-time ride-sharing matching problem is similar to the dynamic \emph{Dial-a-ride problem}  (DARP)~\cite{cordeau2003dial}. 
	% % 	There are two classes of DARP: 
	% %     \emph{static DARP} and \emph{dynamic DARP}.
	% % 	For static DARP, all trip requests are known in advance, while for dynamic case, trip requests are sent gradually over time, which is close to the 
	% % 	real-time ride-sharing matching problem.
	% 	Dynamic DARP algorithms maintain a di{\color{red}t}atch result of previous trip requests~\cite{attanasio2004parallel,colorni2001modeling, horn2002fleet}.
	% 	When a new trip request arrives, they select a vehicle to form an initial match and then swap scheduled requests to achieve optimal results based on the defined optimization goals.
	% %	 {\color{red}JQ: Optimal in what sense?}.
	% 	Dynamic DARP is different from our real-time ride-sharing matching problem as we assume the matching of requests does not change once committed and we focus on improving the algorithm efficiency and scalability while dynamic DARP algorithms are usually evaluated on small dataset.
	
	Dynamic ride-sharing matching has been studied with different optimization goals and constraints.
	%	~\cite{ta2017efficient},\cite{tong2016online},\cite{schreieck2016matching}.
	A common optimization goal is to minimize the total travel cost of vehicles~\cite{ma2013t},\cite{tong2018unified},\cite{thangaraj2017xhare},\cite{huang2014large},\cite{alonso2017demand}.
	% 	,\cite{geisbergerfast}.
	%utility
	A few other studies aim to provide a better service experience to passengers~\cite{cheng2017utility},\cite{Xu2019Efficient}, \cite{duan2016real}.
	%	A few other studies aim to provide a better service experience to users: 
	%	Cheng et al.~\cite{cheng2017utility} consider factors such as common-interest between sharing passengers and maximize the overall satisfaction of passengers.
	%	Zhang et~al.~\cite{zhang2017taxi} maximize the success rate of matching requests so that the passengers have a higher chance to be served.
	%	Xu et~al.~\cite{Xu2019Efficient} minimize the maximum serving time of passengers, i.e., the time from a request is issued  to the passenger is dropped off.
	%	%price of passengers
	%	Duan et~al.~\cite{duan2016real}  personalize ride-sharing services considering users' preferences on the time and price.
	%	%a taxi is shared by at most two groups of passengers
	%	Cao et~al.~\cite{cao2015sharek} and Chen et~al.~\cite{chen2018price} find multiple vehicles for requests with different prices and pickup times. 
	%system-wide price
	There are also studies that maximize the overall profit of the ride-sharing platform~\cite{asghari2016price},\cite{zheng2018order},\cite{zheng2019auction}.
	%	Asghari et~al.~\cite{asghari2016price} and Zheng et~al.~\cite{zheng2019auction} propose auction-based matching frameworks to maximize the overall profit of the ride-sharing platform.
	%	The same optimization goal is studied in~\cite{zheng2018order} with upfront settled price for trip requests.
	%    Drivers bid on the new request and the platform chooses the highest bidder of the request to assign the request with.
	%    Instead of bidding from vehicles, Zheng et~al.~\cite{zheng2019auction} propose another auction-based framework and enable requests to bid when the vehicles are in shortage.
	%road condition
	%	Yeung et.al.~\cite{yeung2016flexible} study the setting where users can transfer and they propose transferring algorithms considering the road condition by compensate
	%	affected passengers.
	A common need in  these problems is to efficiently filter out infeasible vehicles that violate the service constraints, such that computing the optimal vehicles 
	from the remaining ones can be done with lower costs.
	%	Our study aims to address this need.
	
	%	%maximize the flow
	%	Xu et.al.~\cite{Xu2019Efficient} proposes an e
	%	
	%	Commonly used optimization goals include minimizing the total travel cost~ 
	%	minimizing the maximum flow time~\cite{Xu2019Efficient}, 
	%	maximizing the number of matched requests~\cite{tong2018unified},\cite{alonso2017demand},\cite{tong2017flexible}, maximizing passenger  satisfaction~\cite{cheng2017utility},\cite{zhang2017taxi},\cite{duan2016real}, maximizing the profit for the service provider~\cite{asghari2016price},\cite{ zheng2018order}, etc.
	%	Constraints such as road network conditions (e.g., {\color{red}congestion?})~\cite{lee2004taxi},\cite{yeung2016flexible}, price, and travel time preferences of the requests~\cite{chen2018price}, \cite{cao2015sharek} have been considered.
	%%	Dynamic ride-sharing matching under different settings are explored in ~\cite{cao2015sharek}\cite{chen2018price}\cite{yeung2016flexible}\cite{tong2018dynamic}\cite{pelzer2015partition}
	%	A key requirement for these problems is to efficiently filter out infeasible vehicles 
	%	that do not satisfy the services constraints, such that computing the optimal vehicles 
	%	from the remaining ones can be done with lower costs. Our study aims to address 
	%	this requirement.
	
	We next discuss the studies that aim to minimize the total travel cost as we use it to examine our algorithm.
	Huang et~al.~\cite{huang2014large} maintain a \emph{kinetic tree} for each vehicle to record all possible routes instead of a single optimal route.
	GeoPrune can be easily integrated into their setting by computing the detour ellipses of all  possible routes.
	Alonso et~al.~\cite{alonso2017demand} assign requests to vehicles in batches. They first compute the shareability between requests and vehicles and then construct a graph to connect shareable requests and vehicles.
	Their shareability computation requires an exhaustive check on all possibilities, which can be streamlined by  \ouralgorithm.
	The state-of-the-art selection algorithm~\cite{tong2018unified} for minimizing the total vehicle travel time   first filters infeasible vehicles by checking whether inserting the new request to the vehicle schedules is valid based on the Euclidean distance. 
	After filtering, it ranks all remaining vehicles using the increased distance calculated from the Euclidean distance insertions and sequentially checks these remaining vehicles using road network distances.
	Although this algorithm has a pruning step,  \ouralgorithm  can be applied before it to reduce the number of vehicles for individual checking to further improve the efficiency.

	%	\textbf{Pruning algorithms.}
	%	\cite{ma2013t},\cite{thangaraj2017xhare},
	%	\cite{tong2018unified},\cite{huang2014large},\cite{alonso2017demand},
	Next, we discuss existing algorithms for pruning infeasible vehicles -- Tshare~\cite{ma2013t} and Xhare~\cite{thangaraj2017xhare}. 
	Tshare builds an index over the road network by partitioning the space into equi-sized grid cells.
	%	It computes and stores the grid cells that overlap the route of every vehicle. 
	%	When a trip request arrives, Tshare returns vehicles who are scheduled to enter the nearby grid cells of the request within certain time constraints as candidate vehicles. 
	The distance between two objects (e.g., a request and a vehicle) is estimated using the centers of their corresponding cells. Such an estimation may miss feasible vehicles.
	%	 that are close to the boundary of two adjacent cells.
	The geometric objects applied in GeoPrune, in comparison, bound the reachable areas and ensure all feasible vehicles to be returned.
	%	Another limitation of Tshare is its large precomputation cost.
	Moreover, Tshare stores pairwise distances between all grid cells and is not scalable to large networks due to the high memory cost.
	Besides, Tshare maintains the pass-through grid cells of vehicles in real-time, which is costly for highly dynamic scenarios.
	In comparison,  GeoPrune  can quickly bound the reachable areas using ellipses and index these areas using R-trees, which saves computation and update costs.
	%	Xhare~\cite{thangaraj2017xhare} partitions the road network into three levels: grid cells, landmarks, and clusters.
	%	The space is divided into several small grid cells and 
	%	each vertex is associated with its enclosing grid cells.
	%	A set of vertices are selected as landmarks, and a grid cell is associated with its nearest landmark. The landmarks are further clustered. For every vehicle, Xhare estimates its reachable clusters that satisfy the detour constraint. 
	%	%    computes a \emph{reachable area} of every vehicle.
	%	%    {\color{red}(JQ: I did understand how Xhare works by reading these sentences.)}
	%	%    It partitions the road network into three levels: grid cells, landmarks, and clusters.  It then estimates the reachable area of vehicles using the distance between clusters.
	%	When a new request arrives, Xhare locates the clusters to which the requested source and destination belongs. Xhare then  retrieves vehicles reachable to these clusters.
	Xhare partitions the road network into three levels: grid cells, landmarks, and clusters. The reachable areas of vehicles are estimated using the distance between clusters.
	Similar to Tshare, Xhare is an approximate method. The index of Xhare may have a large memory footprint for large networks and high update cost for dynamic scenarios.
	%	Xhare assumes vehicles with pre-defined routes, and new requests can only be served on the way of these routes.
	%	Hence, it 
	%	only considers matches with the insert-insert case but 
	%	may fail to find matches for the insert-append or append-append cases.
	%	Xhare computes and stores reachable clusters by enumerating all  clusters, which is expensive to compute and update. 

	%	{\color{red}(JQ: What is this about?)}
	%	Both algorithms require high update cost as they need to compute all pass-through points of the vehicle routes.  {\color{red}(JQ: What is pass-through points? Why they need to compute? Why upate?)}
	
	%	 {\color{red}(JQ: How do our algorithm compare with them? What is the main difference and advantage? What contributes to this advantage?)}
	%kinetic tree
	
	%why xhare cannot find some of the feasible? because maybe the point is not associated with any landmarks， cannot deal with insert after case
	
	%why Tshare cannot find? because the grid will be prunned if the distance between grids exceed thresheld, but some inside points may actually be possible

	\section{Conclusions}
	%restate the problem
	We studied the dynamic ride-sharing matching problem 
	and proposed an efficient algorithm named \ouralgorithm to prune infeasible vehicles to serve trip requests.
	%\ouralgorithm is applicable to various optimization goals.  
	Our algorithm applies geometric objects to bound the areas that vehicles can visit without violating the service constraints of passengers.
	The applied geometric objects are simple to compute and further indexed using efficient data structures, which makes \ouralgorithm highly efficient and scalable.
	%  \ouralgorithm is based on the observation that, the for a non-empty vehicle that is already servicing some trip request, it can only detour to an area bounded by some ellipses to serve a new  trip request. The bounded ellipses can be computed and updated efficiently. It is also simple to examine if a new trip request is within this ellipse. 
	% Therefore, our \ouralgorithm algorithm can prune the infeasible vehicles with a high efficiency and without false negatives.
	%    {\color{red}(JQ: and without false negatives?)}. 
	Our experiments on real-world datasets confirm the advantages of \ouralgorithm in pruning effectiveness and matching efficiency. 
	% \ouralgorithm reduces the number of remaining vehicles by one order of magnitude compared to state-of-the-arts, which further saves the computation time for the selection step by one order of magnitude.
	% Besides, \ouralgorithm improves the update cost of the matching process by two to three orders of magnitude.
	%Looking forward
	In the future, it is worth exploring the advantages of \ouralgorithm in other ride-sharing settings or to other spatial crowd-sourcing problems.

	% ensure same length columns on last page (might need two sub-sequent latex runs)
	\bibliographystyle{IEEEtran}
	\bibliography{ridesharing_arxiv}

\end{document}